\crefname{hypothesis}{Hypothesis}{Hypotheses}
\def\vect{\mathrm{vec}}
\def\sgn{\mathrm{sgn}}
\def\<{\leqslant}           
\def\>{\geqslant}           
\def\d{\partial}
\def\wh{\widehat}
\def\wt{\widetilde}
\def\Re{\mathrm{Re}}   
\def\Im{\mathrm{Im}}   
\def\mR{\mathbb{R}}    
\def\mC{\mathbb{C}}    
\def\Tr{\mathrm{Tr}}       
\def\rT{\mathrm{T}}        
\def\rF{\mathrm{F}}
\def\diam{\diamond}
\def\bS{\mathbf{S}}
\def\bE{\mathbf{E}}    
\def\bra{\langle}
\def\ket{\rangle}
\def\re{\mathrm{e}}        
\def\rd{\mathrm{d}}        
\def\bJ{\mathbf{J}}
\def\br{\mathbf{r}}
\def\x{\times}
\def\ox{\otimes}
\def\op{\oplus}
\def\fB{\mathfrak{B}}
\def\fF{\mathfrak{F}}
\def\fH{\mathfrak{H}}
\def\bzero{\mathbf{0}}
\def\sb{\mathsf{b}}
\def\sM{\mathsf{M}}
\def\sL{\mathsf{L}}
\def\sGamma{\mathsf{\Gamma}}
\def\sN{\mathsf{N}}
\def\cF{\mathcal{F}}
\def\cD{\mathcal{D}}
\def\sB{\mathsf{B}}
\def\sP{\mathsf{P}}
\def\sA{\mathsf{A}}
\def\cG{\mathcal{G}}
\def\cI{\mathcal{I}}
\def\cA{\mathcal{A}}
\def\cB{\mathcal{B}}
\def\cE{\mathcal{E}}
\def\im{\mathrm{im\,}}
\def\mH{\mathbb{H}}
\def\eps{\epsilon}
\def\Ups{\Upsilon}
\def\ups{\upsilon}
\def\rprod{\mathop{\overrightarrow{\prod}}}
\def\lexp{\mathop{\overleftarrow{\exp}}}
\DeclareMathAlphabet{\mathbfit}{OML}{cmm}{b}{it}
\title{Measuring    Decoherence by Commutation Relations Decay for Quasilinear Quantum Stochastic Systems\thanks{This work is supported by the Australian Research Council  grant DP210101938.}}
\author{
Igor G. Vladimirov$^{\dagger}$, \quad Ian R. Petersen\thanks{School of Engineering, College of Engineering and Computer Science, Australian National University, Canberra, ACT 2601, Australia,
  \email{igor.g.vladimirov@gmail.com, i.r.petersen@gmail.com}.}}
\DeclareMathOperator{\diag}{diag}
\def\blockdiag{\mathop\mathrm{blockdiag}}
\begin{document}

\maketitle

\begin{abstract}
This paper considers a class of open quantum systems with an algebraic structure of dynamic  variables, including the Pauli matrices for finite-level systems as a particular case. The Hamiltonian and the operators of coupling of the system to the external bosonic fields depend linearly on the system variables. The fields are represented by quantum Wiener processes which drive the system dynamics in the form of a quasilinear Hudson-Parthasarathy quantum stochastic differential equation whose drift vector and dispersion matrix are affine and linear functions of the system variables. This quasilinearity leads to a tractable evolution of the two-point  commutator matrix of the system variables (and their multi-point mixed moments in the case of vacuum input fields) involving time-ordered operator exponentials. The resulting exponential decay in the two-point commutation relations is a manifestation of quantum decoherence, caused by the dissipative system-field interaction and making the system lose specific unitary dynamics features which it would have in isolation from the environment.
We quantify the decoherence   in terms of the rate of the commutation relations decay  and apply system theoretic and matrix analytic techniques, such as  algebraic Lyapunov inequalities and spectrum perturbation results, to the study of the asymptotic behaviour of the related Lyapunov exponents in the presence of a small scaling parameter in the system-field coupling.
These findings are illustrated for finite-level quantum systems (and their interconnections  through a direct energy coupling) with multichannel external fields and the Pauli matrices as internal variables.


\end{abstract}

\begin{keywords}
Open quantum system,
algebraic structure,
quasilinear quantum stochastic differential equation,
two-point commutation relation,
exponential decay,
decoherence time,
algebraic Lyapunov inequality,
matrix spectrum perturbation.
\end{keywords}

\begin{AMS}
81S22, 
81S25, 
81S05,      
81P16, 
81R15, 
93B28, 
81Q10,   	
93E15,  	
37L40,      
81Q15, 
81P40,    	
81Q93,  
37H15. 
\end{AMS}

\section{Introduction}

Quantum mechanics, which extends classical (deterministic or stochastic)  dynamics in order to describe physical objects at atomic scales, models the quantum systems and their evolution in terms of linear operators on a Hilbert space. Normally, these operators result from the quantisation of classical real-valued variables (for example, positions, momenta and angular momenta from Hamiltonian mechanics and rigid body dynamics) and are self-adjoint, although non-Hermitian operators (such as the annihilation and creation operators \cite{S_1994}) are also employed. Not only the dynamic  variables, but also the statistical properties of quantum systems are described using operator-valued quantities. The latter include density operators on the underlying  Hilbert space,  which specify quantum states as a quantum probabilistic analogue  \cite{H_2001,M_1995} of scalar-valued  classical probability distributions.

The noncommutativity of quantum dynamic  variables and quantum states, coming from their  operator-valued nature, plays a part in the model of quantum measurement as an interaction of a quantum system with a classical measuring device  which modifies the quantum state in such a way that noncommuting observables cannot be measured simultaneously. As opposed to classical random variables, such quantum observables do not have a classical  joint distribution (factorisable into marginal and conditional probability distributions), and moreover,  the corresponding joint quantum state can acquire nonclassical properties such as nonseparability (or entanglement).
In addition to the interplay  between the noncommutativity and measurements as a specific feature  of quantum mechanics, an isolated quantum system (not interacting with the environment)   is also modelled differently from classical  systems ---  in terms of a unitary similarity transformation group generated by the system Hamiltonian (the energy operator). Due to the unitary (and hence, reversible) structure of the isolated quantum dynamics, the initial conditions of such a system have a nonvanishing influence on its subsequent evolution, which manifests itself in conservation laws, including the preservation of the  Hamiltonian and one-point canonical commutation relations (CCRs).

The above features provide nonclassical resources which are exploited in quantum information processing technologies,  such as quantum error correction \cite{NC_2000}. However, the unitary quantum evolution is never ideally realised in practice, since any quantum system cannot be perfectly isolated from and unavoidably interacts with its surroundings (which include other quantum or classical systems and external fields). As  a result, the internal dynamics, which the quantum system would have in isolation,  are ``diluted'' with the forced  response to the environment whose variables (as operators on different spaces) commute with (and hence, behave like classical ones with respect to) those of the system. Over the course of this interaction, the open system gradually loses its individual quantum features in terms of the commutation structure and statistical properties of the system variables, which is usually referred to as quantum decoherence \cite{BP_2006,GZ_2004}.

The open quantum dynamics, along with concomitant decoherence effects, can be modelled using the Hudson-Parthasarathy calculus \cite{HP_1984,P_1992} of quantum stochastic differential equations (QSDEs) driven by noncommutative Wiener processes on a symmetric Fock space \cite{PS_1972}. Such processes represent the external bosonic fields and lead to more complicated quantum Ito processes associated with the internal variables and output fields of open quantum systems. The drift vector and the dispersion matrix of the QSDE are specified by the energetics of the system and its interaction with the environment, which is   captured in the system Hamiltonian and the operators of coupling of the system to the input fields. The structure of the resulting  QSDE is affected by the dependence of the Hamiltonian and the coupling operators on the system variables and the algebraic properties of the latter, including the commutation relations.

In the context of quantum computing and quantum information, the study of decoherence phenomena is particularly relevant for finite-level (multiqubit) systems \cite[Chapter III, Section 8]{NC_2000} which employ the Pauli matrices \cite{S_1994} (special Hermitian $(2\x2)$-matrices originating from the theory of quantum angular momentum, including the electron spin)   and their higher-dimensional extensions, such as the Gell-Mann matrices (see, for example, \cite{EMPUJ_2016} and references therein). The sets of such quantum variables are algebraically closed in the sense that their products are linear combinations of  the same set of variables and the identity operator, thus allowing any function (for example, a polynomial) of such variables to be reduced to an affine function and also leading to CCRs for the variables.  In the presence of the algebraic structure, the system Hamiltonian and the system-field coupling operators are, without loss of generality,  linear and affine functions of the system variables, which results in QSDEs whose drift vector depends affinely and the dispersion matrix depends linearly on the system variables, as in classical SDEs with state-dependent noise \cite{W_1967}.

Although these quasilinear QSDEs \cite{EMPUJ_2016,VP_2022} (with a bilinear dependence on the system and input field variables) are different from linear QSDEs (with a constant dispersion matrix) for open quantum harmonic oscillators (OQHOs) as principal system models in linear quantum control theory \cite{B_1983,DP_2010,NY_2017,P_2017}, their solutions are expressed through fundamental solutions in the form of time-ordered operator exponentials \cite{H_1996}.  These time-ordered exponentials, which  relate the system variables at different instants and participate in the two-point CCRs between them, are more complicated than the usual matrix exponentials \cite{H_2008} (describing the fundamental solutions of linear ODEs with constant coefficients). Nevertheless, as shown in \cite{VP_2022}, they admit a closed-form quantum averaging at least for the vacuum quantum state \cite{P_1992} of the external fields. This makes the mean values and higher-order multipoint moments of the system variables (including their two-point covariances) effectively computable, with the reducibility  of nonlinear functions of the system variables to affine functions due to the algebraic structure also  playing its role here. The statistical characteristics of the invariant quantum state of the system  (provided the dynamics matrix of the quasilinear QSDE is Hurwitz), captured in  its quasi-characteristic function \cite{CH_1971}, are also amenable to closed-form computation along with  the rate of the exponentially fast convergence to this state at the level of moments. Similarly to \cite{Y_2012,Y_2009}, such convergence is useful for achieving steady-state regimes in quantum systems (subject to external vacuum noise) and benefits from the dissipative nature of the system-field interaction. On the other hand, the enhanced dissipation, which accelerates the invariant state generation, contributes to decoherence effects in comparison with the isolated quantum dynamics.

The present paper is concerned with decoherence measures for the  class of open quantum stochastic systems whose dynamic variables have an algebraic structure (thus leading to a linear Hamiltonian and affine system-field coupling operators) and are  governed by quasilinear QSDEs mentioned above.  Building on the results of \cite{EMPUJ_2016,VP_2022}, we exploit the tractability of moment dynamics, which comes from this quasilinearity, and study the rate of the  exponential decay in the two-point commutator matrix for the system variables.
This is carried out for the case of vacuum input fields in terms of the Lyapunov exponents for the mean values and second moments of the time-ordered exponentials, which, as the fundamental solutions of the QSDEs,   play an important role in the two-point CCRs. Similarly to \cite{VP_2022_decoh}, a decoherence time constant, associated with   the CCR decay rate,  is compared with the periods of oscillatory modes which the system would have in isolation from the external fields, thus quantifying the decoherence caused by the dissipative system-field interaction. We apply system theoretic and matrix analytic techniques, such as  algebraic Lyapunov inequalities and spectrum perturbation results \cite{M_1985}, in order to obtain upper bounds for the decoherence time and related Lyapunov exponents and to investigate their  asymptotic behaviour in the presence of a small scaling parameter in the system-field coupling with a fixed coupling shape. This analysis employs conditions of stability of the system for all sufficiently small values of the coupling strength, which also allows the asymptotic behaviour of the invariant system state to be investigated in the weak-coupling formulation.
These results are illustrated for finite-level quantum systems with multichannel external fields and the Pauli matrices as internal variables. We also apply the findings of the paper   to a decoherence  control setting for an interconnection of two quasilinear quantum systems (interpreted as a plant and a controller) with a direct energy coupling \cite{ZJ_2011a}.

We also mention that there are various scenarios of quantum decoherence, coming, for example, from quantum measurements and thermal noise and also those concerned with different classes of quantum systems (see \cite[Section 4.4.1]{BP_2006} and references therein, including \cite{CL_1985,U_1995}). However, this paper focuses on one of manifestations of decoherence through the two-point CCR decay (which corresponds to vacuum decoherence \cite[Section 4.4.1]{BP_2006}), extending this circle of ideas from OQHOs \cite{VP_2022_decoh}   (with qualitatively different system variables in the form of unbounded position  and momentum operators) to finite-level  quantum  systems.

The paper is organised as follows.
Section~\ref{sec:var} specifies a class of quantum dynamic variables with an algebraic structure and describes their operator theoretic properties (including CCRs and boundedness), exemplified by the Pauli matrices.
Section~\ref{sec:QSS} provides  a quasilinear QSDE  for open quantum systems with such variables, a linear Hamiltonian and affine coupling operators, discusses energy balance relations and convergence to the invariant state and  reviews the representation of the quantum trajectories in terms of time-ordered operator   exponentials.
Section~\ref{sec:decay} studies the averaging of these exponentials in order to obtain  Lyapunov exponents for the decay rate in the two-point CCRs of the system variables.
Section~\ref{sec:iso} discusses the oscillatory modes of the quantum system in isolation from the environment, which  provide reference time scales for comparison with the dissipative open quantum dynamics,  and uses algebraic Lyapunov  inequalities to establish upper bounds for the decoherence time associated with the two-point CCR decay.
Section~\ref{sec:asy} employs matrix spectrum perturbation techniques and develops stability conditions for the system and asymptotic estimates for the decoherence  time in a setting where the system-field coupling involves a small strength parameter along with a given coupling shape.
Section~\ref{sec:inv} studies the asymptotic behaviour of the invariant state of the system in the weak-coupling formulation.
Section~\ref{sec:Pauli} computes in closed-form several quantities, relevant to the decay rates in the open quantum system and oscillatory modes of its isolated version in the case of the Pauli matrices as system variables.
Section~\ref{sec:connect} applies the results of the previous sections  to an interconnection of two open quantum systems which interact with the external fields and are directly coupled to each other.
Section~\ref{sec:conc} provides concluding remarks.

\section{Algebraic structure of quantum variables}
\label{sec:var}


Following \cite{VP_2022}, we consider an  open  quantum stochastic system with $n$ dynamic variables $X_1(t), \ldots, X_n(t)$, organised as time-varying self-adjoint operators on a complex separable Hilbert space $\fH$ with the algebraic structure
\begin{equation}
\label{XXX}
    \Xi_{jk}(t)
    :=
    X_j(t) X_k(t)  = \alpha_{jk}\cI + \sum_{\ell=1}^n\beta_{jk\ell} X_\ell(t),
    \qquad
    j,k=1, \ldots, n,
\end{equation}
which holds at any time $t\> 0$. Here, $\cI$ is the identity operator on $\fH$, so that the right-hand side of (\ref{XXX}) is an affine function of the system variables whose coefficients (the structure constants) comprise a matrix
$\alpha:= (\alpha_{jk})_{1\< j,k\< n} \in \mC^{n\x n}$
  and an  array  $\beta:= (\beta_{jk\ell})_{1\< j,k,\ell\< n} \in \mC^{n\x n\x n}$ with ``sections''
\begin{equation}
\label{betell}
    \beta_\ell
    :=
    \beta_{\bullet \bullet \ell}
    :=
    (\beta_{jk\ell})_{1\< j,k\< n}
    \in
    \mC^{n\x n},
    \qquad
      \ell = 1, \ldots, n.
\end{equation}
The matrices $\alpha, \beta_1, \ldots, \beta_n$ are Hermitian (we denote the subspace of Hermitian matrices of order $n$ by $\mH_n$):
\begin{equation}
\label{herm}
  \alpha^* = \alpha,
  \qquad
  \beta_\ell^* = \beta_\ell,
  \qquad
  \ell = 1, \ldots, n
\end{equation}
(with $(\cdot)^*:= {\overline{(\cdot)}}^\rT$  the complex conjugate transpose), so that their real parts
$\Re \alpha$, $\Re \beta_1, \ldots, \Re \beta_n$ are symmetric, and the imaginary parts $\Im \alpha$, $\Im \beta_1, \ldots, \Im \beta_n$ are  antisymmetric. The relations (\ref{XXX}) can be represented
in terms of a vector\footnote{vectors are organised as columns unless indicated otherwise, and the transpose $(\cdot)^{\rT}$ applies to vectors and matrices of operators as if they consisted of scalars} $X:= (X_k)_{1\< k \< n}$ of the system variables (with the time argument omitted for brevity) as
\begin{equation}
\label{XXX1}
    \Xi:=
    (\Xi_{jk})_{1\< j,k\< n}
    =
    XX^\rT =  \alpha\ox \cI + \beta \cdot X,
\end{equation}
where $\ox$ is the tensor product of operators (in particular, the Kronecker product of matrices) or spaces, so that the matrices $\alpha$ and $\alpha\ox \cI$ can be identified with each other, and
\begin{equation}
\label{betaX}
    \beta \cdot X
    :=
    \sum_{\ell = 1}^n
    \beta_\ell  X_\ell
    =
    \begin{bmatrix}
      \beta_{\bullet 1 \bullet} X
      &
      \ldots &
      \beta_{\bullet n \bullet} X
    \end{bmatrix}
\end{equation}
is an $(n\x n)$-matrix of operators. Here, $\beta_\ell X_\ell := (\beta_{jk\ell}X_\ell)_{1\< j,k\< n} = \beta_\ell \ox X_\ell$, and the columns
$
    \beta_{\bullet k \bullet} X = \big(\sum_{\ell = 1}^n\beta_{jk\ell} X_\ell\big)_{1\< j\< n}
$
involve the appropriate sections     of the array $\beta$:
$$    \beta_{\bullet k \bullet}
    :=
    (\beta_{jk\ell})_{1\< j,\ell \< n} \in \mC^{n\x n},
    \qquad
    k = 1, \ldots, n,
$$
which should not be confused with the sections $\beta_1, \ldots, \beta_n$ from  (\ref{betell}). The latter also give rise to a different product of $\beta$ (and similarly, for any other element of $\mC^{n\x n\x n}$) with a vector $u \in \mC^n$:
\begin{equation}
\label{diam}
    \beta \diam u
    :=
    \begin{bmatrix}
        \beta_1 u & \ldots & \beta_n u
    \end{bmatrix}
    \in \mC^{n\x n}.
\end{equation}
As mentioned in \cite{VP_2022}, for any   $u:=(u_k)_{1\< k\< n},v:=(v_k)_{1\< k\< n} \in \mC^n$,  the products (\ref{betaX}), (\ref{diam}) are related by
\begin{equation}
\label{cdotdiam}
  (\beta \cdot u)v
  =
  (\beta\diam v)u
  =
  \begin{bmatrix}
    \beta_1 & \ldots &  \beta_n
  \end{bmatrix}
  (u\ox v),
\end{equation}
which 
extends to any vectors $u$, $v$ of $n$ quantum variables with zero cross-commutator matrix $[u,v^\rT]:= ([u_j,v_k])_{1\< j,k\< n} = 0$, where $[a,b]:= ab-ba$ is the commutator of linear operators.
In view of the self-adjointness of the system variables $X_1, \ldots, X_n$,
the entries $\Xi_{jk}$ of the matrix $\Xi$ in (\ref{XXX1}), defined as operators  on $\fH$  by the first equality in (\ref{XXX}),   satisfy
\begin{equation}
\label{Xi+}
  \Xi_{jk}^\dagger
  =
  \Xi_{kj},
  \qquad
  j,k = 1, \ldots, n,
\end{equation}
where $(\cdot)^\dagger$ is the operator adjoint. With the latter being extended to matrices of operators as the transpose $(\cdot)^\dagger: = ((\cdot)^\#)^\rT$ of the entrywise operator adjoint $(\cdot)^\#$,  the identities (\ref{Xi+}) can be represented as
\begin{equation}
\label{XiXi}
    \Xi^\dagger = \Xi.
\end{equation}
Accordingly, the conditions (\ref{herm})  (together with the self-adjointness of $X_1, \ldots, X_n$)  secure the consistency
\begin{equation}
\label{XX+}
    \Xi^\#
    =
    \overline{\alpha} + \sum_{\ell = 1}^n \overline{\beta_\ell} X_\ell
    =
    \alpha^\rT  + \sum_{\ell = 1}^n\beta_\ell^\rT  X_\ell = \Xi^\rT
\end{equation}
of the relations (\ref{XXX1}), (\ref{XiXi}).
Moreover (see \cite{VP_2022}), the fulfillment of (\ref{herm}) is necessary for  (\ref{XX+}), provided
\begin{equation}
\label{indep}
  {\rm the\ operators}\ \cI, X_1, \ldots, X_n\ {\rm are\ linearly\ independent}
\end{equation}
in the sense of the implication $c_0 + \sum_{\ell =1}^n c_\ell X_\ell = 0 \Longrightarrow c_0 = \ldots = c_n = 0$ for any $c_0, \ldots, c_n \in \mC$. The condition (\ref{indep}) also plays a role for an additional set of constraints on the structure constants,  which makes (\ref{XXX}) consistent with the associativity of the algebra of linear operators on the Hilbert space $\fH$ in application to the system variables:
\begin{equation}
\label{assoc}
    (X_jX_k)X_\ell = X_j(X_kX_\ell),
    \qquad
    j,k,\ell = 1, \ldots, n,
\end{equation}
with the Jacobi identity
$
    [[X_j,X_k],X_\ell] + [[X_k,X_\ell],X_j] + [[X_\ell,X_j],X_k]  = 0
$ being
a corollary from this   associativity \cite{D_2006}.


\begin{theorem}
\label{th:ass}\cite[Theorem 2.1]{VP_2022}
The following equalities are sufficient and, under the condition (\ref{indep}), necessary for the
relations (\ref{XXX}) (or (\ref{XXX1})) to be consistent with the associativity of the operator multiplication:
\begin{align}
\label{con1}
    \sum_{\ell = 1}^n
    (\alpha_{\ell s}
    \beta_{jk\ell}
    -
    \alpha_{j\ell }
    \beta_{ks \ell }) & = 0,\\
\label{con2}
        \alpha_{jk} \delta_{rs}
        -
        \alpha_{ks} \delta_{rj}
        +
        \sum_{\ell = 1}^n
        (\beta_{jk\ell} \beta_{\ell s r}
        -
        \beta_{ks\ell} \beta_{j\ell r}) & = 0,
        \qquad
        j,k,s,r=1, \ldots, n,
\end{align}
where $\delta_{jk}$ is the Kronecker delta. \hfill$\blacksquare$
\end{theorem}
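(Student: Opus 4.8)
The plan is to verify the claim by direct expansion of the triple product $X_j X_k X_s$ in its two bracketings, reducing each to an affine combination of $\cI, X_1, \ldots, X_n$ by repeated use of the defining relations (\ref{XXX}), and then comparing coefficients. Since operator multiplication on $\fH$ is genuinely associative, the two bracketings in (\ref{assoc}) are equal as operators a priori; the content of the theorem is that the reduction rule (\ref{XXX}) reproduces this equality consistently precisely when the structure constants satisfy (\ref{con1})--(\ref{con2}).

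First I would reduce the left bracketing. Substituting the first equality of (\ref{XXX}) for $X_j X_k$ gives $(X_j X_k) X_s = \alpha_{jk} X_s + \sum_\ell \beta_{jk\ell} X_\ell X_s$, and applying (\ref{XXX}) once more to each $X_\ell X_s$ yields an affine form in which the coefficient of $\cI$ is $\sum_\ell \alpha_{\ell s}\beta_{jk\ell}$ and the coefficient of $X_r$ is $\alpha_{jk}\delta_{rs} + \sum_\ell \beta_{jk\ell}\beta_{\ell s r}$. Symmetrically, the right bracketing $X_j (X_k X_s) = \alpha_{ks} X_j + \sum_\ell \beta_{ks\ell} X_j X_\ell$ reduces to an affine form whose $\cI$-coefficient is $\sum_\ell \alpha_{j\ell}\beta_{ks\ell}$ and whose $X_r$-coefficient is $\alpha_{ks}\delta_{rj} + \sum_\ell \beta_{ks\ell}\beta_{j\ell r}$. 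Subtracting the two affine expressions, the difference is an affine combination of $\cI, X_1, \ldots, X_n$ whose $\cI$-coefficient is exactly the left-hand side of (\ref{con1}) and whose $X_r$-coefficients ($r = 1, \ldots, n$) are exactly the left-hand sides of (\ref{con2}).

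This single computation delivers both implications. For sufficiency, if (\ref{con1})--(\ref{con2}) hold then the two affine reductions coincide term by term, so the reduction rule (\ref{XXX}) assigns the same value to $X_j X_k X_s$ regardless of the association order, which is the asserted consistency. For necessity, the genuine associativity of operators forces the above difference to vanish as an operator; invoking the linear independence hypothesis (\ref{indep}), an affine combination of $\cI, X_1, \ldots, X_n$ that equals zero must have all coefficients equal to zero, which returns precisely (\ref{con1}) and (\ref{con2}).

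There is no deep obstacle here: the argument is a finite, purely algebraic manipulation, and the only place demanding care is the bookkeeping when collecting the coefficient of $X_r$, where the two ``leftover'' linear terms $\alpha_{jk} X_s$ and $\alpha_{ks} X_j$ contribute the Kronecker-delta summands $\alpha_{jk}\delta_{rs}$ and $\alpha_{ks}\delta_{rj}$ in (\ref{con2}), so that the indices $s$ and $j$ must be tracked separately from the double-sum terms. One should also note that the Hermiticity (\ref{herm}) of $\alpha, \beta_1, \ldots, \beta_n$ is not needed for either implication; it enters only through the independent consistency requirement (\ref{XX+}), whereas associativity is captured entirely by (\ref{con1})--(\ref{con2}).
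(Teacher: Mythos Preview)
Your proposal is correct and is exactly the natural direct computation: expand both bracketings of $X_jX_kX_s$ via (\ref{XXX}), collect the $\cI$- and $X_r$-coefficients, and invoke (\ref{indep}) for necessity. The paper does not reproduce a proof of this result (it is quoted as \cite[Theorem~2.1]{VP_2022} and marked with $\blacksquare$), but the argument there proceeds in the same way, so there is nothing to contrast.
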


Repeated application of (\ref{XXX}) reduces the degree three monomials on both sides of (\ref{assoc}) to affine functions of the system variables. By Theorem~\ref{th:ass}, these two functions coincide under the constraints (\ref{con1}), (\ref{con2}) on the structure constants in $\alpha$, $\beta$, which are assumed to be satisfied in what follows. A similar unambiguous  reduction to an affine function  holds for any polynomial of the system variables, including monomials of arbitrary degree $r$:
\begin{equation}
\label{red}
    \rprod_{\ell=1}^r
    X_{j_\ell}^{k_\ell}
  =
  \alpha_{j_1,k_1,\ldots, j_r,k_r}
  +
  \beta_{j_1,k_1,\ldots, j_r,k_r}^\rT
  X,
\end{equation}
where $\rprod_{k=1}^r\zeta_k := \zeta_1 \x \ldots \x \zeta_r$ is the rightward-ordered product of linear operators (the order of multiplication is essential in the noncommutative case). The coefficients $\alpha_{j_1,k_1,\ldots, j_r,k_r}\in \mC$ and $ \beta_{j_1,k_1,\ldots, j_r,k_r} \in \mC^n$, with $j_1,\ldots, j_r = 1, \ldots, n$ and positive integers $k_1,\ldots, k_r$,   can be expressed through the structure constants from (\ref{XXX}).
The ambiguity issue does not arise in reducing quadratic functions of the system variables through a unique application of (\ref{XXX1}):
%
$$    X^\rT R X
     =
    \bra R, \alpha\ket_\rF
    +
    \begin{bmatrix}
      \bra R, \beta_1\ket_\rF &
      \ldots &
      \bra R, \beta_n\ket_\rF
    \end{bmatrix}
    X
$$
for any matrix $R:= (r_{jk})_{1\< j,k\< n} = R^\rT \in \mR^{n\x n}$; see \cite[Eq. (2.10)]{VP_2022} and a similar remark in \cite[paragraph~4 on p.~641]{EMPUJ_2016}. Here, use is made of the Frobenius  inner product    $\bra K, N\ket_\rF:= \Tr(K^* N)$ of real or complex matrices,  with $\|K\|_\rF:= \sqrt{\bra K, K\ket_\rF}$ the Frobenius norm.

%
In addition to the relations (\ref{herm}), (\ref{con1}), (\ref{con2}), another property, associated with the algebraic  structure (\ref{XXX}) of the system variables,  is
the positive semi-definiteness of the operator
\begin{equation}
\label{zaz}
    z^* \alpha z + \sum_{\ell=1}^n (z^* \beta_\ell z) X_\ell
    =
    z^* \Xi z
    = \xi_z^\dagger \xi_z \succcurlyeq 0 ,
\end{equation}
with $\xi_z := z^\rT X $, which holds for any     $z \in \mC^n$ and follows from (\ref{XXX1}).  A related corollary of (\ref{XXX1})  is provided by
$$
    \Tr \alpha +
    \tau^\rT X
    =
    X^\rT X
    =
    \sum_{k=1}^n
    X_k^2
     \succcurlyeq 0 ,
     \qquad
     \tau:= (\tau_\ell)_{1\< \ell \< n},
$$
where $\tau \in \mR^n$ is a vector formed from the traces of the Hermitian sections (\ref{betell}) of the array $\beta$:
\begin{equation}
\label{tau}
      \tau
      :=
      (\tau_k)_{1\< k \< n},
      \qquad
      \tau_k
      :=
      \Tr \beta_k.
\end{equation}
Positive semi-definiteness arguments also play a role in establishing the following bounds on the norms of the system variables $X_1, \ldots, X_n$ as operators on the Hilbert space $\fH$.
%
%
%
%
%

\begin{theorem}
\label{th:bound}\cite[Theorem 2.2]{VP_2022}
Under the conditions (\ref{herm}), (\ref{XXX1}), the induced norms of the system variables satisfy
\begin{equation*}
\label{Xnorm}
  \|X_k\| \< \frac{1}{2} |\tau_k| + \gamma,
  \qquad
  k = 1, \ldots, n,
\end{equation*}
 where $\tau$ is given by (\ref{tau}), and
\begin{equation*}
\label{rad}
  \gamma := \sqrt{\Tr \alpha + \frac{1}{4}|\tau|^2}.
\end{equation*}
\hfill$\blacksquare$
\end{theorem}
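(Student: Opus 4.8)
The plan is to reduce the operator-norm estimate to an elementary optimisation over the numerical range, exploiting the quadratic identity $\sum_{k=1}^n X_k^2 = \Tr\alpha + \tau^\rT X$ recorded just before the theorem, which is itself a direct consequence of (\ref{XXX1}) together with the definition (\ref{tau}) of $\tau$. First I would fix an arbitrary unit vector $\psi \in \fH$ and form the averages $x_k := \bra \psi, X_k\psi\ket$ and $y_k := \bra \psi, X_k^2 \psi\ket = \|X_k\psi\|^2 \> 0$; these are well defined and real because each $X_k$ is self-adjoint. Taking the expectation of the above operator identity in the state $\psi$ gives $\sum_{k=1}^n y_k = \Tr\alpha + \sum_{\ell=1}^n \tau_\ell x_\ell$, where I have used $\bra\psi,\psi\ket = 1$.

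The key inequality is the variance bound $y_k \> x_k^2$, which follows from Cauchy--Schwarz in the form $|\bra \psi, X_k\psi\ket| \< \|X_k\psi\|$. Substituting this into the averaged identity yields $\sum_{k=1}^n x_k^2 \< \Tr\alpha + \sum_{\ell=1}^n \tau_\ell x_\ell$, and completing the square in the real vector $x := (x_k)_{1\< k\< n} \in \mR^n$ turns this into $\big\|x - \tfrac12\tau\big\|^2 \< \Tr\alpha + \tfrac14|\tau|^2 = \gamma^2$. In particular each coordinate satisfies $\big|x_k - \tfrac12\tau_k\big| \< \gamma$, whence $|\bra\psi, X_k\psi\ket| = |x_k| \< \tfrac12|\tau_k| + \gamma$. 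Since this holds for every unit vector $\psi$, and the operator norm of a self-adjoint operator coincides with its numerical radius $\sup_{\|\psi\|=1}|\bra\psi, X_k\psi\ket|$, the claimed bound $\|X_k\| \< \tfrac12|\tau_k| + \gamma$ follows.

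I expect the only genuine subtlety to lie in this last reduction of $\|X_k\|$ to $\sup_{\|\psi\|=1}|\bra\psi, X_k\psi\ket|$: this is the standard identification of norm and numerical radius for self-adjoint operators, but if the $X_k$ are not assumed bounded a priori, the same quadratic-form estimate in fact delivers the boundedness, since it confines the numerical range to the interval $[-\tfrac12|\tau_k|-\gamma,\, \tfrac12|\tau_k|+\gamma]$. A minor point worth recording along the way is that the argument simultaneously certifies $\gamma$ to be real-valued, because $\gamma^2 \> \big\|x - \tfrac12\tau\big\|^2 \> 0$ for any admissible state forces $\Tr\alpha + \tfrac14|\tau|^2 \> 0$; the reality of $\Tr\alpha$ and of the entries $\tau_k = \Tr\beta_k$ is guaranteed by the Hermitian property (\ref{herm}).
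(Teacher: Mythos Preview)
Your argument is correct. The paper does not reproduce a proof of this theorem here; it is quoted from \cite[Theorem~2.2]{VP_2022} and closed with $\blacksquare$, so there is no in-text proof to compare against. Your route---averaging the trace identity $\sum_k X_k^2 = \Tr\alpha + \tau^\rT X$ in a unit state, using the variance inequality $\bra\psi,X_k\psi\ket^2 \< \bra\psi,X_k^2\psi\ket$, and completing the square in the numerical-range vector $x$---is exactly the natural one and matches the spirit of the positive-semidefiniteness considerations the paper highlights around (\ref{zaz}). The concluding identification of operator norm with numerical radius for self-adjoint $X_k$ is standard, and your remark that the argument simultaneously yields boundedness (and the nonnegativity of $\gamma^2$) is apt.
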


Another corollary of the algebraic structure (\ref{XXX}) is concerned with a particular yet important class of quantum variables.

\begin{theorem}
\label{th:alfpos}
If,  in addition to (\ref{herm}), (\ref{XXX1}), the Hilbert space $\fH$ is finite dimensional  and  the system variables are traceless,
\begin{equation}
\label{TrX}
  \Tr X_k = 0,
  \qquad
  k = 1, \ldots, n,
\end{equation}
or $\fH$ is infinite-dimensional and the system variables are trace-class operators,
then the matrix $\alpha$ is positive semi-definite. \hfill$\square$
\end{theorem}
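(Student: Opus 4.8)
The plan is to reduce the claim to the scalar statement that $z^* \alpha z \geq 0$ for every $z \in \mC^n$, which is exactly the assertion $\alpha \succcurlyeq 0$ since $\alpha \in \mH_n$ by (\ref{herm}). The starting point is the operator identity (\ref{zaz}), that is, $z^* \Xi z = (z^*\alpha z)\cI + \sum_{\ell=1}^n (z^*\beta_\ell z) X_\ell = \xi_z^\dagger \xi_z \succcurlyeq 0$, in which $z^*\alpha z$ and each coefficient $z^*\beta_\ell z$ are real in view of the Hermitian property (\ref{herm}) of $\alpha, \beta_1, \ldots, \beta_n$. Thus the positive semi-definite operator $\xi_z^\dagger \xi_z$ is a real-linear combination $(z^*\alpha z)\cI + T$ of the identity and the self-adjoint operator $T := \sum_{\ell=1}^n (z^*\beta_\ell z) X_\ell$, and the task becomes extracting the sign of the coefficient of $\cI$ from the positivity of the whole operator.

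In the finite-dimensional case, I would simply take the trace of both sides. Using $\Tr \cI = \dim \fH$ together with the tracelessness (\ref{TrX}) of the system variables, all contributions of $T$ drop out, giving $\Tr(\xi_z^\dagger\xi_z) = (z^*\alpha z)\dim\fH$. The left-hand side is nonnegative because $\xi_z^\dagger\xi_z \succcurlyeq 0$, and $\dim\fH > 0$, whence $z^*\alpha z \geq 0$. Since $z\in\mC^n$ is arbitrary, this yields $\alpha \succcurlyeq 0$.

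The infinite-dimensional case is the main obstacle, precisely because $\cI$ is no longer trace-class, so the direct trace argument breaks down. Here the trace-class assumption on $X_1,\ldots,X_n$ makes $T$ trace-class, and hence compact and self-adjoint, so that $\xi_z^\dagger\xi_z = (z^*\alpha z)\cI + T$ is a compact perturbation of $(z^*\alpha z)\cI$. My plan is to exploit that a compact perturbation does not alter the essential spectrum: the essential spectrum of $\xi_z^\dagger\xi_z$ equals $\{z^*\alpha z\}$. Since $\xi_z^\dagger\xi_z \succcurlyeq 0$ is bounded (the $X_\ell$ being bounded by Theorem~\ref{th:bound}, or already as trace-class operators), its spectrum, and a fortiori its essential spectrum, lies in $[0,\infty)$, forcing $z^*\alpha z \geq 0$.

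Concretely, I would realise this through a Weyl-type sequence: since $T$ is compact on the infinite-dimensional space $\fH$, there is an orthonormal sequence $(\psi_m)_{m\geq 1}$ with $T\psi_m \to 0$, so that $\bra \psi_m, T\psi_m\ket \to 0$. Evaluating the quadratic form of the positive operator on $\psi_m$ gives $0 \leq \bra \psi_m, \xi_z^\dagger\xi_z \psi_m\ket = z^*\alpha z + \bra\psi_m, T\psi_m\ket$, and letting $m\to\infty$ yields $z^*\alpha z \geq 0$. As before, the arbitrariness of $z$ completes the proof that $\alpha\succcurlyeq 0$. The only point requiring care is the justification that $T$ is trace-class (hence compact) and self-adjoint, which follows from the hypothesis together with the reality of the coefficients $z^*\beta_\ell z$ secured by (\ref{herm}).
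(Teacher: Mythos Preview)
Your proof is correct. The finite-dimensional case is essentially identical to the paper's: both take the trace of the operator inequality $\xi_z^\dagger\xi_z \succcurlyeq 0$ and use (\ref{TrX}) to eliminate the $X_\ell$ contributions, leaving $(z^*\alpha z)\dim\fH \geq 0$.

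In the infinite-dimensional case, you and the paper take genuinely different routes. The paper sums the diagonal matrix elements $\langle\psi_k\mid \xi_z^\dagger\xi_z\mid \psi_k\rangle$ over the first $r$ vectors of an orthonormal basis, obtaining $0 \leq (z^*\alpha z)\,r + \sum_\ell (z^*\beta_\ell z)\sum_{k=1}^r \langle\psi_k\mid X_\ell\mid\psi_k\rangle$; dividing by $r$ and letting $r\to\infty$, the trace-class hypothesis makes the partial sums converge to the finite quantities $\Tr X_\ell$, so the second term vanishes in the limit and $z^*\alpha z \geq 0$ follows. Your Weyl-sequence argument instead uses only that $T=\sum_\ell (z^*\beta_\ell z)X_\ell$ is compact (a consequence of trace-class), exploiting that compact operators send any orthonormal sequence to $0$ in norm, so that a single diagonal matrix element already yields the conclusion in the limit. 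Your route is slightly cleaner and in fact shows that the conclusion holds under the weaker hypothesis that the $X_\ell$ are merely compact; the paper's Ces\`aro-type averaging is more uniform with the finite-dimensional argument and invokes the trace-class assumption directly.
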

\begin{proof}
Let $\psi_1, \ldots, \psi_N$ be an arbitrary orthonormal basis in the Hilbert   space $\fH$,  where $N:= \dim\fH$ is its dimension which can be infinite (in which case, the basis forms an infinite sequence $(\psi_k)_{k\> 1}$ in $\fH$). Then it follows from (\ref{zaz}) that
\begin{align}
\nonumber
    0 & \<
    \sum_{k=1}^r
    \Big\langle
        \psi_k
        \, \Big|\,
        z^* \alpha z  + \sum_{\ell=1}^n (z^* \beta_\ell z) X_\ell
        \, \Big|\,
        \psi_k
    \Big\rangle\\
\label{zaz1}
    & =
    z^* \alpha z r
    +
    \sum_{\ell=1}^n
    z^* \beta_\ell z
    \sum_{k=1}^r
    \bra
        \psi_k
        \mid
        X_\ell
        \mid
        \psi_k
    \ket,
    \qquad
    z \in \mC^n ,
\end{align}
for any positive integer  $r$, satisfying $r\< N$ if $N< +\infty$. Here, use is made of the quantum mechanical bra-ket notation \cite{S_1994}, including the inner product    $\bra \cdot \mid \cdot\ket$ in the Hilbert space $\fH$. Now, if $\fH$ is finite-dimensional, that is, $N <+\infty$, then by letting $r=N$ and using the relation $    \sum_{k=1}^N
    \bra
        \psi_k
        \mid
        X_\ell
        \mid
        \psi_k
    \ket
    =
    \Tr X_\ell
$, it follows from (\ref{zaz1}), (\ref{TrX}) that
$$
    0 \<
    z^* \alpha z N
    +
    \sum_{\ell=1}^n
    z^* \beta_\ell z
    \Tr X_\ell =     z^* \alpha z N,
$$
and hence, $z^* \alpha z \> 0$ for any $z \in \mC^n$, thus establishing $\alpha \succcurlyeq 0$. By a similar reasoning, in the infinite-dimensional case, when  $N=+\infty$, a combination of (\ref{zaz1}) with the assumption on $X_1, \ldots, X_n$ being trace-class operators (not necessarily satisfying (\ref{TrX})) leads to
$$
    z^* \alpha z
    \>
    -\frac{1}{r}
    \sum_{\ell=1}^n
    z^* \beta_\ell z
    \sum_{k=1}^r
    \bra
        \psi_k
        \mid
        X_\ell
        \mid
        \psi_k
    \ket
    \to 0,
    \qquad
    {\rm as}\
    r\to +\infty,
$$
due to $    \lim_{r\to +\infty}\sum_{k=1}^r
    \bra
        \psi_k
        \mid
        X_\ell
        \mid
        \psi_k
    \ket
    =
    \Tr X_\ell
$, and hence, $z^* \alpha z \> 0$ for all $z \in \mC^n$, whereby $\alpha \succcurlyeq 0$.
\end{proof}

The algebraic structure (\ref{XXX}) (or  (\ref{XXX1})) implies the following one-point CCRs for the system variables:
\begin{align}
\nonumber
    [X,X^\rT]
     & =
    ([X_j,X_k])_{1\< j,k\< n}
    =
    (\Xi_{jk}-\Xi_{kj})_{1\< j,k\< n}    \\
\nonumber
    & =
    \Xi - \Xi^\rT
    =
    \Xi- \Xi^\#\\
\label{XCCR1}
    & =
    2i \Im \Xi
    =
    2i (\Im \alpha + (\Im\beta)\cdot X),
\end{align}
where use is also made of the identity $\Xi^\rT = (\Xi^\dagger)^\rT = \Xi^\#$ in view of (\ref{XiXi}) along with the imaginary part $\Im(\cdot)$ extended from complex numbers to matrices of quantum variables as $\Im \zeta : = \frac{1}{2i}(\zeta - \zeta^\#)$.
Similarly to \cite{VP_2022}, it is assumed for what follows that
\begin{equation}
\label{Imalpha0}
    \Im \alpha = 0
\end{equation}
(so that the Hermitian matrix $\alpha$ is real symmetric),
in which case, the CCRs (\ref{XCCR1}) take the form
\begin{equation}
\label{XCCRTheta}
    [X,X^\rT]
    =
    2i \Theta \cdot X
    =
    2i
    \sum_{\ell=1}^n
    \Theta_\ell X_\ell,
\end{equation}
where
\begin{equation}
\label{Theta}
    \Theta : =  (\theta_{jk\ell})_{1\< j,k,\ell \< n}:=  \Im \beta
\end{equation}
is a real $(n\x n\x n)$-array whose sections
\begin{equation}
\label{Thetaell}
    \Theta_\ell
    :=
    \theta_{\bullet\bullet\ell}
    =
    (\theta_{jk\ell})_{1\< j,k\< n}
    =
    \Im \beta_\ell
    \in
    \mR^{n\x n}
\end{equation}
are antisymmetric for all $\ell = 1, \ldots, n$ in view of the Hermitian property of the matrices $\beta_\ell$ (see (\ref{herm})). Under the condition (\ref{Imalpha0}), the matrix $\alpha$ from (\ref{XXX1}) and the CCR array $\Theta$ are related by
\begin{equation}
\label{con1Im0}
    \sum_{\ell = 1}^n
    (\alpha_{\ell s}
    \theta_{jk\ell}
    -
    \alpha_{j\ell }
    \theta_{ks \ell })  = 0,
    \qquad
    j,k,s=1, \ldots, n,
\end{equation}
which is the imaginary part of the equality (\ref{con1}) from Theorem~\ref{th:ass}.
For example,
the algebraic structure (\ref{XXX}) and the conditions (\ref{indep}),  (\ref{Imalpha0}) are satisfied with $n=3$ for the Pauli matrices \cite{S_1994}
\begin{equation}
\label{X123}
    \sigma_1:=
    \begin{bmatrix}
      0 & 1\\
      1 & 0
    \end{bmatrix},
    \qquad
    \sigma_2:=
    \begin{bmatrix}
      0 & -i\\
      i & 0
    \end{bmatrix}
    =
    -i\bJ,
    \qquad
    \sigma_3:=
    \begin{bmatrix}
      1 & 0\\
      0 & -1
    \end{bmatrix}
\end{equation}
(their tracelessness exemplifies applicability of Theorem~\ref{th:alfpos}), where the matrix
\begin{equation}
\label{bJ}
        \bJ
        : =
        {\begin{bmatrix}
        0 & 1\\
        -1 & 0
    \end{bmatrix}}
\end{equation}
spans the one-dimensional subspace of antisymmetric matrices of order 2.
In this case, the structure constants form the identity matrix $\alpha$ of order 3
and an  imaginary array $\beta$:
\begin{equation}
\label{alfbet}
    \alpha = I_3,
    \qquad
    \beta = i \Theta.
\end{equation}
The corresponding array $\Theta \in\{0, \pm1\}^{3\x 3\x 3}$ in (\ref{Theta}) consists of the Levi-Civita symbols  $\theta_{jk\ell} = \eps_{jk\ell}$, with its sections
\begin{equation}
\label{T123}
    \Theta_1 =
    \begin{bmatrix}
     0&      0  &   0\\
     0&     0   &  1\\
     0&    -1   &  0
    \end{bmatrix},
    \qquad
    \Theta_2 =
    \begin{bmatrix}
     0&      0  &   -1\\
     0&     0   &  0\\
     1&    0   &  0
    \end{bmatrix},
    \qquad
    \Theta_3 =
    \begin{bmatrix}
     0&      1  &   0\\
     -1&     0   &  0\\
     0&    0   &  0
    \end{bmatrix}
\end{equation}
in (\ref{Thetaell})
forming a basis in the subspace of antisymmetric $(3\x 3)$-matrices. Accordingly,  $(\Theta\diam u)v$ is the cross product of vectors $u,v\in \mR^3$,  where
\begin{equation}
\label{Thetadiam}
    \Theta \diam u
    =
    \begin{bmatrix}
      \Theta_1 u &
      \Theta_2 u &
    \Theta_3 u
    \end{bmatrix}
    =
    \begin{bmatrix}
      0 & -u_3 & u_2\\
      u_3 & 0 & -u_1\\
      -u_2 & u_1 & 0
    \end{bmatrix},
    \qquad
    u:= (u_k)_{1\< k\< 3} \in \mR^3,
\end{equation}
describes the  infinitesimal generator of rotations in $\mR^3$ with the angular velocity vector $u$
in view of (\ref{diam}), (\ref{T123}).
The quadruple $\{I_2, \sigma_1, \sigma_2, \sigma_3\}$ of the identity matrix $I_2$ of order 2  and  the Pauli matrices (\ref{X123}) is a basis in the four-dimensional real space $\mH_2$ of Hermitian $(2\x 2)$-matrices which describe self-adjoint operators on the Hilbert space $\mC^2$. The latter  is the simplest of the state spaces for finite-level quantum systems such as the electron spin (an intrinsic quantum angular momentum) interacting with an electromagnetic field.

\section{Quasilinear quantum stochastic system}
\label{sec:QSS}

The evolution of the system variables $X_1, \ldots, X_n$  described in Section~\ref{sec:var}  is driven by the internal dynamics of the quantum system being considered and its interaction with an external bosonic field.  In more complicated settings of open quantum dynamics \cite{BP_2006}, the environment can also include other quantum or classical systems.
Unlike the finite-level variables, the quantum mechanical counterparts of classical fields are organised as unbounded operators  on infinite-dimensional Hilbert spaces.  The quantum stochastic calculus \cite{HP_1984,P_1992} models a multichannel input field by a quantum Wiener process $W: = (W_k)_{1\< k\< m}$ consisting of
an even number $m$ of components $W_1(t), \ldots, W_m(t)$ which are time-varying self-adjoint operators on a symmetric Fock space $\fF$. The latter is equipped with an increasing family $(\fF_t)_{t\> 0}$   of subspaces playing the role of a filtration for $\fF$. In contrast to the standard Wiener process \cite{KS_1991} in $\mR^m$ with the identity diffusion matrix $I_m$, its quantum analogue $W$ has different Ito relations for the future-pointing increments:
\begin{equation}
\label{dWdW_Omega_J_bJ}
    \rd W\rd W^{\rT}
    :=
    \Omega \rd t,
    \qquad
    \Omega
    :=
    I_m + iJ,
    \qquad
        J
        :=
        \bJ\ox I_{m/2},
\end{equation}
where $\bJ$  is given by (\ref{bJ}).
Here, the quantum Ito matrix $\Omega$ is  a complex positive semi-definite Hermitian matrix  of order $m$ (we denote the set of such matrices by $\mH_m^+$) with a nonzero imaginary part $J$  (which is  an orthogonal antisymmetric matrix: $J^2 =-I_m$). The property $\Im \Omega = J\ne 0$ is related to the noncommutative nature of the quantum Wiener process   $W$ in view of the two-point CCRs
\begin{equation*}
\label{WWst}
    [W(s), W(t)^{\rT}]
     = 2i\min(s,t)J ,
    \qquad
    s,t\>0,
\end{equation*}
which are similar to those for the quantum mechanical positions and momenta \cite{GZ_2004}  (or the related annihilation and creation operators)  on the Schwartz space. While a Hilbert space $\fH_0$, which accommodates the initial system variables $X_1(0), \ldots, X_n(0)$,  suffices for  the system dynamics in isolation from the environment, the system-field interaction gives rise to the tensor-product space $\fH:= \fH_0 \ox \fF$ for the system variables, endowed with the filtration
\begin{equation}
\label{fHt}
    \fH_t:= \fH_0\ox \fF_t,
    \qquad
    t \> 0.
\end{equation}
The vector $X$
of the system variables evolves in time according to the  Heisenberg picture of quantum dynamics, and this evolution
is governed by a Hudson-Parthasarathy QSDE \cite{HP_1984,P_1992}
\begin{equation}
\label{dX}
    \rd X
    =
    \cG(X)\rd t  - i[X,L^{\rT}]\rd W.
\end{equation}
Here, the
drift vector
$    \cG(X)
$
 results from the entrywise application of
the Gorini-Kossakowski-Sudar\-shan-Lindblad generator    \cite{GKS_1976,L_1976}, which  acts
on a system operator $\xi$ (a function of the system variables $X_1, \ldots, X_n$) as
\begin{equation}
\label{cG}
\cG(\xi)
   := i[H,\xi]
     +
     \cD(\xi)
\end{equation}
and provides a quantum counterpart of the infinitesimal generators of classical Markov processes \cite{KS_1991}, where $\cD$ is the decoherence superoperator given by
\begin{equation}
\label{cD}
    \cD(\xi)
    :=
     \frac{1}{2}
    ([L^\rT,\xi]\Omega L + L^\rT \Omega [\xi,L]).
\end{equation}
The drift $\cG(X)$ and the dispersion $(n\x m)$-matrix $-i [X, L^{\rT}]$ in (\ref{dX}) consist  of self-adjoint operators on $\fH$ and involve the quantum Ito matrix $\Omega$ from (\ref{dWdW_Omega_J_bJ}) along with the system Hamiltonian $H$ and the vector $L:= (L_k)_{1\< k \< m}$ of system-field coupling operators. Both $H$   and $L_1, \ldots, L_m$  are self-adjoint operators on the system-field space  $\fH$,  which capture the energetics of the system and its interaction with the external field (in particular, $H$ represents the internal energy of the system).   These energy operators are functions (for example, polynomials) of the system variables $X_1, \ldots, X_n$.
Regardless of a particular form of the energy operators, the stochastic flow,  generated by the QSDE (\ref{dX}),  preserves the algebraic structure (\ref{XXX}) of the system variables \cite[Eq.~(3.12)]{VP_2022}.
Since, as mentioned in Section~\ref{sec:var}, due to this algebraic structure, polynomial (and  more general) functions of the system variables reduce to affine functions,  we will be concerned, without loss of generality,  with the case of \cite[Theorem 6.1]{EMPUJ_2016} when the Hamiltonian and the coupling operators
are affine functions of the system variables:
\begin{equation}
\label{H_LM}
    H
    =
    E^\rT X,
    \qquad
    L
    =
    MX + N,
\end{equation}
where $E\in \mR^n$, $M \in \mR^{m\x n}$, $N \in \mR^m$ are the energy and coupling parameters.  Adding a term $c\cI$, with an arbitrary constant $c\in \mR$,   to the Hamiltonian $H$ is irrelevant because $H$ enters the generator (\ref{cG}) only through the commutator. Substitution of (\ref{H_LM}) into (\ref{cD}), (\ref{cG}), (\ref{dX}), combined with the algebraic structure (\ref{XXX}) of the system variables,  leads to a quasilinear QSDE whose parameters are computed in the following theorem,  which is similar to \cite[Lemma~4.2 and Theorem~6.1]{EMPUJ_2016} and is formulated for completeness.

\begin{theorem}
\label{th:QSDE} \cite[Theorem 3.1]{VP_2022}
The QSDE (\ref{dX})
for the open quantum system with the Hamiltonian and coupling operators (\ref{H_LM}) and the dynamic variables satisfying (\ref{XXX}) along with (\ref{herm}), (\ref{Imalpha0}), takes the form
\begin{equation}
\label{dX1}
  \rd X   = (AX + b) \rd t + B(X)\rd W. 
\end{equation}
Here, $A \in \mR^{n\x n}$, $b \in \mR^n$ are a matrix and a vector of coefficients, and $B(X)$ is an $(n\x m)$-matrix  of self-adjoint operators, which depend linearly on the system variables:
\begin{align}
\label{A}
    A
    & :=
        2
        \Theta \diam (E + M^\rT JN)
    +
    2
    \sum_{\ell = 1}^n
    \Theta_\ell
    M^\rT
    (
        M\theta_{\ell\bullet \bullet}
        +
        J M\Re \beta_{\ell\bullet \bullet}
    ),\\
\label{b}
    b
    & :=
    2
    \sum_{\ell = 1}^n
    \Theta_\ell
    M^\rT
    JM\alpha_{\bullet \ell},\\
\label{BX}
    B(X)
    & := 2(\Theta \cdot X)M^\rT,
\end{align}
where the array $\Theta$ is given by (\ref{Theta}), and use is made of its product (\ref{betaX}) with $X$ from (\ref{XCCRTheta}) and the product (\ref{diam}) with the vector $E + M^\rT J N \in \mR^n$.
 \hfill$\blacksquare$
\end{theorem}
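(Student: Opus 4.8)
The plan is to substitute the affine energy operators (\ref{H_LM}) into the three constituents of the QSDE (\ref{dX}) --- the dispersion matrix $-i[X,L^\rT]$, the Hamiltonian part $i[H,X]$ of the drift, and the decoherence part $\cD(X)$ of (\ref{cG})--(\ref{cD}) --- and to reduce each of them to an affine function of the system variables by repeated use of the CCRs (\ref{XCCRTheta}) and the algebraic structure (\ref{XXX1}). Because such an affine reduction is unique under the linear independence assumption (\ref{indep}), matching the coefficient of $X$ and the constant term against (\ref{dX1}) will then pin down $A$, $b$ and $B(X)$.

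First I would dispose of the dispersion term, which is immediate. Since $L^\rT = X^\rT M^\rT + N^\rT$ with the real vector $N$ commuting with $X$, bilinearity of the commutator gives $[X,L^\rT]=[X,X^\rT]M^\rT = 2i(\Theta\cdot X)M^\rT$ by (\ref{XCCRTheta}), so $-i[X,L^\rT]=2(\Theta\cdot X)M^\rT=B(X)$ as claimed in (\ref{BX}); this is already linear in $X$ and needs no reduction. For the Hamiltonian contribution, the $j$-th entry of $i[H,X]=i[E^\rT X,X]$ equals $i\sum_s E_s[X_s,X_j]$, which (\ref{XCCRTheta}) turns into a linear function of $X$. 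Collecting the coefficients and using the antisymmetry of the sections $\Theta_\ell$ together with the product (\ref{diam}), this reproduces exactly the $2\,\Theta\diam E$ summand of $A$ in (\ref{A}).

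The decoherence term carries the bulk of the work. Writing $\cD(X_j)=\frac12\big([L^\rT,X_j]\Omega L + L^\rT\Omega[X_j,L]\big)$ and noting that $[X_j,L]=-[L^\rT,X_j]^\rT$, each commutator $[L^\rT,X_j]$ becomes linear in $X$ through (\ref{XCCRTheta}), while $L=MX+N$ is affine; hence the products are quadratic in $X$. I would reduce every quadratic monomial by $XX^\rT=\alpha+\beta\cdot X$ from (\ref{XXX1}), split the Ito matrix as $\Omega=I_m+iJ$ and the structure array as $\beta=\Re\beta+i\Theta$ (recall (\ref{Theta})), and then collect terms. I expect the constant, $\alpha$-dependent contributions to survive only through the antisymmetric $iJ$ coupling, producing the vector $b=2\sum_{\ell}\Theta_\ell M^\rT JM\alpha_{\bullet\ell}$ of (\ref{b}); the linear, $\beta$-dependent contributions should assemble into the $M$-dependent summand $2\sum_\ell\Theta_\ell M^\rT(M\theta_{\ell\bullet\bullet}+JM\Re\beta_{\ell\bullet\bullet})$ of $A$, while pairing the $iJ$ part of $\Omega$ with the constant $N$ in $L$ delivers the remaining $2\,\Theta\diam(M^\rT JN)$ term of (\ref{A}).

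The main obstacle is exactly this final bookkeeping: arranging that the numerous imaginary contributions cancel so that the drift is manifestly real-linear-plus-constant --- hence self-adjoint, in keeping with the self-adjointness of $X_1,\dots,X_n$ --- and that the surviving pieces sort correctly into $A$ and $b$. These cancellations rest on the Hermitian property (\ref{herm}) of $\alpha,\beta_1,\dots,\beta_n$ (equivalently, $\Re\beta_\ell$ symmetric and $\Theta_\ell=\Im\beta_\ell$ antisymmetric) together with the antisymmetry of $J$, so I would carry them out by separating real and imaginary parts at each stage rather than expanding all the products at once.
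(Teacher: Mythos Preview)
The paper does not prove this theorem: it is imported from \cite[Theorem~3.1]{VP_2022} and closed with the terminal $\blacksquare$ immediately after the statement, with no accompanying proof environment (the sentence before the theorem says it ``is formulated for completeness''). So there is no in-paper argument to compare your proposal against.

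That said, your outline is the standard direct computation and is correct. The dispersion matrix and the Hamiltonian contribution are exactly as you write, and your plan for $\cD(X)$ --- inserting $L=MX+N$, using the CCRs (\ref{XCCRTheta}) to render $[L^\rT,X_j]$ linear in $X$, reducing the resulting quadratics via (\ref{XXX1}), and splitting $\Omega=I_m+iJ$ and $\beta=\Re\beta+i\Theta$ to track real and imaginary parts --- is the natural route and will indeed produce (\ref{A})--(\ref{b}). Your identification of the bookkeeping (the cancellation of imaginary pieces by the antisymmetry of $J$ and $\Theta_\ell$ and the symmetry of $\alpha$ and $\Re\beta_\ell$) as the only real obstacle is accurate; nothing conceptual is missing.
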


In the absence of system-field coupling, when $M=0$, $N=0$ (and so $L=0$ in (\ref{H_LM})),  the matrix $A$ in  (\ref{A}) reduces to
\begin{equation}
\label{A0}
  A_0 := 2
        \Theta \diam E,
\end{equation}
while both $b$ and $B(X)$ in (\ref{b}), (\ref{BX}) vanish. Accordingly, with the decoherence superoperator (\ref{cD}) becoming zero and the generator (\ref{cG}) reducing to $\cG = i[H,\cdot]$ in this case,
the QSDEs (\ref{dX}), (\ref{dX1}) lose the diffusion term and take the form of a linear ODE
\begin{equation}
\label{Xdot}
  \dot{X}
  =
  i[H,X]
  =
  A_0X,
\end{equation}
where $\dot{(\ )}:= \d_t(\ )$ is the time derivative. In this isolated dynamics case,
the Hamiltonian $H$ is a conserved operator regardless of its particular dependence on $X$.  This can also be obtained directly from  (\ref{H_LM}), (\ref{Xdot})  as
\begin{equation}
\label{Hdot}
    \dot{H}
    =
    E^\rT \dot{X}
    =
    E^\rT A_0 X
     =
     0,
\end{equation}
since the matrix (\ref{A0}) satisfies
\begin{equation}
\label{EA0}
    E^\rT A_0
    =
     2
    \begin{bmatrix}
        E^\rT\Theta_1 E & \ldots & E^\rT\Theta_n E
    \end{bmatrix}
     =
     0
\end{equation}
in view of (\ref{diam}) and the relations $E^\rT \Theta_\ell E = 0$ for any $E \in \mR^n$, which  follow from the antisymmetry of the sections $\Theta_1, \ldots, \Theta_n$ of the array $\Theta$ in (\ref{Theta}). The property $\im A_0:= A_0 \mR^n \subset E^{\bot}$ (that $A_0$ maps $\mR^n$ into the hyperplane orthogonal to the vector $E$) also follows from (\ref{cdotdiam}) as
$$
    A_0 u = 2(\Theta \diam E) u = 2(\Theta \cdot u) E \subset E^\bot,
    \qquad
    u \in \mR^n,
$$
since the matrix $\Theta \cdot u$ inherits antisymmetry from (\ref{Thetaell}).
In the presence of system-field coupling, the Hamiltonian $H$ is no longer preserved and, in contrast to (\ref{Hdot}),  evolves according to the QSDE
\begin{align}
\nonumber
    \rd H
    &=
    E^\rT \rd X\\
\nonumber
    & =
    E^\rT ((AX + b) \rd t + B(X)\rd W)\\
\label{dH}
    & =
    E^\rT (\wt{A}X + b) \rd t + E^\rT B(X)\rd W,
\end{align}
obtained by using (\ref{dX1}) and the relation (\ref{EA0}) 
along with the remaining part of the matrix (\ref{A}):
\begin{align}
\nonumber
    \wt{A}
    & :=
    A-A_0\\
\label{At}
    & =
        2
        \Theta \diam (M^\rT JN)
    +
    2
    \sum_{\ell = 1}^n
    \Theta_\ell
    M^\rT
    (
        M\theta_{\ell\bullet \bullet}
        +
        J M\Re \beta_{\ell\bullet \bullet}),
\end{align}
which is a quadratic function of the coupling parameters $M$, $N$. The influence of the external field on the internal energy of the system can also be considered in terms of averaged characteristics using the quantum expectation \cite{H_2001}
\begin{equation*}
\label{bE}
    \bE \zeta := \Tr(\rho\zeta)
\end{equation*}
over a density operator $\rho = \rho^\dagger \succcurlyeq 0$ of unit trace $\Tr \rho =1$  on the space $\fH$, which specifies the system-field quantum state and hence, the statistical properties of quantum variables on $\fH$. In what follows, the system-field density operator is assumed to be the tensor product
\begin{equation}
\label{rho}
    \rho := \rho_0\ox \ups
\end{equation}
of the initial system state $\rho_0$ on $\fH_0$ and the vacuum  state $\ups$ for the input field on the Fock space $\fF$. The latter is specified by the quasi-characteristic functional  \cite{CH_1971,HP_1984,P_1992} of the quantum Wiener process $W$ as
\begin{equation}
\label{vac}
    \bE \re^{i\int_0^t u(s)^\rT\rd W(s) } = \re^{-\frac{1}{2}\int_0^t|u(s)|^2\rd s},
    \qquad
    t\> 0,
\end{equation}
for locally square integrable functions $u: \mR_+\to \mR^m$. In the case of vacuum input fields, the diffusion term $B(X)\rd W$ is a martingale part of the QSDE (\ref{dX1}) and does not contribute to the averaging of (\ref{dH}), which yields
\begin{equation}
\label{EHdot}
    (\bE H)^{^\centerdot}
    =
    E^\rT \dot{\mu}
    =
    E^\rT (\wt{A}\mu + b).
\end{equation}
Here,
\begin{equation}
\label{mu}
      \mu
      :=
      \bE X \in \mR^n
\end{equation}
is the mean vector for the system variables,  whose evolution is governed by the ODE
\begin{equation}
\label{EXdot}
    \dot{\mu }= A\mu + b,
\end{equation}
obtained by averaging the QSDE (\ref{dX1}). The quadratic dependence of $\wt{A}$  in (\ref{At}) and $b$ in (\ref{b}) on the coupling parameters $M$, $N$ is inherited by the quantity $(\bE H)^{^\centerdot}$ in (\ref{EHdot}),  which describes the averaged rate of work of the external field on the system.
 In the steady-state regime, when the matrix $A$ in (\ref{A}) is Hurwitz and the mean vector $\mu$ in (\ref{mu}) is the unique equilibrium solution
 \begin{equation}
\label{mu*}
  \mu_* := \lim_{t\to +\infty}\mu(t) = -A^{-1} b
\end{equation}
of the ODE (\ref{EXdot}), the relation (\ref{EHdot}) implies that $(\bE H)^{^\centerdot} = 0$, and hence, the Hamiltonian $H$, despite not being preserved, has a constant mean value $\bE H$. The steady-state mean vector $\mu_*$ in (\ref{mu*}) specifies the equilibrium mixed moments of the system variables of higher orders:
\begin{equation}
\label{mured}
    \lim_{t\to +\infty}
    \bE
    \rprod_{\ell=1}^r
    X_{j_\ell}(t)^{k_\ell}
  =
  \alpha_{j_1,k_1,\ldots, j_r,k_r}
  +
  \beta_{j_1,k_1,\ldots, j_r,k_r}^\rT
  \mu_*
\end{equation}
in view of the reduction (\ref{red}) (with the latter being a corollary  of the algebraic structure (\ref{XXX}) regardless of the system dynamics).  The statistical properties of the system variables in the invariant quantum state, including (\ref{mured}), are captured in the quasi-characteristic function (QCF):
\begin{equation}
\label{QCF*}
    \lim_{t\to +\infty}
    \bE \re^{iu^\rT X(t)}
    =
        \begin{bmatrix}
        1 & \bzero_n^\rT
    \end{bmatrix}
    \exp\left(
    i
    \begin{bmatrix}
      0 & u^\rT \\
      \alpha u &  \beta \diam u
    \end{bmatrix}
    \right)
    \begin{bmatrix}
      1\\
      \mu_*
    \end{bmatrix},
    \qquad
    u \in \mR^n,
\end{equation}
computed in  \cite[Eq. (4.10)]{VP_2022} using \cite[Theorem 4.1]{VP_2022}, where $\bzero_n$ is the column-vector of $n$ zeros. In view of the ODE (\ref{EXdot}), the convergence of the moments to their equilibrium values in  (\ref{mu*})--(\ref{QCF*}) holds at an exponential rate depending on the quantity
\begin{equation}
\label{stab}
    \sigma(A)
    :=
    \ln \br(\re^A) = \max_{1\< k \< n}\Re \lambda_k < 0,
\end{equation}
where $\br(\cdot)$ is the spectral radius, and $\lambda_1, \ldots, \lambda_n$ are the eigenvalues of the matrix  $A$ (which is assumed to be Hurwitz here). The more negative $\sigma(A)$ is, the faster the convergence. This exponentially fast convergence can be used for the generation of invariant quantum states with given moments through a sufficiently long run of the system, similar to \cite{Y_2012,Y_2009}. This  state generation procedure exploits the dissipative nature of the system-field interaction. On the other hand, in comparison with the isolated dynamics, such interaction is accompanied by a ``loss of quantumness'' over time, which is  interpreted as decoherence.

While the relations (\ref{mu*})--(\ref{QCF*}) are concerned with one-point moments of the system variables,
the quasilinearity of the QSDE (\ref{dX1}) also leads to tractable dynamics of multipoint  mixed moments of the system variables (at different times) as discussed in  \cite[Section 4]{VP_2022}. 
This tractability is similar to yet different from the case of  linear QSDEs for OQHOs with Gaussian states \cite{NY_2017,KRP_2010,P_2017} and is closely related to the existence of fundamental solutions  of (\ref{dX1}) in the form of time-ordered operator exponentials  \cite{H_1996}, which is described below and holds regardless of a particular quantum state.


\begin{theorem}
\label{th:var} \cite[Theorem 3.2]{VP_2022}
Under the conditions of Theorem~\ref{th:QSDE}, the system variables governed by (\ref{dX1})   satisfy
\begin{equation}
\label{Xts}
    X(t) = \cE(t,s)X(s) + \int_s^t \cE(t,\tau)\rd \tau b,
    \qquad
    t \> s \> 0,
\end{equation}
where
\begin{equation}
\label{Ets}
  \cE(t,s)
  :=
  (\cE_{jk}(t,s))_{1\< j,k\< n}
  :=
    \lexp
  \int_s^t
  (A\rd \tau + 2\Theta \diam (M^\rT \rd W(\tau)))
\end{equation}
is a leftwards time-ordered operator exponential, and use is made of (\ref{diam}), (\ref{Theta}),  (\ref{A}), (\ref{b}). \hfill$\blacksquare$
\end{theorem}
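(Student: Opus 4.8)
The plan is to reduce the quasilinear QSDE (\ref{dX1}) to a \emph{left-linear} QSDE whose fundamental solution is exactly the time-ordered exponential (\ref{Ets}), and then to recover (\ref{Xts}) by a variation-of-constants argument verified with the quantum Ito product rule.

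First I would recast the diffusion term. By (\ref{BX}) the martingale part of (\ref{dX1}) is $B(X)\rd W = 2(\Theta\cdot X)M^\rT\rd W$. Regarding $M^\rT\rd W$ as a vector of $n$ quantum variables (acting on the Fock space), I observe that $X(t)$ is adapted to the filtration (\ref{fHt}), that is, acts on $\fH_t = \fH_0\ox\fF_t$, whereas the future-pointing increment $\rd W(t)$ acts on the increment of the Fock space beyond $\fF_t$; hence the cross-commutator matrix $[X,(M^\rT\rd W)^\rT]$ vanishes. The identity (\ref{cdotdiam}), which extends to quantum vectors with zero cross-commutator, then gives $(\Theta\cdot X)(M^\rT\rd W) = (\Theta\diam(M^\rT\rd W))X$, so that (\ref{dX1}) takes the left-linear form
\[
    \rd X = \big(A\rd t + 2\Theta\diam(M^\rT\rd W)\big)X + b\,\rd t.
\]

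Next I would introduce $\cE(t,s)$ as the unique solution of the homogeneous left-linear QSDE $\rd_t\cE(t,s) = (A\rd t + 2\Theta\diam(M^\rT\rd W))\cE(t,s)$ with $\cE(s,s) = I_n$, and identify it with the leftward time-ordered operator exponential (\ref{Ets}), the leftward ordering matching the left action of the coefficient on $X$. From the defining relation and adaptedness of the increments one reads off the boundary value $\cE(t,t) = I_n$ (and, if desired, the cocycle property $\cE(t,\tau)\cE(\tau,s) = \cE(t,s)$ for $t\>\tau\>s\>0$). I would then verify (\ref{Xts}) directly: since $X(s)$ carries no differential in $t$, one has $\rd_t(\cE(t,s)X(s)) = (\rd_t\cE(t,s))X(s)$, while the Duhamel integral contributes $\cE(t,t)b\,\rd t + \big(\int_s^t\rd_t\cE(t,\tau)\,\rd\tau\big)b$ by the Leibniz rule. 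Substituting the QSDE satisfied by $\cE$ and collecting terms reproduces $(A\rd t + 2\Theta\diam(M^\rT\rd W))X(t) + b\,\rd t$, so the right-hand side of (\ref{Xts}) solves the left-linear QSDE with value $X(s)$ at time $s$; uniqueness then yields the claim.

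The hard part will be the underlying quantum stochastic calculus in the last two steps: making precise the leftward time-ordered exponential in the (possibly unbounded) operator setting and tracking the Ito correction terms produced by the nonzero quadratic variation $\rd W\rd W^\rT = \Omega\rd t$ of the quantum Wiener process. In particular, when verifying the QSDE for $\cE$ and when combining terms in the variation-of-constants step, the second-order contribution of $2\Theta\diam(M^\rT\rd W)$ and its interaction with the adapted factor $X$ must be handled consistently, using that adapted operators factor through products with future increments, and the interchange of $\rd_t$ with the Duhamel integral must be justified.
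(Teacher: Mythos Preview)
Your proposal is correct and matches the approach implicit in the paper. The paper does not give its own proof of this theorem (it is imported from \cite[Theorem~3.2]{VP_2022}), but the structural remarks it records afterwards---the commutativity $[\cE_k(t,s),X(s)^\rT]=0$, $[\cE_k(t,s),\rd W(t)^\rT]=0$ in (\ref{EXcomm}) and the homogeneous QSDE $\rd_t\cE(t,s)=A\cE(t,s)\rd t+B(\cE(t,s))\rd W(t)$ in (\ref{dEts})---are precisely the ingredients of your argument, since $B(\cE)\rd W = 2(\Theta\cdot\cE)M^\rT\rd W = 2(\Theta\diam(M^\rT\rd W))\cE$ by the same use of (\ref{cdotdiam}) that you make; your concern about extra It\^o corrections in the verification step is in fact void, because in each product one factor is adapted and the other is either constant in $t$ or a future-pointing increment, so the cross term $(\rd P)(\rd Q)$ vanishes.
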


In contrast to the exponentials of real or complex  matrices (describing the fundamental solutions of linear ODEs with constant coefficients), $\cE(t,s)$ in (\ref{Ets})  is an $(n\x n)$-matrix of self-adjoint operators on the orthogonal complement $\cF_s^\bot \bigcap \cF_t$ of the Fock subspace $\cF_s$ to $\cF_t$ for any $t\> s \> 0$. This is a corollary of the continuous tensor-product structure of the Fock space \cite{PS_1972}. In particular, the entries of $\cE(t,s)$ commute with the past system variables in $X(s)$ and future increments $\rd W(t)$ of the quantum Wiener process:
\begin{equation}
\label{EXcomm}
  [\cE_k(t,s), X(s)^\rT] = 0,
  \quad
  [\cE_k(t,s), \rd W(t)^\rT] = 0,
  \quad
  t\> s\> 0,
  \
  k = 1, \ldots, n,
\end{equation}
where $\cE_k(t,s):= (\cE_{jk}(t,s))_{1\< j \< n}$ is the $k$th column of the matrix $\cE(t,s)$, which,  in view of (\ref{Xts}),  pertains to the response of $X(t)$ to $X_k(s)$.
 The exponential $\cE(t,s)$ in (\ref{Ets}) is the fundamental solution for the  homogeneous version of the QSDE (\ref{dX1}),   with $b$ removed,  satisfying the initial value problem
\begin{equation}
\label{dEts}
    \rd_t \cE(t,s)
    =
    A\cE(t,s)\rd t + B(\cE(t,s))\rd W(t),
    \qquad
    t \> s \> 0,
    \qquad
    \cE(s,s) := I_n
\end{equation}
(see the proof of \cite[Theorem 3.6]{VP_2022}).
Here, the map $B$, defined in (\ref{BX}) on $n$-dimensional column-vectors, is extended in $B(\cE(t,s))\rd W$  by acting on the  columns of $\cE(t,s)$  as
\begin{equation}
\label{BEW}
    B(\cE(t,s))\rd W :=
    \begin{bmatrix}
    B(\cE_1(t,s))\rd W
    &
    \ldots&
    B(\cE_n(t,s))\rd W
    \end{bmatrix}.
\end{equation}
An equivalent  column-wise form of (\ref{dEts}) is organised as a set of $n$ identical QSDEs with different initial conditions and driven by the common quantum Wiener process $W$:
\begin{align}
\nonumber
    \rd_t \cE_k(t,s)
    = &
    A\cE_k(t,s)\rd t
    +
    B(\cE_k(t,s))\rd W(t),\\
\label{dEtsk}
    & t \> s \> 0,
    \qquad
    \cE_k(s,s)
    :=
    \delta_k,
    \qquad
    k = 1, \ldots, n,
\end{align}
where $\delta_k:= (\delta_{jk})_{1\< j \< n}$ is the $k$th standard basis vector in $\mR^n$.
Since the right-hand side of the QSDE in (\ref{dEts}) depends linearly on $\cE(t,s)$, then (similarly to classical linear SDEs with state-dependent noise \cite{W_1967}) the first, second and higher-order moments \cite{VP_2022} of its solution $\cE(t,s)$ have tractable dynamics in the case of vacuum fields specified by (\ref{rho}), (\ref{vac}). This tractability includes the possibility to compute relevant Lyapunov exponents. 

\section{Exponential decay in two-point CCRs}\label{sec:decay}

As established in \cite{VP_2022}, the relation (\ref{Xts})  between the system variables at different moments of time, combined with the properties (\ref{EXcomm}) of the exponential $\cE(t,s)$ from (\ref{Ets}),  lead to the two-point CCRs
\begin{equation}
\label{XXcommts}
    [X(t), X(s)^\rT]
%
    =
    2i \cE(t,s) (\Theta \cdot X(s)),
    \qquad
    t \> s\> 0,
\end{equation}
which extend their one-point counterpart (\ref{XCCRTheta}). These  CCRs are a consequence of the commutation structure of the system variables and  the external fields regardless of a particular quantum state or the Hurwitz  property of $A$. However, the latter and the presence of the operator exponential $\cE(t,s)$ suggest an exponentially fast  decay in the two-point CCRs (\ref{XXcommts}) as $t-s\to +\infty$, so that the system variables $X_k(t)$ at a distant instant $t$ become ``asymptotically commuting'' with (and hence, classical with respect to) $X_j(s)$. Such a decay in the two-point CCRs can be regarded as a manifestation of quantum decoherence in the system caused by its dissipative coupling to the environment.

The rate of decay in (\ref{XXcommts}) can be quantified in terms of moments over the system-field state (\ref{rho}), which corresponds to vacuum input fields. Similarly to the proof of \cite[Theorem 4.2]{VP_2022}, the structure of the time-ordered operator exponential $\cE(t,s)$ in (\ref{Ets}) and the continuous tensor-product structure of the Fock space $\fF$ and the vacuum state $\ups$ imply that
\begin{equation}
\label{EEE1}
    \bE (\cE(t,s)\eta) = \bE \cE(t,s) \bE \eta,
    \qquad
    t\> s\> 0,
\end{equation}
for any quantum variable $\eta$ adapted to the system-field subspace $\fH_s$ from (\ref{fHt}). Indeed, any such $\eta$ commutes with and is statistically independent of the entries of $\cE(t,s)$. The averaging of the QSDE (\ref{dEts}) (with the martingale part $B(\cE(t,s))\rd W(t)$ in (\ref{BEW}) not contributing to the average) leads to the ODE $\d_t \bE \cE(t,s) = A \bE \cE(t,s)$ and hence,
\begin{equation}
\label{EEE2}
    \bE \cE(t,s) = \re^{(t-s)A},
    \qquad
    t \> s\> 0,
\end{equation}
in view of the initial condition in (\ref{dEts}). In particular, by applying (\ref{EEE1}) to the system variables $\eta:= X_k(s)$ for all $k =1, \ldots, n$ (in view of (\ref{EXcomm}))  in combination with (\ref{EEE2}), it follows from (\ref{XXcommts}) that
\begin{align}
\nonumber
    \bE [X(t), X(s)^\rT]
%
    & =
    2i \bE \cE(t,s) (\Theta \cdot \bE X(s))\\
\label{EXXcommts}
    & =
    2i \re^{(t-s)A}(\Theta \cdot \mu(s)),
    \qquad
    t \> s\> 0,
\end{align}
where use is also made of the mean vector (\ref{mu}). Therefore, the Hurwitz property of the matrix $A$ implies not only an exponentially fast convergence to the invariant  quantum state  in (\ref{mu*})--(\ref{QCF*}) with the leading Lyapunov exponent (\ref{stab}), but also an exponential decay in the two-point CCRs (\ref{XXcommts}) in the sense of the mean values (\ref{EXXcommts}):
\begin{equation}
\label{Lyapexp0}
    \limsup_{t\to +\infty}
    \Big(
    \frac{1}{t-s}
    \ln \|\bE [X(t), X(s)^\rT]\|_\rF
    \Big)
    \<
    \sigma(A) <0,
    \qquad
    s\> 0,
\end{equation}
where any other matrix norm can also be used instead of the Frobenius norm $\|\cdot \|_\rF$ without  affecting the limit.

However, quantification of decoherence as the two-point CCR decay in terms of the second moments of $\cE(t,s)$ can lead to different rates in comparison with (\ref{stab}). This discrepancy comes from the fact that the algebraic structure (\ref{XXX}) of the system variables does not carry over to the entries of $\cE(t,s)$ and so the second and higher-order moments of $\cE_{jk}(t,s)$ do not reduce to their mean values in (\ref{EEE2}). In order to study the second-order moment dynamics,
we will apply the vectorisation $\vec{(\cdot)}$ (or, interchangeably,  $\vect(\cdot)$) of matrices \cite{M_1988,SIG_1998}  to $\cE(t,s)$ by stacking its columns in a vector
\begin{equation}
\label{Evec}
    \vec{\cE}(t,s)
    :=
    \begin{bmatrix}
      \cE_1(t,s)\\
      \vdots\\
      \cE_n(t,s)
    \end{bmatrix}
\end{equation}
consisting of $n^2$ quantum variables. As a function of $t$,  the quantum process $\vec{\cE}(t,s)$ satisfies
\begin{equation}
\label{dEtsvec}
    \rd_t \vec{\cE}(t,s)
    =
    \cA
    \vec{\cE}(t,s)
    \rd t +
    \cB(\vec{\cE}(t,s)) \rd W(t),
    \qquad
    t \> s \> 0,
    \qquad
    \vec{\cE}(s,s) := \vec{I}_n,
\end{equation}
which is a vectorised representation of the initial value problem (\ref{dEts}) using (\ref{BEW}) along with
\begin{equation*}
\label{cAB1}
    \cA
    :=
    I_n \ox A,
    \qquad
    \cB(\vec{\cE}(t,s))
    :=
    \begin{bmatrix}
      B(\cE_1(t,s))\\
      \vdots\\
      B(\cE_n(t,s))
    \end{bmatrix}.
\end{equation*}
The vectorisation (\ref{Evec}) allows all the second-order  moments for the entries of $\cE(t,s)$ to be captured in
\begin{equation}
\label{Pts}
  P(t,s)
  :=
  \bE (\vec{\cE}(t,s)\vec{\cE}(t,s)^\rT)
  =
(P_{jk}(t,s))_{1\< j,k\< n},
\end{equation}
which is a complex positive semi-definite Hermitian matrix of order $n^2$ with the $(n\x n)$-blocks
\begin{equation}
\label{Pjk}
    P_{jk}(t,s)
    :=
  \bE (\cE_j(t,s)\cE_k(t,s)^\rT)
  =
  P_{kj}(t,s)^*.
\end{equation}
The nonnegative quantity
\begin{equation}
\label{Ptrace}
    \Tr P(t,s)
    =
    \sum_{k= 1}^n
    \Tr P_{kk}(t,s)
    =
    \sum_{j,k= 1}^n
    \bE (\cE_{jk}(t,s)^2)
\end{equation}
can be used for mean square bounds on the CCR decay.  Moreover, since the matrix (\ref{Pts}) satisfies  $P(t,s) = P(t,s)^*\succcurlyeq 0$, then its off-diagonal blocks admit an upper bound in terms of (\ref{Ptrace}):
\begin{align}
\nonumber
    \sum_{1\< j\ne k \< n}
    \|P_{jk}\|_\rF^2
    & =
    2
    \sum_{1\< j< k \< n}
    \|P_{jk}\|_\rF^2\\
\nonumber
    & =
    \|P\|_\rF^2
    -
    \sum_{k=1}^n
    \|P_{kk}\|_\rF^2\\
\nonumber
    & \<
    (\Tr P)^2
    -
    \frac{1}{n}
    \sum_{k=1}^n
    (\Tr P_{kk})^2
    \<
    \Big(
        1 - \frac{1}{n^2}
    \Big)
    (\Tr P)^2    ,
\end{align}
where the time arguments are omitted for brevity, and the inequalities $\frac{1}{r}(\sum_{k=1}^r \gamma_k) ^2\< \sum_{k=1}^r \gamma_k^2 \< (\sum_{k=1}^r \gamma_k) ^2$  for any $\gamma_1, \ldots, \gamma_r \in \mR_+$ and $r\> 1$  are applied to the eigenvalues of the matrix $P$ and its diagonal blocks $P_{kk}$.
The following theorem, which  describes the dynamics of the second moments in (\ref{Pts}), (\ref{Pjk}), employs linear operators
$\Lambda$, $\Phi$,   $U: \mC^{n\x n}\mapsto \mC^{n\x n}$ defined by
\begin{align}
\label{Lambda}
  \Lambda
  & :=
  \Phi + U,\\
\label{Phi}
    \Phi(Z)
    & := A Z + ZA^\rT,\\
\label{V}
    U(Z)
    & :=
    -4
    \sum_{j,k=1}^n
    z_{jk}
    \Theta_j
    M^\rT\Omega  M
    \Theta_k,
    \qquad
    Z:= (z_{jk})_{1\< j,k\< n}\in \mC^{n\x n},
\end{align}
using the sections (\ref{Thetaell}) of the CCR array $\Theta$, the coupling matrix $M$ from (\ref{H_LM}) and the quantum Ito matrix $\Omega$ in (\ref{dWdW_Omega_J_bJ}). These   operators map the subspace of Hermitian matrices of order $n$ into itself: \begin{equation}
\label{invar}
    \Lambda(\mH_n), \Phi(\mH_n), U(\mH_n)\subset \mH_n.
\end{equation}
Furthermore,  the operator $U$ is positive (with respect to the cone $\mH_n^+$) in the sense of the inclusion
\begin{equation}
\label{Vpos}
    U(\mH_n^+) \subset \mH_n^+.
\end{equation}
This property follows from
\begin{align}
\nonumber
    U(zz^*)
    & =
    -4
    \sum_{j,k=1}^n
    z_j\overline{z_k}
    \Theta_j
    M^\rT\Omega  M
    \Theta_k\\
\label{Vzzpos}
    & =
    B(z)\Omega B(z)^*
    \succcurlyeq 0,
    \qquad
    z:= (z_k)_{1\< k\< n}\in \mC^n,
\end{align}
where the map $B$ from (\ref{BX}) is evaluated  at $z$, and the properties $\Omega = \Omega^*\succcurlyeq 0$ of the quantum Ito matrix are used. Indeed, since any matrix $Z \in \mH_n^+$ can be represented as  a linear combination $Z = \sum_{k=1}^n \gamma_k w_kw_k^*$ of rank one matrices $w_kw_k^*$, with $w_k \in \mC^n$ and nonnegative coefficients $\gamma_k \in \mR_+$, then $U(Z) = \sum_{k=1}^n \gamma_k U(w_kw_k^*) \succcurlyeq 0$ due to the linearity of $U$ and (\ref{Vzzpos}). The positivity (\ref{Vpos}) can also be established by representing (\ref{V}) in terms of an auxiliary matrix
\begin{equation}
\label{Delta}
    \Delta
    :=
    \begin{bmatrix}
      \Theta_1 & \ldots & \Theta_n
    \end{bmatrix}
    \in \mR^{n \x n^2}
\end{equation}
as
$$
    U(Z)
    =
    4
    \Delta
    (Z\ox (M^\rT \Omega M))
    \Delta^\rT,
    \qquad
    Z \in \mC^{n\x n},
$$
and using the fact that the Kronecker product of positive semi-definite Hermitian matrices is again such a matrix.

\begin{theorem}
\label{th:Pjkdot}
Suppose the conditions of Theorem~\ref{th:QSDE} are satisfied, and the input field $W$ is in the vacuum state as specified by (\ref{rho}), (\ref{vac}). Then the second moment matrix $P$ in (\ref{Pts}) for the columns of the operator exponential $\cE(t,s)$ in (\ref{Ets}) depends only on the time difference,
\begin{equation}
\label{PPi}
    P(t,s) = \Pi(t-s),
  \qquad
  t \> s \> 0,
\end{equation}
where $\Pi:=(\Pi_{jk})_{1\< j,k\< n}$ is a matrix-valued function of time whose blocks $\Pi_{jk}: \mR_+\to \mC^{n\x n}$ satisfy the initial value problems
\begin{equation}
\label{Pijkdot}
    \dot{\Pi}_{jk}
    =
    \Lambda(\Pi_{jk}),
    \qquad
    \Pi_{jk}(0) = \delta_j\delta_k^\rT,
    \qquad
    j,k=1, \ldots, n,
\end{equation}
where the operator $\Lambda$ is given by (\ref{Lambda}).
\hfill$\square$
\end{theorem}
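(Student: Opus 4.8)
The plan is to derive, for each fixed $s$, a closed autonomous matrix ODE in $t$ for the blocks $P_{jk}(t,s)$ in (\ref{Pjk}) by applying the quantum Ito product rule to $\cE_j(t,s)\cE_k(t,s)^\rT$, and then to obtain the time-homogeneity (\ref{PPi}) as a consequence of this ODE being autonomous with $s$-independent initial data. First I would invoke the column-wise QSDEs (\ref{dEtsk}) for $\cE_j$ and $\cE_k$, which have constant coefficients and are driven by the common noise $W$. Writing $\rd_t(\cE_j\cE_k^\rT) = (\rd_t\cE_j)\cE_k^\rT + \cE_j(\rd_t\cE_k^\rT) + (\rd_t\cE_j)(\rd_t\cE_k^\rT)$ and using the quantum Ito relation $\rd W\rd W^\rT = \Omega\rd t$ from (\ref{dWdW_Omega_J_bJ}), the drift contributes $A\cE_j\cE_k^\rT + \cE_j\cE_k^\rT A^\rT$, the Ito correction contributes $B(\cE_j)\Omega B(\cE_k)^\rT$, and the remaining terms are proportional to $\rd W$.

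Next I would take the vacuum expectation. Since the entries of $\cE_j(t,s)$, $\cE_k(t,s)$ are adapted to $\fH_t$ and commute with (and are statistically independent of) the forward increment $\rd W(t)$ --- which is exactly the property behind (\ref{EXcomm})--(\ref{EEE2}) --- the two $\rd W$-terms are martingale parts with zero mean, leaving $\dot P_{jk} = A P_{jk} + P_{jk}A^\rT + \bE(B(\cE_j)\Omega B(\cE_k)^\rT)$. The key computation is to identify the last term with $U(P_{jk})$ from (\ref{V}). Using $B(Y) = 2(\Theta\cdot Y)M^\rT$ from (\ref{BX}) and the antisymmetry of the sections $\Theta_a$ in (\ref{Thetaell}), one expands $B(\cE_j) = 2\sum_a \cE_{aj}\Theta_a M^\rT$ and $B(\cE_k)^\rT = -2\sum_b \cE_{bk}M\Theta_b$; since the operator entries $\cE_{aj}$, $\cE_{bk}$ commute with the constant matrices $\Theta_a$, $M$, $\Omega$, the product collapses to $-4\sum_{a,b}\cE_{aj}\cE_{bk}\,\Theta_a M^\rT\Omega M\Theta_b$, whose expectation is $-4\sum_{a,b}(P_{jk})_{ab}\Theta_a M^\rT\Omega M\Theta_b = U(P_{jk})$. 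Together with $\Phi$ in (\ref{Phi}) this gives $\dot P_{jk} = \Lambda(P_{jk})$, and the deterministic initial value $\cE_k(s,s) = \delta_k$ yields $P_{jk}(s,s) = \delta_j\delta_k^\rT$.

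Finally, since $\Lambda$ in (\ref{Lambda}) depends on neither $s$ nor $t$ and the initial datum $\delta_j\delta_k^\rT$ is independent of $s$, the solution of this initial value problem depends on $t$ and $s$ only through $t-s$; setting $\Pi_{jk}(\cdot) := P_{jk}(s+\cdot,\, s)$ then gives (\ref{PPi}) and the system (\ref{Pijkdot}). I expect the main obstacle to be the bookkeeping in the Ito correction term: one must track the single sign produced by transposing the antisymmetric $\Theta_b$ in $B(\cE_k)^\rT$, keep the non-commuting operator entries $\cE_{aj}$, $\cE_{bk}$ in the correct left-to-right order so that their expectation is exactly $(P_{jk})_{ab}$, and justify that they pass freely through the constant matrices $\Theta_a$, $M$, $\Omega$. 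A secondary point is to confirm that the $\rd W$-terms vanish under the vacuum expectation, which rests on the adaptedness and factorisation property already established in (\ref{EEE1})--(\ref{EEE2}).
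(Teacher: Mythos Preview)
Your proposal is correct and follows essentially the same route as the paper: apply the quantum Ito product rule to $\cE_j\cE_k^\rT$ using (\ref{dEtsk}), average in the vacuum state so that the $\rd W$-terms drop out, identify the Ito correction $\bE(B(\cE_j)\Omega B(\cE_k)^\rT)$ with $U(P_{jk})$ via the expansion $B(Y)=2\sum_a Y_a\Theta_a M^\rT$ and the antisymmetry of $\Theta_b$, and conclude time-homogeneity from the autonomy of the resulting ODE and the $s$-independence of the initial data. The bookkeeping points you flag (the single sign from transposing $\Theta_b$, keeping $\cE_{aj}\cE_{bk}$ in order so that its expectation is $(P_{jk})_{ab}$, and the martingale property of the $\rd W$-terms) are exactly the ones the paper handles.
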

\begin{proof}
By the QSDEs in (\ref{dEtsk}) (or (\ref{dEtsvec})) and the quantum Ito lemma \cite{HP_1984,P_1992}, the blocks $\cE_j \cE_k^\rT$ of the matrix  $\vec{\cE}\vec{\cE}^\rT$ satisfy
\begin{align}
\nonumber
    \rd_t (\cE_j \cE_k^\rT)
    =&
    (\rd_t \cE_j ) \cE_k ^\rT
    +
    \cE_j
    \rd_t \cE_k ^\rT
    +
    \rd_t \cE_j \rd_t \cE_k ^\rT\\
\nonumber
    = &
    (A \cE_j   \rd t + B(\cE_j  )\rd W)\cE_k  ^\rT\\
\nonumber
    & +
    \cE_j
    (\cE_k  ^\rT A^\rT  \rd t + (\rd W^\rT) B(\cE_k  )^\rT)\\
\nonumber
    & + B(\cE_j  ) \rd W \rd W^\rT B(\cE_k  )^\rT\\
\nonumber
    =& (A \cE_j  \cE_k ^\rT + \cE_j  \cE_k ^\rT A^\rT + B(\cE_j ) \Omega B(\cE_k )^\rT)\rd t\\
\label{dEEjk}
    & +
    B(\cE_j )(\rd W) \cE_k ^\rT
    +
    \cE_j  (\rd W)^\rT B(\cE_k )^\rT,
    \qquad
    j, k= 1, \ldots, n
\end{align}
%
%
%
(the time arguments $t\> s\> 0$ of $\cE(t,s)$, $W(t)$ are omitted for brevity),
where the quantum Ito matrix $\Omega$ from (\ref{dWdW_Omega_J_bJ}) and  the second commutativity from (\ref{EXcomm}) are also used. Since the input field is assumed to be in the vacuum state, the last line of the QSDE (\ref{dEEjk}) describes its martingale part which does not contribute to the quantum average of the right-hand side. Hence, the averaging of (\ref{dEEjk}) shows that the blocks (\ref{Pjk}) satisfy the ODEs
\begin{equation}
\label{Pjkdot1}
    \d_t P_{jk}(t,s)
    =
    \Phi(P_{jk}(t,s)) + \bE (B(\cE_j (t,s))\Omega B(\cE_k (t,s))^{\rT}),
\end{equation}
where use is also made of the operator $\Phi$ from (\ref{Phi}).
Now, for any vectors $\xi:= (\xi_k)_{1\< k \< n}$ and  $\eta:= (\eta_k)_{1\< k \< n}$ of self-adjoint quantum variables with finite second moments,
\begin{align}
\nonumber
    \bE (B(\xi)\Omega B(\eta)^{\rT})
    & =
    4
    \bE((\Theta \cdot \xi)M^\rT \Omega M (\Theta \cdot \eta)^\rT )\\
\nonumber
    & =
    -4
    \bE
    \Big(
    \sum_{j=1}^n \Theta_j\xi_j
    M^\rT \Omega M
    \sum_{k=1}^n \Theta_k\eta_k
    \Big)\\
\label{BOB}
    & =
    -4
    \sum_{j,k=1}^n
    \bE(\xi_j\eta_k)
    \Theta_j
    M^\rT \Omega M
    \Theta_k
    =
    U(\bE(\xi\eta^\rT)),
\end{align}
where (\ref{BX}) is used along with (\ref{betaX}), the antisymmetry of the sections (\ref{Thetaell}) of the array $\Theta$ and the definition (\ref{V}) of the operator $U$.   By applying (\ref{BOB}) to $\xi:= \cE_j (t,s)$, $\eta:= \cE_k (t,s)$, so that $\bE(\xi\eta^\rT) = P_{jk}(t,s)$ in view of (\ref{Pjk}),  and combining the result with (\ref{Pjkdot1}), (\ref{Lambda}), it follows that the function $P_{jk}$ satisfies the initial value problem
\begin{equation}
\label{Pjkdot}
    \d_t P_{jk}(t,s) = \Lambda(P_{jk}(t,s)),
    \qquad
    t \> s\> 0,
    \qquad
    P_{jk}(s,s) = \delta_j \delta_k^\rT,
\end{equation}
for any     $
    j,k=1, \ldots, n$,
where the initial condition is obtained from that in (\ref{dEtsk}). Since the ODE in (\ref{Pjkdot}) is autonomous and the initial condition $\delta_j \delta_k^\rT$ does not depend on $s$, then the solution indeed depends only  on $t-s\> 0$ as $P_{jk}(t,s) = \Pi_{jk}(t-s)$ as described in (\ref{PPi}), (\ref{Pijkdot}).
\end{proof}

The solution of the initial value problem  (\ref{Pijkdot}) can be represented by using an operator exponential on $\mC^{n\x n}$ as
$$
    \Pi_{jk}(t) = \re^{t \Lambda}(\delta_j\delta_k^\rT),
    \qquad
    t \> 0,
    \qquad
    j, k = 1, \ldots, n.
$$
In particular, the time evolution of the diagonal blocks $\Pi_{kk}(t) \in \mH_n^+$ of the matrix $\Pi(t)$ in (\ref{PPi}),  which inherit their Hermitian property and positive semi-definiteness   from the initial conditions $\delta_k\delta_k^\rT$, is completely specified by the restriction $\Lambda|_{\mH_n}$ of the operator $\Lambda$ in (\ref{Lambda})  to the invariant subspace $\mH_n$.
The corresponding Lyapunov exponents satisfy
\begin{equation}
\label{Lyapexp}
    \limsup_{t\to +\infty}
    \Big(
    \frac{1}{t}
    \ln \|\Pi_{kk}(t)\|_\rF
    \Big)
    \<
    \sigma(\Lambda|_{\mH_n}),
    \qquad
    k = 1, \ldots, n,
\end{equation}
where the right-hand side employs the spectral radius of $\Lambda|_{\mH_n}$ defined similarly to (\ref{stab}). Note that, in contrast to the matrix $A$ which can have essentially complex eigenvalues, the spectrum of the operator $\Lambda|_{\mH_n}$ is real. Indeed, if $\lambda \in \mC$ is an eigenvalue of $\Lambda|_{\mH_n}$,  then there exists a  matrix  $Z\in \mH_n \setminus \{0\}$ such that $\Lambda(Z) = \lambda Z$, and hence,  $\lambda Z \in \mH_n$ due to the invariance (\ref{invar}) of the subspace $\mH_n$. Since both $Z$ and $\lambda Z$ are Hermitian, then $2i Z \Im \lambda = \lambda Z -(\lambda Z )^* = 0$,  which,  in view of the condition $Z\ne 0$, implies that $\lambda \in \mR$.

\begin{theorem}
\label{th:Lyapexp}
Suppose the conditions of Theorem~\ref{th:Pjkdot} are satisfied. Then the Lyapunov exponent for the function $\Pi$ in (\ref{PPi}) is related to the spectra of the matrix $A$ in (\ref{A}) and the restriction $\Lambda|_{\mH_n}$ of the  operator $\Lambda$ in (\ref{Lambda}) as
\begin{equation}
\label{PiLyapexp}
    2\sigma(A)
    \<
    \limsup_{t\to +\infty}
    \Big(
    \frac{1}{t}
    \ln
    \Tr \Pi(t)
    \Big)
    \<
    \sigma(\Lambda|_{\mH_n}).
\end{equation}
\hfill$\square$
\end{theorem}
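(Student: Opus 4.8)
The plan is to establish the two inequalities in (\ref{PiLyapexp}) separately. The upper bound follows almost directly from the block structure of $\Pi$ and the estimate (\ref{Lyapexp}) already obtained for the diagonal blocks. The lower bound, which is the genuinely new ingredient, will come from a quantum variance inequality that bounds the operator second moments accumulated in $\Tr\Pi(t)$ from below by the squared mean value of $\cE(t,s)$, the latter being the deterministic matrix exponential $\re^{(t-s)A}$ from (\ref{EEE2}).

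For the upper bound, I would write $\Tr\Pi(t) = \sum_{k=1}^n \Tr\Pi_{kk}(t)$ and use that each diagonal block $\Pi_{kk}(t) \in \mH_n^+$ is positive semi-definite, so that its trace and its Frobenius norm are equivalent up to constants depending only on $n$ (for a positive semi-definite matrix of order $n$ with eigenvalues $\lambda_1, \ldots, \lambda_n \> 0$ one has $\|\Pi_{kk}(t)\|_\rF \< \Tr\Pi_{kk}(t) \< \sqrt{n}\, \|\Pi_{kk}(t)\|_\rF$). Consequently $\frac{1}{t}\ln\Tr\Pi_{kk}(t)$ and $\frac{1}{t}\ln\|\Pi_{kk}(t)\|_\rF$ share the same $\limsup$, which is bounded by $\sigma(\Lambda|_{\mH_n})$ according to (\ref{Lyapexp}). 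Since $\Tr\Pi(t)$ is a finite sum of $n$ such nonnegative terms, its exponential growth rate equals the maximum of the individual rates, which yields the upper bound in (\ref{PiLyapexp}).

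For the lower bound, I would start from the representation $\Tr\Pi(t) = \sum_{j,k=1}^n \bE(\cE_{jk}(t,s)^2)$ in (\ref{Ptrace}) (with $t$ playing the role of the time difference via (\ref{PPi})). Since each entry $\cE_{jk}(t,s)$ is a self-adjoint operator, the nonnegativity of the quantum variance gives $\bE(\cE_{jk}(t,s)^2) \> (\bE\,\cE_{jk}(t,s))^2$, because $\cE_{jk} - \bE\,\cE_{jk}$ is self-adjoint and its square has nonnegative expectation. Combining this with the mean value $\bE\,\cE(t,s) = \re^{(t-s)A}$ from (\ref{EEE2}) yields $\Tr\Pi(t) \> \sum_{j,k}((\re^{tA})_{jk})^2 = \|\re^{tA}\|_\rF^2$. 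The proof then closes by invoking the standard fact that the exponential growth rate of the matrix exponential in any norm equals the spectral abscissa, $\lim_{t\to+\infty}\frac{1}{t}\ln\|\re^{tA}\|_\rF = \sigma(A)$ (the Jordan-block polynomial prefactors being asymptotically negligible), so that $\limsup_{t\to+\infty}\frac{1}{t}\ln\Tr\Pi(t) \> 2\sigma(A)$.

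The main obstacle is the lower bound, where the conceptual point is to pass from the operator-valued second moments of the time-ordered exponential $\cE$ to the ordinary matrix $\re^{(t-s)A}$: the variance inequality $\bE(\xi^2) \> (\bE\xi)^2$ must be applied entrywise to self-adjoint quantum variables, and one must recognise that only the first moments of $\cE$ close into the linear dynamics (\ref{EEE2}), while the second moments genuinely involve the positive operator $U$ inside $\Lambda$ (which is precisely why the two sides of (\ref{PiLyapexp}) need not coincide). Everything else --- the trace/Frobenius equivalence for positive semi-definite blocks, the reduction of the finite sum to its dominant exponential rate, and the spectral-abscissa growth rate of $\re^{tA}$ --- is routine.
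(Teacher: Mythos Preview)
Your proof is correct, and the upper bound is essentially the paper's argument (trace as sum of the diagonal blocks, each controlled by $\sigma(\Lambda|_{\mH_n})$). The genuinely interesting comparison is the lower bound.

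The paper does not use the quantum variance inequality. Instead it splits $\Lambda=\Phi+U$ and applies the variation-of-constants formula to the ODE (\ref{Pijkdot}):
\[
  \Pi_{kk}(t)=\re^{t\Phi}(\delta_k\delta_k^\rT)+\int_0^t\re^{(t-s)\Phi}\bigl(U(\Pi_{kk}(s))\bigr)\,\rd s
  \succcurlyeq \re^{tA}\delta_k\delta_k^\rT\re^{tA^\rT},
\]
where the inequality exploits the positivity of $U$ in (\ref{Vpos}) and of the semigroup $\re^{t\Phi}(Z)=\re^{tA}Z\re^{tA^\rT}$ in (\ref{Phiexp}). Summing the traces over $k$ gives $\Tr\Pi(t)\>\|\re^{tA}\|_\rF^2$ and hence the lower bound. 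Your route reaches the same numerical inequality $\Tr\Pi(t)\>\|\re^{tA}\|_\rF^2$ by the entrywise bound $\bE(\cE_{jk}^2)\>(\bE\cE_{jk})^2=(\re^{tA})_{jk}^2$, which is shorter and uses only self-adjointness of the entries of $\cE(t,s)$ together with (\ref{EEE2}). What the paper's argument buys is a \emph{matrix} inequality $\Pi_{kk}(t)\succcurlyeq \re^{tA}\delta_k\delta_k^\rT\re^{tA^\rT}$, which is strictly stronger than the trace (diagonal) comparison you obtain, and makes explicit that the gap between the two sides of (\ref{PiLyapexp}) is driven by the positive Duhamel integral involving $U$; your approach is more elementary but yields only the scalar trace bound needed for the theorem.
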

\begin{proof}
The one-parameter group generated by the operator $\Phi$ in (\ref{Phi}) consists of positive operators acting on $\mC^{n\x n}$ as
\begin{equation}
\label{Phiexp}
    \re^{t\Phi}(Z) = \re^{tA} Z \re^{tA^\rT},
    \qquad
    Z \in \mC^{n\x n},
    \quad
    t \in \mR.
\end{equation}
Therefore, the decomposition (\ref{Lambda}) of the operator $\Lambda$ into $\Phi$, $U$ and the variation of constants allow the solution $\Pi_{kk}$ of the initial value problem (\ref{Pijkdot}) to be represented as
\begin{align}
\nonumber
    \Pi_{kk}(t)
    & =
    \re^{t\Lambda}(\Pi_{kk}(0))\\
\nonumber
    & =
    \re^{t\Phi}(\Pi_{kk}(0))
    +
    \int_0^t
    \re^{(t-s)\Phi}(U(\Pi_{kk}(s)))
    \rd s\\
\label{Pikksol}
    & \succcurlyeq
    \re^{t\Phi}(\Pi_{kk}(0))
    =
    \re^{tA}\delta_k\delta_k^\rT \re^{tA^\rT},
    \qquad
    t \> 0,
    \quad
    k = 1, \ldots , n.
\end{align}
Here, the matrix inequality uses
the positivity of the operators $U$ in (\ref{V}) and $\re^{t\Phi}$ in (\ref{Phiexp}) and the property that $\Pi_{kk}$ is $\mH_n^+$-valued. A combination of (\ref{Pikksol}) with the resolution  of the identity  $        \sum_{k=1}^n
        \delta_k\delta_k^\rT = I_n
$ leads to
\begin{equation}
\label{Pitrace}
    \Tr \Pi(t)
    =
    \sum_{k=1}^n
    \Tr \Pi_{kk}(t)
    =
    \Tr \re^{t \Lambda}(I_n)
    \>
    \|\re^{tA}\|_\rF^2,
\end{equation}
and hence,
$$
    \limsup_{t\to +\infty}
    \Big(
    \frac{1}{t}
    \ln
    \Tr \Pi(t)
    \Big)
    \>
    2
    \lim_{t\to +\infty}
    \Big(
    \frac{1}{t}
    \ln
    \|\re^{tA}\|_\rF
    \Big)
    =
    2\sigma(A),
$$
which proves the first inequality in (\ref{PiLyapexp}). The second inequality in (\ref{PiLyapexp}) is similar to (\ref{Lyapexp}) and follows from the second equality in (\ref{Pitrace}) and the invariance of the set $\mH_n^+ \subset \mH_n$ under the operator $\Lambda$.
\end{proof}


By combining (\ref{PiLyapexp}) with (\ref{Ptrace}), (\ref{PPi}), it follows that  the Lyapunov exponent for the root mean square value  of the exponential (\ref{Ets}) satisfies
\begin{align}
\nonumber
    \sigma(A)
    & \<
    \limsup_{t\to +\infty}
    \left(
        \frac{1}{t-s}
        \ln
        \sqrt{    \sum_{j,k= 1}^n
    \bE (\cE_{jk}(t,s)^2)}
    \right)\\
\label{rms}
    & =
    \frac{1}{2}
    \limsup_{t\to +\infty}
    \Big(
    \frac{1}{t}
    \ln
    \Tr \Pi(t)
    \Big)
    \<
    \frac{1}{2}
    \sigma(\Lambda|_{\mH_n}).
\end{align}
Therefore, the condition $\sigma(\Lambda|_{\mH_n})< 0$ (which is stronger than the Hurwitz property of $A$) is sufficient for  an exponential decay in the two-point CCRs in terms of the second moments. The spectrum of the operator $\Lambda$ in (\ref{Lambda}), including that of $\Lambda|_{\mH_n}$,  can be computed by using its vectorised representation
\begin{equation}
\label{vec}
    \vect(\Lambda(Z)) = (A \op A + \Psi) \vec{Z},
    \qquad
    Z \in \mC^{n\x n}.
\end{equation}
Here, $A \op A:= I_n \ox A + A \ox I_n$ is the Kronecker sum of the matrix $A$ with itself, which represents the operator $\Phi$ in (\ref{Phi}), and $\Psi \in \mC^{n^2\x n^2}$ is an auxiliary matrix defined by
\begin{align*}
    \Psi
    & :=
    \begin{bmatrix}
      \Psi_1 & \ldots & \Psi_n
    \end{bmatrix},
    \qquad
    \Psi_k
    :=
    \begin{bmatrix}
      \Psi_{1k} & \ldots & \Psi_{nk}
    \end{bmatrix},\\
    \Psi_{jk}
    & :=
    -4
    \vect(\Theta_j
    M^\rT\Omega  M
    \Theta_k)\in \mC^{n^2} ,
    \qquad
    j,k = 1, \ldots, n,
\end{align*}
in accordance with the structure of the operator $U$ in (\ref{V}), so that $\vect(U(Z)) = \sum_{j,k=1}^n z_{jk}\Psi_{jk}$. Since the vectorization is a linear bijection between  the spaces $\mC^{n\x n}$ and $\mC^{n^2}$, the spectrum of the operator $\Lambda$ coincides with that of the matrix $A \op A + \Psi$ in (\ref{vec}).


\section{A comparison with the oscillatory modes of isolated system dynamics}
\label{sec:iso}

When $A$ is Hurwitz,  the exponentially fast decay (\ref{Lyapexp0}) in the two-point CCRs    (and the related mean square bounds in (\ref{rms}))  is closely related to $X(s)$ having a fading effect on $X(t)$, so that the forced response term in (\ref{Xts}) becomes dominant as $t-s\to +\infty$. This is qualitatively different  from the isolated system dynamics in the absence of coupling, when $M=0$, $N=0$ in (\ref{H_LM}) and the matrix $A$ reduces to $A_0$ in (\ref{A0}). The following theorem is concerned with the oscillatory behaviour of the isolated system.

\begin{theorem}
\label{th:osc}
Suppose the conditions of Theorem~\ref{th:QSDE} are satisfied,  and  the matrix $\alpha$ in (\ref{XXX1}) is positive definite:
 \begin{equation}
\label{alfpos}
    \alpha\succ 0.
\end{equation}
Then for any energy vector $E\in \mR^n$, the spectrum of the matrix $A_0$ in (\ref{A0}) is contained by the imaginary axis $i\mR$ (that is, each eigenvalue is either  purely imaginary or zero). \hfill$\square$
\end{theorem}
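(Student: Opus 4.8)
The plan is to show that the real matrix $A_0$ is turned into an antisymmetric matrix by multiplication on the right by $\alpha$, and then to use the positive definiteness (\ref{alfpos}) to pin the spectrum onto $i\mR$. First I would unpack the entries: from (\ref{A0}), (\ref{diam}), (\ref{Thetaell}) the $(j,k)$-entry is $(A_0)_{jk} = 2(\Theta_k E)_j = 2\sum_m \theta_{jmk}E_m$, so that $A_0 \in \mR^{n\x n}$ since $\Theta$ is real and $E\in\mR^n$. The whole argument then reduces to the key identity $A_0\alpha + \alpha A_0^\rT = 0$, i.e.\ that $S:= A_0\alpha$ is antisymmetric.

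To prove this I would introduce the trilinear form $G(j,k,s) := \sum_\ell \alpha_{s\ell}\theta_{jk\ell}$ (using $\alpha=\alpha^\rT$). The antisymmetry $\theta_{jk\ell} = -\theta_{kj\ell}$ of the sections (\ref{Thetaell}) gives $G(j,k,s) = -G(k,j,s)$, while the compatibility relation (\ref{con1Im0}), rewritten as $\sum_\ell \alpha_{s\ell}\theta_{jk\ell} = \sum_\ell \alpha_{j\ell}\theta_{ks\ell}$, is exactly the cyclic invariance $G(j,k,s) = G(k,s,j)$. Antisymmetry in the first two slots together with cyclic invariance forces $G$ to be totally antisymmetric; in particular, chaining the two relations yields the slot-$(1,3)$ antisymmetry $G(j,m,k) = -G(k,m,j)$. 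Since a direct index computation gives $(A_0\alpha)_{jk} = 2\sum_m E_m\sum_p \theta_{jmp}\alpha_{pk} = 2\sum_m E_m\, G(j,m,k)$, this antisymmetry of $G$ translates into $(S)_{jk} = -(S)_{kj}$, establishing the claim.

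Finally, with $S = A_0\alpha$ real and antisymmetric it is skew-Hermitian, so $u^* S u \in i\mR$ for every $u \in \mC^n$. If $A_0 v = \lambda v$ with $v\ne 0$, then $u := \alpha^{-1}v \ne 0$ satisfies $Su = A_0\alpha\,\alpha^{-1}v = \lambda v = \lambda\alpha u$, whence $\lambda = (u^* S u)/(u^*\alpha u)$. Here $u^*\alpha u > 0$ by (\ref{alfpos}) and $u^* S u$ is purely imaginary, so $\lambda \in i\mR$, which is the assertion (each eigenvalue purely imaginary or zero). I would present this quotient argument rather than a congruence by $\alpha^{1/2}$, as it avoids matrix square roots.

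The main obstacle is the middle step: recognizing that (\ref{con1Im0}) is precisely the cyclic symmetry of $G$, that combined with the antisymmetry of the $\Theta_\ell$ it upgrades to total antisymmetry, and — crucially — that it is $A_0\alpha$ (not $\alpha A_0$) that is thereby made antisymmetric. The index bookkeeping identifying $(A_0\alpha)_{jk}$ with $2\sum_m E_m\,G(j,m,k)$ must be carried out carefully, since the contraction of $\alpha$ against $\Theta$ appearing here sits in a different slot from the one appearing directly in (\ref{con1Im0}).
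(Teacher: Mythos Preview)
Your proof is correct and follows essentially the same route as the paper: both establish the identity $A_0\alpha+\alpha A_0^{\rT}=0$ (equivalently, the antisymmetry of $A_0\alpha$) by combining the antisymmetry of the sections $\Theta_\ell$ with the compatibility relation (\ref{con1Im0}), your trilinear form $G$ being a clean way to package exactly the same index manipulation the paper carries out directly.

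The only divergence is in the last step. The paper writes $A_0=\alpha\Ups$ with $\Ups$ antisymmetric and uses the similarity $A_0=\sqrt{\alpha}\,(\sqrt{\alpha}\,\Ups\,\sqrt{\alpha})\,\alpha^{-1/2}$ to conclude that $A_0$ is isospectral to a real antisymmetric matrix, whereas you extract each eigenvalue as the Rayleigh-type quotient $(u^*Su)/(u^*\alpha u)$. Your argument is slightly more elementary (no matrix square root), but the paper's version is not gratuitous: the similarity via $\sqrt{\alpha}$ is reused immediately afterwards to produce the explicit eigendata (\ref{VUps})--(\ref{AV}) that feed into Theorems~\ref{th:asy}--\ref{th:Plim}. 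So for the theorem as stated your route is equally good, but in the context of the paper the $\sqrt{\alpha}$-similarity buys the diagonalisation structure needed downstream.
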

\begin{proof}
By using (\ref{A0}) along with (\ref{diam}), (\ref{Imalpha0}),  (\ref{Thetaell}), it follows that the entries of the matrix $A_0\alpha$ are computed as
\begin{equation}
\label{Aajs}
    (A_0 \alpha)_{js}
    =
    \sum_{k,\ell=1}^n
    \theta_{jk\ell}
    E_k
    \alpha_{\ell s},
    \qquad
    j,s = 1, \ldots, n,
\end{equation}
where $E_1, \ldots, E_n$ are the entries of the energy vector $E$. Since the matrix $\alpha$ is symmetric in view of (\ref{herm}), (\ref{Imalpha0}),   then (\ref{Aajs}) leads to
\begin{align}
\nonumber
    (\alpha A_0^\rT)_{js}
    & =
    ((A_0 \alpha)^{\rT})_{js}
    =
    (A_0 \alpha)_{sj}\\
\label{aAjs}
    & =
    \sum_{k,\ell=1}^n
    \theta_{sk\ell}
    E_k
    \alpha_{\ell j}
    =
    -
    \sum_{k,\ell=1}^n
    \theta_{ks\ell}
    E_k
    \alpha_{j \ell},
\end{align}
where the rightmost equality also uses the antisymmetry of the  matrices $\Theta_1, \ldots, \Theta_n$ in (\ref{Thetaell}).
A combination of (\ref{Aajs}) with (\ref{aAjs}) yields
$$
        (A_0 \alpha+\alpha A_0^\rT)_{js}
        =
        \sum_{k=1}^n
        \Big(
        E_k
        \sum_{\ell=1}^n
    (\theta_{jk\ell}
    \alpha_{\ell s}
    -
    \theta_{ks\ell}
    \alpha_{j \ell})
    \Big) = 0,
    \qquad
    j,s = 1, \ldots, n,
$$
where the innermost sum vanishes in view of (\ref{con1Im0}), and hence, for any $E \in \mR^n$,  the matrix $A_0$ satisfies
\begin{equation}
\label{AaaA}
    A_0 \alpha+\alpha A_0^\rT = 0.
\end{equation}
Now, if the matrix $\alpha$ is positive definite, as assumed in (\ref{alfpos}),
then (\ref{AaaA}) allows $A_0$ to be represented as
\begin{equation}
\label{simroot}
    A_0
    =
    \alpha \Ups
    =
    \sqrt{\alpha}
    \sqrt{\alpha} \Ups \sqrt{\alpha}
    \alpha^{-1/2},
    \qquad
    \Ups = -\Ups^\rT \in \mR^{n\x n}.
\end{equation}
Therefore, $A_0$ is isospectral to (as the result of a similarity transformation of) the real antisymmetric matrix $\sqrt{\alpha} \Ups \sqrt{\alpha} $ whose eigenvalues are imaginary (belong to $i\mR$) and symmetric about the origin \cite{HJ_2007}. At least one of these eigenvalues is zero if and only if $\det \Ups = 0$ (which holds with necessity for odd dimensions $n$, including the case of the Pauli matrices  (\ref{X123})).
\end{proof}

Under the assumptions of Theorem~\ref{th:osc} (which are satisfied, in particular, for the Pauli matrices),  the spectrum of the matrix $A_0$ can be  represented as $\{i\omega_k: k = 1, \ldots, n\}$. Here, the eigenfrequencies $\omega_1, \ldots, \omega_n \in \mR$ form a set,  which is symmetric about $0$,  and pertain to the oscillatory (or static if $\omega_k=0$) behavior of the  solutions of the ODE (\ref{Xdot}) for the isolated quantum system. Assuming (\ref{alfpos}) for what follows, the relation
(\ref{simroot}) implies that $\omega_1, \ldots, \omega_n$ coincide with the eigenvalues of the Hermitian matrix
\begin{equation}
\label{VUps}
    -i\sqrt{\alpha}\Ups \sqrt{\alpha} = V\mho V^* \in \mH_n.
\end{equation}
Here,
\begin{equation}
\label{mho}
      \mho := \diag_{1\< k\< n} (\omega_k),
\end{equation}
and
\begin{equation}
\label{Vvv}
    V
    :=
    \begin{bmatrix}
        v_1 & \ldots & v_n
    \end{bmatrix}
    \in \mC^{n\x n}
\end{equation}
is a unitary matrix
whose columns $v_1,\ldots, v_n\in \mC^n$ are the corresponding eigenvectors which satisfy
\begin{equation}
\label{veig}
    -i\sqrt{\alpha}\Ups \sqrt{\alpha} v_k = \omega_k v_k
\end{equation}
along with
\begin{equation}
\label{vsym}
    \omega_j=-\omega_k\
    \Longrightarrow\
    v_j = \overline{v_k},
    \qquad
    j, k = 1, \ldots, n,
    \quad
    j\ne k.
\end{equation}
A combination of (\ref{simroot}) with (\ref{VUps}), (\ref{mho}) implies that the matrix $A_0$ is diagonalisable as
\begin{equation}
\label{AV}
    A_0 = i \Sigma \mho \Sigma^{-1},
    \qquad
    \Sigma := \sqrt{\alpha}V,
\end{equation}
where
\begin{equation}
\label{invSigma}
    \Sigma^{-1} = V^* \alpha^{-1/2}
\end{equation}
due to the unitarity of the matrix (\ref{Vvv}).
Therefore, the dynamic variables of the isolated quantum system (\ref{Xdot}) are related by
$$
    X(t) = \re^{(t-s)A_0}X(s),
    \qquad
    t\> s \> 0,
$$
where the matrix exponential
$$
    \re^{\tau A_0}
    =
    \Sigma\re^{i\tau  \mho} \Sigma^{-1},
$$
associated with (\ref{AV}),  is an oscillatory function of time $\tau$ (with a static component in the presence of zero eigenfrequencies), and so also are the corresponding two-point CCRs
\begin{align}
\nonumber
    [X(t), X(s)^\rT]
    & =
    \re^{\tau A_0} [X(s), X(s)^\rT] \\
\label{XXcommexp}
    & =
    2i \Sigma\re^{i\tau  \mho} \Sigma^{-1} (\Theta \cdot X(s)),
    \qquad
    \tau := t-s \> 0.
\end{align}
The matrix exponential
$$
    \re^{i \tau \mho}
    =
    \diag_{1\< k\< n} (\re^{i\omega_k\tau})
$$
in (\ref{XXcommexp})
represents the isolated system dynamics in terms of auxiliary quantum variables
\begin{equation}
\label{zeta}
    \zeta_k := v_k^* \alpha^{-1/2} X,
    \qquad
    k = 1, \ldots, n,
\end{equation}
which are not necessarily self-adjoint operators and, in accordance with (\ref{invSigma}),  form  a vector
\begin{equation}
\label{zetaX}
    \zeta:= (\zeta_k)_{1\< k \< n}:= \Sigma^{-1}X
\end{equation}
evolving as
\begin{equation}
\label{zetadot}
    \dot{\zeta} = i \mho \zeta.
\end{equation}
The oscillatory part of this evolution can be decomposed into ``planar rotations''  with angular rates $\omega_k>0$ as
\begin{equation}
\label{xietadot}
    \begin{bmatrix}
      \xi_k \\
      \eta_k
    \end{bmatrix}^{^\centerdot}
    =
    -
    \omega_k
    \bJ
    \begin{bmatrix}
      \xi_k \\
      \eta_k
    \end{bmatrix}
\end{equation}
(with the matrix $\bJ$ from (\ref{bJ}))
for pairs $\zeta_j = \zeta_k^\dagger$ of those variables in  (\ref{zeta}), with $j\ne k$,  which correspond to the conjugate eigenvectors $v_j = \overline{v_k}$ in (\ref{veig}) associated with the opposite eigenfrequencies  in (\ref{vsym}). Any such pair $(\zeta_j, \zeta_k)$ enters (\ref{xietadot}) through
\begin{align}
\label{xik}
    \xi_k
    & := \Re \zeta_k = (\Re v_k)^\rT \alpha^{-1/2}X,\\
\label{etak}
    \eta_k
    & := \Im \zeta_k = -(\Im v_k)^\rT \alpha^{-1/2}X,
\end{align}
which are self-adjoint quantum variables. The remaining quantum variables $\zeta_k$  in
(\ref{zeta}), (\ref{zetaX}), with zero eigenfrequencies $\omega_k=0$,  are static in the sense that $\dot{\zeta}_k = 0$ according to (\ref{zetadot}).

The largest period of nontrivial oscillatory modes in (\ref{xietadot}) (specified by positive eigenfrequencies) is given by
\begin{equation}
\label{T*}
  T
  :=
  2\pi\big/ \min \{\omega_k:\ \omega_k >0,\ k = 1, \ldots, n\}.
\end{equation}
Since,  under the condition (\ref{alfpos}),  the matrix $A_0$ is not Hurwitz, then,  in view of (\ref{At}), the Hurwitz property of $A$ in (\ref{A}) can only come from nonzero coupling parameters $M$, $N$.

If the matrix $A$ is Hurwitz, then  the quantum decoherence of the system in the invariant state can be measured by a ``typical'' time constant of decay in the averaged two-point CCR matrix (\ref{EXXcommts}):
\begin{equation}
\label{taud}
    \tau_*
    :=
    \inf
    \Big\{
        \tau>0:\
        \|\re^{\tau A}(\Theta\cdot \mu_*)\|_\rF \< \frac{1}{\re}\|\Theta\cdot \mu_*\|_\rF
    \Big\}.
\end{equation}
A different norm can also be employed instead of the Frobenius norm  $\|\cdot\|$ which takes the form $\|\Theta\cdot \mu_*\|_\rF = \sqrt{-\Tr ((\Theta\cdot \mu_*)^2)}$ in application to the antisymmetric matrix $\Theta\cdot \mu_*$ which results from averaging the one-point CCRs (\ref{XCCRTheta}) over the invariant state: $\bE [X,X^\rT] = 2i\Theta\cdot \mu_*$.

The decoherence time (\ref{taud}) admits an upper bound involving algebraic Lyapunov inequalities as follows. For any
\begin{equation}
\label{lam1}
    0< \lambda < -\sigma(A),
\end{equation}
with $\sigma(A)$ given by (\ref{stab}), there exists a positive definite matrix $G = G^\rT \in \mR^{n\x n}$ satisfying
\begin{equation}
\label{lam2}
    AG + G A^\rT
    \prec -  2\lambda G.
\end{equation}
Since the second inequality in (\ref{lam1}) implies the Hurwitz property for $A+\lambda I_n$, any such matrix $G$ can be represented as the solution
\begin{equation}
\label{K}
    G
    =
    \int_{\mR_+}
    \re^{2\lambda t}
    \re^{tA}
    K
    \re^{tA^\rT}
    \rd t
\end{equation}
of the algebraic Lyapunov equation
$$
    (A+\lambda I_n)G + G (A+\lambda I_n)^\rT + K = 0
$$
with an arbitrary positive definite matrix $K = K^\rT \in \mR^{n\x n}$.
From (\ref{lam2}), it follows that   the matrix
$$
    E_\tau:=
    G^{-1/2}\re^{\tau A}\sqrt{G}
$$
is a contraction in the sense that its operator matrix norm satisfies
$
    \|E_\tau \| \< \re^{-\lambda\tau},
$
or equivalently,
\begin{equation}
\label{contr}
    E_\tau^\rT E_\tau \preccurlyeq \re^{-2\lambda\tau} I_n,
        \qquad
    \tau> 0
\end{equation}
(a similar argument is used, for example, in \cite[Proof of Theorem 6 on p. 122]{VPJ_2018a}).
In combination with
$$
    \re^{\tau A} = \sqrt{G}  E_\tau G^{-1/2},
$$
the inequality (\ref{contr}) leads to
\begin{align}
\nonumber
    \|\re^{\tau A}(\Theta\cdot \mu_*)\|_\rF^2
    & =
    \|\sqrt{G}  E_\tau G^{-1/2}(\Theta\cdot \mu_*)\|_\rF^2\\
\nonumber
    & =
    \Tr ((G^{-1/2}(\Theta\cdot \mu_*) )^\rT
    E_\tau^\rT
    G
    E_\tau
    G^{-1/2}(\Theta\cdot \mu_*))\\
\nonumber
    &
    \<
    \lambda_{\max}(G)
    \Tr ((G^{-1/2}(\Theta\cdot \mu_*) )^\rT
    E_\tau^\rT
    E_\tau
    G^{-1/2}(\Theta\cdot \mu_*))\\
\nonumber
    &
    \<
    \|G\|
        \|E_\tau\|^2
    \Tr ((G^{-1/2}(\Theta\cdot \mu_*) )^\rT
    G^{-1/2}(\Theta\cdot \mu_*))\\
\label{contr1}
    & \<
    \re^{-2\lambda\tau}
    \|G\|
    \|G^{-1/2}(\Theta\cdot \mu_*)\|_\rF^2,
\end{align}
where $\lambda_{\max}(\cdot)$ is the largest eigenvalue of a matrix with a real spectrum.
Therefore, (\ref{contr1}) yields
$$    \|\re^{\tau A}(\Theta\cdot \mu_*)\|_\rF \<     \re^{-\lambda\tau}
    \sqrt{\|G\|}
    \|G^{-1/2}(\Theta\cdot \mu_*)\|_\rF,
$$
so that the decoherence time (\ref{taud}) admits an upper bound
\begin{equation}
\label{tau*max}
    \tau_*
    \<
    \frac{1}{\lambda}
    \Big(
        1
        +
        \ln
        \Big(
            \frac{1}
            {\|(\Theta\cdot \mu_*)\|_\rF}
            \sqrt{\|G\|}\|G^{-1/2}(\Theta\cdot \mu_*)\|_\rF
        \Big)
    \Big).
\end{equation}
In view of the parameterisation (\ref{K}) for the matrix $G$, the bound (\ref{tau*max}) can be tightened by minimising its right-hand side over the pairs $(\lambda,K)$ such that $\lambda$  satisfies (\ref{lam1}) and $K\succ 0$ is normalised, for example, as $\Tr K = 1$. The normalisation does not affect the resulting minimum value since the map $K \mapsto G$ is linear and the quantity $\sqrt{\|G\|}\|G^{-1/2}(\Theta\cdot \mu_*)\|_\rF$ in (\ref{tau*max}) is invariant under the scaling transformation $G\mapsto s G$ for any $s>0$.

%
%
%

\section{Asymptotic decoherence estimates in a weak-coupling formulation}
\label{sec:asy}

With the decoherence part $\wt{A}$ of the matrix (\ref{A}) in (\ref{At}) depending quadratically on the coupling parameters $M$, $N$, this homogeneity can be exploited in a weak-coupling formulation
\begin{equation}
\label{MNeps}
    M_\eps := \eps \sM,
    \qquad
    N_\eps := \eps \sN.
\end{equation}
Here, $\eps \> 0$ is a small scaling factor which quantifies the coupling strength, whereas $\sM \in \mR^{m\x n}$, $\sN \in \mR^m$  specify the coupling ``shape'' in (\ref{H_LM}). The matrices (\ref{A}), (\ref{BX}), (\ref{At}) acquire dependence on $\eps$ as
\begin{equation}
\label{ABeps}
    A_\eps
     := A_0 + \wt{A}_\eps,
    \qquad
     \wt{A}_\eps
      = \eps^2 \sA,
     \qquad
     B_\eps(X)
      :=
      \eps \sB(X),
      \qquad
      \sB(X)
      :=
      2(\Theta \cdot X)\sM^\rT,
\end{equation}
where the matrix $A_0$ is given by (\ref{A0}), and
\begin{equation}
\label{At1}
    \sA
    =
        2
        \Theta \diam (\sM^\rT J\sN)
    +
    2
    \sum_{\ell = 1}^n
    \Theta_\ell
    \sM^\rT
    (
        \sM\theta_{\ell\bullet \bullet}
        +
        J \sM\Re \beta_{\ell\bullet \bullet}
    ).
\end{equation}
The following theorem is concerned with the asymptotic behaviour of the spectrum of the matrix $A_\eps$.

\begin{theorem}
\label{th:asy}
In addition to the assumptions of Theorem~\ref{th:osc}, suppose  the eigenfrequencies of the matrix $A_0$ from (\ref{A0}) in (\ref{simroot})--(\ref{AV})  are   pairwise different:
\begin{equation}
\label{omdiff}
  \omega_j \ne \omega_k,
    \qquad
    j, k = 1, \ldots, n,
    \quad
    j\ne k.
\end{equation}
Then for all sufficiently small values of $\eps$, the eigenvalues $\lambda_1(\eps), \ldots, \lambda_n(\eps)$  of the matrix $A_\eps$ in (\ref{ABeps}) are also pairwise different and, being appropriately numbered, behave asymptotically as
\begin{equation}
\label{eigasy}
  \lambda_k(\eps)
  =
  i\omega_k + \eps^2\nu_k + o(\eps^2),
  \qquad
  {\rm as}\
  \eps \to 0+.
\end{equation}
Here,
\begin{equation}
\label{nu}
  \nu_k:= v_k^*\alpha^{-1/2} \sA \sqrt{\alpha} v_k,
  \qquad
  k = 1, \ldots, n,
\end{equation}
are complex-valued quantities involving
the orthonormal eigenvectors $v_k$ from (\ref{veig}) and the matrices $\alpha$, $\sA$ in (\ref{XXX1}), (\ref{At1}). \hfill$\square$
\end{theorem}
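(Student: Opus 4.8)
The plan is to treat $A_\eps = A_0 + \eps^2\sA$ from (\ref{ABeps}) as an analytic perturbation in the small parameter $t := \eps^2$, diagonalising the leading term $A_0$ first. Under the assumption (\ref{omdiff}), the diagonalisation (\ref{AV})--(\ref{invSigma}) shows that $A_0$ has $n$ \emph{simple} eigenvalues $i\omega_1, \ldots, i\omega_n$, since the $\omega_k$ are pairwise distinct. Applying the similarity transformation by $\Sigma = \sqrt{\alpha}V$ from (\ref{AV}) to $A_\eps$ gives
$$\Sigma^{-1}A_\eps\Sigma = i\mho + \eps^2 C, \qquad C := \Sigma^{-1}\sA\Sigma = V^*\alpha^{-1/2}\sA\sqrt{\alpha}V,$$
where $\Sigma^{-1} = V^*\alpha^{-1/2}$ is taken from (\ref{invSigma}). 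The diagonal entries of $C$ are then precisely $C_{kk} = v_k^*\alpha^{-1/2}\sA\sqrt{\alpha}v_k = \nu_k$ from (\ref{nu}), with the $v_k$ the orthonormal columns of the unitary matrix $V$ in (\ref{Vvv}).

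Since $i\mho$ is diagonal with distinct entries $i\omega_k$, the standard matrix spectrum perturbation results \cite{M_1985} apply: each eigenvalue of $i\mho + \eps^2 C$ is an analytic function of $t = \eps^2$ in a neighbourhood of $t = 0$, and its first-order coefficient in $t$ is the corresponding diagonal entry of the perturbation matrix $C$. This yields $\lambda_k(\eps) = i\omega_k + \eps^2 C_{kk} + o(\eps^2) = i\omega_k + \eps^2\nu_k + o(\eps^2)$, establishing (\ref{eigasy}); the remainder is in fact $O(\eps^4)$ because $A_\eps$ depends on $\eps$ only through $\eps^2$. Equivalently, one may differentiate the eigenvalue relation $A_\eps u_k(\eps) = \lambda_k(\eps)u_k(\eps)$ with respect to $t$ at $t = 0$ and left-multiply by the left eigenvector $v_k^*\alpha^{-1/2}$ (a row of $\Sigma^{-1}$), which is biorthonormal to the right eigenvector $\sqrt{\alpha}v_k$ (a column of $\Sigma$) by the identity $\Sigma^{-1}\Sigma = I_n$ and the unitarity of $V$; the eigenvector-derivative terms cancel, leaving the same coefficient $\nu_k$.

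The pairwise distinctness of $\lambda_1(\eps), \ldots, \lambda_n(\eps)$ for small $\eps$, together with the legitimacy of the numbering in (\ref{eigasy}), follows from the continuous dependence of the spectrum on the matrix entries: the unperturbed eigenvalues $i\omega_k$ being pairwise distinct by (\ref{omdiff}), continuity keeps the perturbed eigenvalues separated for all sufficiently small $\eps$. The step requiring genuine care is not computational but the invocation of analytic perturbation theory, namely confirming that the simplicity of the spectrum of $A_0$---inherited from (\ref{omdiff}) through the similarity (\ref{AV})---is actually in force, so that each simple eigenvalue depends analytically on $t$ and the first-order formula is valid. Once this simplicity is secured, the reduction to the diagonal form $i\mho + \eps^2 C$ renders the identification $\nu_k = C_{kk}$ immediate.
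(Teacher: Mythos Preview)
Your proposal is correct and follows essentially the same approach as the paper: both apply the similarity transformation $\Sigma^{-1}A_\eps\Sigma = i\mho + \eps^2 V^*\alpha^{-1/2}\sA\sqrt{\alpha}V$ and then invoke the standard first-order perturbation result (the same reference \cite{M_1985}) for a diagonal matrix with distinct entries, identifying the $\nu_k$ as the diagonal entries of the perturbation. The only cosmetic difference is that the paper appeals to the Gershgorin circle theorem for the pairwise distinctness of $\lambda_k(\eps)$ for small $\eps$, whereas you invoke continuity of the spectrum; both arguments are valid and yield the same conclusion.
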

\begin{proof}
The property that the eigenvalues $\lambda_1(\eps), \ldots, \lambda_n(\eps)$ of $A_\eps$ in (\ref{ABeps})  are pairwise different for all sufficiently small $\eps$ (that is, $\eps \in [0, \delta)$ for some $\delta>0$) follows from (\ref{omdiff}) and  the Gershgorin circle  theorem \cite{H_2008,HJ_2007}. This also allows them to be numbered in such a way that each eigenvalue  $\lambda_k(\eps)$ inherits the infinite differentiability from $A_\eps$ over  $\eps \in [0, \delta)$. In view of the similarity transformation in (\ref{AV}),  the matrix $A_\eps$ is isospectral to
\begin{align}
\nonumber
    \Sigma^{-1}A_\eps \Sigma
    & =
    \Sigma^{-1}A_0 \Sigma  + \eps^2 \Sigma ^{-1}\sA\Sigma  \\
\label{SS}
    & =
    i \mho  + \eps^2 V^*\alpha^{-1/2} \sA \sqrt{\alpha} V,
\end{align}
where use is made of (\ref{invSigma}). Since the matrix $\mho$ in (\ref{mho}) is diagonal with pairwise different  diagonal entries $\omega_k$, then,  by the spectrum perturbation results (see, for example, \cite{M_1985} and references therein) applied to (\ref{SS}),  the diagonal entries (\ref{nu}) of the matrix $V^*\alpha^{-1/2} \sA \sqrt{\alpha} V$
specify the coefficients of the linear terms in the asymptotic expansions of $\lambda_k(\eps)$ over the powers of $\eps^2$ in (\ref{eigasy}).
\end{proof}

The following theorem is a corollary of Theorem~\ref{th:asy} and provides stability conditions  for the weakly coupled system.

\begin{theorem}
\label{th:stab}
Suppose the assumptions of Theorem~\ref{th:asy} are satisfied. Then the fulfillment of the inequalities
\begin{equation}
\label{nuneg}
  \Re \nu_k < 0,
  \qquad
  k=1, \ldots, n,
\end{equation}
is sufficient (and in nonstrict form, necessary) for the matrix $A_\eps$ in (\ref{ABeps}) to be  Hurwitz for all $\eps> 0$ small enough. \hfill$\square$
\end{theorem}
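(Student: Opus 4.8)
The plan is to read off the Hurwitz property directly from the asymptotic expansion (\ref{eigasy}) established in Theorem~\ref{th:asy}. Since each eigenfrequency $\omega_k$ is real, the purely imaginary leading term $i\omega_k$ contributes nothing to the real part, and taking real parts in (\ref{eigasy}) gives
$$
    \Re \lambda_k(\eps) = \eps^2 \Re \nu_k + o(\eps^2),
    \qquad
    k = 1, \ldots, n,
$$
as $\eps \to 0+$. The matrix $A_\eps$ is Hurwitz precisely when $\sigma(A_\eps) = \max_{1\< k\< n} \Re \lambda_k(\eps) < 0$, so everything reduces to the sign of the quantities $\Re \nu_k$.

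For sufficiency, I would assume (\ref{nuneg}) and factor out $\eps^2 > 0$ to write $\Re \lambda_k(\eps) = \eps^2(\Re \nu_k + o(1))$. Since there are only finitely many indices $k$ and each $o(1)$ term vanishes as $\eps \to 0+$, the finite maximum $\max_{1\< k\< n}(\Re \nu_k + o(1))$ converges to $\max_{1\< k\< n} \Re \nu_k < 0$. Hence there is a $\delta>0$ such that $\Re \lambda_k(\eps) < 0$ for all $k$ and all $\eps \in (0, \delta)$, which is exactly the Hurwitz property of $A_\eps$ for all sufficiently small $\eps > 0$.

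For necessity, I would argue through the limit. If $A_\eps$ is Hurwitz for all small $\eps > 0$, then $\Re \lambda_k(\eps) < 0$ for each $k$, that is $\eps^2 \Re \nu_k + o(\eps^2) < 0$. Dividing by $\eps^2 > 0$ gives $\Re \nu_k + o(1) < 0$, and letting $\eps \to 0+$ yields $\Re \nu_k \< 0$. This explains why necessity can be claimed only in the nonstrict form: a strictly negative real part of $\lambda_k(\eps)$ for positive $\eps$ degenerates, in the limit, into a nonstrict inequality on $\Re \nu_k$.

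The main subtlety, and the reason the statement separates strict sufficiency from nonstrict necessity, is precisely this loss of strictness in the limit. The borderline case $\Re \nu_k = 0$ is genuinely indeterminate at order $\eps^2$: the sign of $\Re \lambda_k(\eps)$ is then governed by the higher-order terms absorbed into $o(\eps^2)$, which the first-order perturbation analysis of Theorem~\ref{th:asy} does not resolve. Everything else is a routine reading-off of signs from (\ref{eigasy}); the only care needed is the uniformity of the $o(\eps^2)$ estimates across the finitely many eigenvalues, which is automatic since a finite maximum commutes with the limit.
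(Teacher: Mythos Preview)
Your proof is correct and follows essentially the same route as the paper: take real parts of the expansion (\ref{eigasy}) to obtain $\Re \lambda_k(\eps) = \eps^2 \Re \nu_k + o(\eps^2)$, read off sufficiency from the sign of $\Re \nu_k$, and obtain the nonstrict necessary condition by dividing through by $\eps^2$ and passing to the limit. Your version spells out the finite-maximum uniformity and the borderline case a bit more explicitly, but the argument is the same.
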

\begin{proof}
It follows from (\ref{eigasy}) that
\begin{equation}
\label{Reeigasy}
  \Re \lambda_k(\eps)
  =
  \eps^2 \Re \nu_k   + o(\eps^2),
  \qquad
  {\rm as}\
  \eps \to 0+.
\end{equation}
Hence, the fulfillment of (\ref{nuneg})
is sufficient for $A_\eps$ to be Hurwitz for all $\eps> 0$ small enough, while the nonstrict inequalities
\begin{equation*}
\label{ommu2}
    \Re\nu_k = \lim_{\eps\to 0} \frac{\Re \lambda_k(\eps)}{\eps^2}\< 0,
    \qquad
    k=1, \ldots, n,
\end{equation*}
provide a necessary condition for the Hurwitz property.
\end{proof}

The quantities $\nu_1, \ldots, \nu_n$ in (\ref{nu}) are symmetric about the real axis in the complex plane. Indeed,
\begin{equation}
\label{nubar}
  \overline{\nu_k}
  =
  (\overline{v_k})^*  \alpha^{-1/2} \sA \sqrt{\alpha} \overline{v_k},
  \qquad
  k = 1, \ldots, n,
\end{equation}
due to the matrix $\alpha^{-1/2} \sA \sqrt{\alpha} $ being real. Therefore, since the eigenvectors from (\ref{veig}) with opposite eigenfrequencies are related by the complex  conjugation in (\ref{vsym}), then (\ref{nubar}) leads to
\begin{equation*}
\label{nusym}
    \omega_j=-\omega_k
    \
    \Longrightarrow\
    \nu_j = \overline{\nu_k}
        \
    \Longrightarrow\
    \Re \nu_j = \Re \nu_k,
    \qquad
    j, k = 1, \ldots, n,
    \quad
    j\ne k.
\end{equation*}
Due to this symmetry, the number of independent stability  conditions in (\ref{nuneg}) is essentially  twice smaller (up to a zero eigenfrequency which is necessarily present for odd dimensions $n$). Also note that
\begin{equation}
\label{Renu}
  \Re \nu_k
  = v_k^*\bS(\alpha^{-1/2} \sA \sqrt{\alpha}) v_k,
  \qquad
  k = 1, \ldots, n,
\end{equation}
in view of (\ref{nu}),
depend on the matrix $\alpha^{-1/2} \sA \sqrt{\alpha}$
only through its symmetric part
$$
    \bS(\alpha^{-1/2} \sA \sqrt{\alpha})
    =
    \frac{1}{2}
    (\alpha^{-1/2} \sA \sqrt{\alpha} + \sqrt{\alpha}\sA^\rT \alpha^{-1/2} ),
$$
where $\bS(K):= \frac{1}{2}(K+K^\rT)$ is the symmetrizer of square matrices.
Under the conditions of Theorem~\ref{th:stab}, it follows from (\ref{Reeigasy}) that the Lyapunov exponent (\ref{stab}) satisfies
\begin{equation}
\label{lead}
    \sigma(A_\eps)
    =
    \eps^2 \max_{1\< k \< n}\Re \nu_k + o(\eps^2),
  \qquad
  {\rm as}\
  \eps \to 0+.
\end{equation}
The leading term $\eps^2 \max_{1\< k \< n}\Re \nu_k$ in (\ref{lead}), taken with the opposite sign in view of (\ref{nuneg}),  provides an asymptotically accurate approximation for the right-hand side of (\ref{lam1}) and also an estimate
\begin{equation}
\label{tauhat}
    \wh{\tau}
    :=
    \frac{\eps^{-2}}
    {\max_{1\< k \< n}|\Re \nu_k|}
\end{equation}
for a decay time, which is different from yet related to (\ref{taud}). The decoherence time estimate (\ref{tauhat}) allows the coupling strength $\eps$ to be chosen so as to enable all the isolated oscillatory modes in (\ref{xietadot})--(\ref{etak}) to manifest themselves in a large number of cycles before the decay sets in. This requirement takes the form $\wh{\tau}\gg T$ in terms of the period (\ref{T*}) and leads to
\begin{equation}
\label{epsmax}
    \eps
    \ll
    \sqrt{
    \frac
    {\min\{\omega_k:\ \omega_k>0, \ k = 1, \ldots, n\}}
    {2\pi\min_{1\< k \< n}|\Re \nu_k|}}
    =:
    \wh{\eps}
\end{equation}
as an asymptotic threshold on the coupling strength for the system to preserve the quantum dynamic features of its isolated counterpart.
Since
\begin{equation}
\label{mumax}
    \wh{\eps}
    \>
    1\big/ \sqrt{2\pi \max\{|\Re \nu_k|/\omega_k:\ \omega_k >0,\ k = 1, \ldots, n\}}
    =:
    \wt{\eps},
\end{equation}
in view of
\begin{align*}
    \min_{1\< k \< n}|\Re \nu_k|
    \< &
    \min\{\omega_k:\ \omega_k >0,\ k = 1, \ldots, n\}\\
    & \x \max\{|\Re \nu_k|/\omega_k:\ \omega_k >0,\ k = 1, \ldots, n\},
\end{align*}
the right-hand side of (\ref{mumax}) provides a more stringent threshold on $\eps$. For example, in the case of three system variables $n=3$ (such as the Pauli matrices (\ref{X123})), with the eigenfrequencies $\omega_1=-\omega_2>0$ and $\omega_3 = 0$, both thresholds in (\ref{epsmax}), (\ref{mumax})
are the same and reduce to
\begin{equation}
\label{eps<<}
    \eps
    \ll
    \sqrt{\frac{\omega_1}{2\pi|\Re \nu_1|}}.
\end{equation}
These asymptotic estimates will be complemented by closed-form expressions in  the three-dimen\-sion\-al case in Section~\ref{sec:Pauli}.

\section{Weak-coupling limit of the invariant state}
\label{sec:inv}

While low decoherence is important for isolating the quantum system from the environment, it conflicts with accelerating the  convergence to the invariant state through enhanced dissipation, which underlies the procedure of quantum state generation   mentioned in Section~\ref{sec:QSS}. The system-field coupling influences not only the convergence rate (\ref{lead}), but also the invariant state itself  whose moments (\ref{mured}) are completely specified by the steady-state mean vector (\ref{mu*}).   We will therefore discuss the asymptotic behaviour of the invariant mean vector
 \begin{equation}
\label{mueps}
  \mu_\eps := -A_\eps^{-1} b_\eps
\end{equation}
as $\eps\to 0+$  in the weak-coupling  framework (\ref{MNeps}). Here, the matrix $A_\eps$ is given by (\ref{ABeps}),  and the vector $b$ in  (\ref{b}) also acquires dependence  on the coupling strength $\eps$ as
\begin{equation}
\label{beps}
    b_\eps
     :=
     \eps^2 \sb,
     \qquad
     \sb:=
    2
    \sum_{\ell = 1}^n
    \Theta_\ell
    \sM^\rT
    J\sM\alpha_{\bullet \ell}.
\end{equation}
If the isolated dynamics matrix $A_0$ in  (\ref{A0}) is nonsingular  (so that $n$ is even), then a combination of (\ref{ABeps}) with (\ref{beps}) leads to $\lim_{\eps\to 0+} \mu_\eps = 0$ for (\ref{mueps}) along with the asymptotic relation
\begin{equation*}
\label{muasy}
    \mu_\eps
    \sim
    -\eps^2 A_0^{-1} \sb \to 0,
    \qquad
    {\rm as}\
    \eps \to 0+.
\end{equation*}
In the odd-dimensional case, the asymptotic behaviour of (\ref{mueps}) is qualitatively different.

\begin{theorem}
\label{th:Plim}
Suppose the assumptions of Theorems~\ref{th:asy}, \ref{th:stab} are satisfied. Also, suppose the dimension $n$ is odd  and one of the eigenfrequencies in (\ref{mho}) for the matrix $A_0$ in (\ref{A0}) is zero:
\begin{equation}
\label{omk0}
    \omega_{k_0} = 0
\end{equation}
for some $k_0 = 1, \ldots, n$.
Then the invariant  mean vector $\mu_\eps$  of the system variables in (\ref{mueps}) has the limit
\begin{equation}
\label{mulim}
      \lim_{\eps \to 0+}
    \mu_\eps
    =
    -\frac{1}{\nu_{k_0}}\sqrt{\alpha}
    v_{k_0}
    v_{k_0}^\rT
    \alpha^{-1/2}
    \sb,
\end{equation}
computed in terms of the structure matrix $\alpha$, the eigendata from (\ref{simroot})--(\ref{AV}), the vector $\sb$ from (\ref{ABeps}) and the quantities $\nu_k$ in  (\ref{nu}).  
\hfill$\square$
\end{theorem}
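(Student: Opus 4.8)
The plan is to reduce everything to a single singular-limit computation. From (\ref{mueps}) and (\ref{beps}) we have $\mu_\eps = -A_\eps^{-1} b_\eps = -\eps^2 A_\eps^{-1}\sb$, so it suffices to evaluate $\lim_{\eps\to0+}(\eps^2 A_\eps^{-1})$ and apply it to $-\sb$. The point is that $A_0$ is singular --- it carries the zero eigenfrequency $\omega_{k_0}=0$ of (\ref{omk0}) --- so $A_\eps^{-1}$ diverges like $\eps^{-2}$, but only along the associated one-dimensional near-kernel, and the prefactor $\eps^2$ extracts exactly a finite rank-one limit. (Invertibility of $A_\eps$ for small $\eps>0$, needed for $\mu_\eps$ to be defined, is guaranteed by the Hurwitz property from Theorem~\ref{th:stab}, which also gives $\nu_{k_0}\ne 0$ since $\Re\nu_{k_0}<0$.)

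First I would diagonalise the unperturbed part via the similarity transformation (\ref{AV}), (\ref{invSigma}), writing $\Sigma^{-1} A_\eps \Sigma = i\mho + \eps^2 C$ with $C := V^*\alpha^{-1/2}\sA\sqrt{\alpha}V$, whose diagonal entries are exactly $C_{kk} = \nu_k$ from (\ref{nu}). Then $\eps^2 A_\eps^{-1} = \Sigma\,(\eps^2(i\mho + \eps^2 C)^{-1})\,\Sigma^{-1}$, and the task becomes computing $\lim_{\eps\to0+}\eps^2(i\mho + \eps^2 C)^{-1}$. Since the $\omega_k$ are pairwise distinct by (\ref{omdiff}) with $\omega_{k_0}=0$ the only vanishing one, I would split off the index $k_0$ from its complement $S := \{1,\ldots,n\}\setminus\{k_0\}$: the block $i\mho_S + \eps^2 C_{SS}$ is invertible with an $O(1)$ inverse (as $\mho_S$ is nonsingular), so the $(k_0,k_0)$ Schur complement equals $\eps^2\nu_{k_0} + O(\eps^4)$. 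By the block-inverse formula the $(k_0,k_0)$ entry of $(i\mho+\eps^2 C)^{-1}$ is $\sim 1/(\eps^2\nu_{k_0})$ while all remaining entries stay $O(1)$; multiplying by $\eps^2$ therefore kills every entry except $(k_0,k_0)$, giving $\eps^2(i\mho+\eps^2 C)^{-1}\to \frac{1}{\nu_{k_0}}\delta_{k_0}\delta_{k_0}^{\rT}$. Equivalently, this follows from the eigenvalue asymptotics (\ref{eigasy}) --- only $\lambda_{k_0}(\eps)\sim\eps^2\nu_{k_0}$ approaches the origin --- together with the convergence of the rank-one spectral projection onto the $k_0$-mode to the unperturbed $\delta_{k_0}\delta_{k_0}^{\rT}$.

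Transforming back, and using $\Sigma\delta_{k_0} = \sqrt{\alpha}\,v_{k_0}$ and $\delta_{k_0}^{\rT}\Sigma^{-1} = v_{k_0}^*\alpha^{-1/2}$ from (\ref{AV}), (\ref{invSigma}), I obtain $\eps^2 A_\eps^{-1}\to \frac{1}{\nu_{k_0}}\sqrt{\alpha}\,v_{k_0}v_{k_0}^*\,\alpha^{-1/2}$. The final ingredient is that the eigenvector for the zero eigenfrequency is real: from (\ref{veig}) at $\omega_{k_0}=0$ one has $\sqrt{\alpha}\,\Ups\sqrt{\alpha}\,v_{k_0}=0$ with $\sqrt{\alpha}\,\Ups\sqrt{\alpha}$ real and antisymmetric, so $v_{k_0}$ may be taken real --- consistently with (\ref{vsym}) read at the self-paired frequency $\omega_{k_0}=-\omega_{k_0}$ --- whence $v_{k_0}^* = v_{k_0}^{\rT}$. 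Applying the limiting matrix to $-\sb$ then yields (\ref{mulim}). The main obstacle is the singular-perturbation step: one must confirm that the $\eps^{-2}$ blow-up of $A_\eps^{-1}$ is confined to the single $k_0$-direction and that its coefficient is governed precisely by $\nu_{k_0}$ rather than by some mixture of off-diagonal entries of $C$; the Schur-complement bookkeeping above (equivalently, the simplicity of the perturbed eigenvalue ensured by (\ref{omdiff}) together with continuity of the spectral projection) is exactly what makes this rigorous.
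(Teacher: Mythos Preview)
Your proof is correct and reaches the same conclusion as the paper, but via a genuinely different route. The paper diagonalises $A_\eps$ with an $\eps$-dependent similarity $\Sigma_\eps$ (whose columns are eigenvectors of $A_\eps$), invokes the smooth dependence $\Sigma_\eps\to\Sigma$, and then observes that $\eps^2\Lambda_\eps^{-1}\to\frac{1}{\nu_{k_0}}\delta_{k_0}\delta_{k_0}^\rT$ directly in the diagonal frame. You instead fix the unperturbed similarity $\Sigma$ once and for all, reduce the problem to inverting the non-diagonal matrix $i\mho+\eps^2 C$, and extract the limit of $\eps^2(i\mho+\eps^2 C)^{-1}$ by a Schur-complement computation with respect to the $k_0$-row and column. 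This is more elementary: it bypasses any appeal to continuity of eigenprojections or perturbation of eigenvectors, relying only on explicit block-matrix algebra and the invertibility of $\mho_S$. The paper's approach, by contrast, makes the spectral mechanism more transparent (each eigenvalue's contribution is isolated from the outset) and would extend more readily to higher-order terms in $\eps$. Your parenthetical remark that the same limit also follows from (\ref{eigasy}) together with convergence of the rank-one spectral projection is, in compressed form, precisely the paper's argument.
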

\begin{proof}
Under the conditions of Theorems~\ref{th:asy}, \ref{th:stab}, there exists a $\delta>0$ such that for all $\eps\in [0, \delta)$, the matrix $A_\eps$ has pairwise different eigenvalues $\lambda_1(\eps), \ldots, \lambda_n(\eps)$ in  (\ref{eigasy}) and  is diagonalisable as
\begin{equation}
\label{SAS}
  \Sigma_{\eps}^{-1}  A_{\eps} \Sigma_\eps
  =
  \diag_{1\< k \< n} (\lambda_k(\eps))
  =:
  \Lambda_\eps
\end{equation}
through a nonsingular matrix  $\Sigma_\eps \in \mC^{n\x n}$ whose columns are the corresponding eigenvectors.
The matrix $\Sigma_\eps$ is determined up to a nonsingular diagonal right factor and  can be chosen so as to inherit from $A_\eps$ the infinite differentiability over $\eps \in [0, \delta)$ and to satisfy
\begin{equation}
\label{SSS}
    \Sigma_0
    =
    \lim_{\eps \to 0+} \Sigma_\eps = \Sigma,
\end{equation}
where $\Sigma$ is the matrix from (\ref{AV}).
Due to (\ref{omk0}),  the $k_0$th eigenvalue of the matrix $A_\eps$  in (\ref{eigasy}) satisfies
$$
    \lambda_{k_0}(\eps) = \eps^2 \nu_{k_0} + o(\eps^2),
    \qquad
    {\rm as}\
    \eps\to 0+,
$$
while the other eigenvalues have nonzero limits
$$
    \lim_{\eps \to 0+}
    \lambda_k(\eps) =i\omega_k \ne 0,
    \qquad
    k = 1, \ldots, n,
    \quad
    k\ne k_0,
$$
and hence,
\begin{equation}
\label{epsLam}
    \eps^2\Lambda_\eps^{-1}
    =
    \diag_{1\< k \< n}
    (\eps^2/\lambda_k(\eps))
    \to
    \diag_{1\< k \< n}
    (\delta_{k_0 k}/\nu_{k_0}),
    \qquad
    {\rm as}\
    \eps \to 0+.
\end{equation}
A combination of
$
    A_\eps^{-1} = \Sigma_{\eps} \Lambda_\eps^{-1} \Sigma_\eps^{-1}
$
from (\ref{SAS}) with (\ref{beps}), (\ref{SSS}), (\ref{epsLam}) leads to the following convergence for (\ref{mueps}):
 \begin{align}
\nonumber
  \mu_\eps
  & =
  -\eps^2\Sigma_\eps \Lambda_\eps^{-1} \Sigma_\eps^{-1} \sb\\
\nonumber
    &
    \to
    -\Sigma     \diag_{1\< k \< n}
    (\delta_{k_0 k}/\nu_{k_0}) \Sigma^{-1} \sb\\
\nonumber
    &
    =
    -\sqrt{\alpha} V\diag_{1\< k \< n}
    (\delta_{k_0 k}/\nu_{k_0}) V^* \alpha^{-1/2} \sb\\
\label{mulim1}
    &
    =
    -\frac{1}{\nu_{k_0}}
    \sqrt{\alpha} v_{k_0}v_{k_0}^*
    \alpha^{-1/2} \sb,
    \qquad
    {\rm as}\
    \eps\to 0+,
\end{align}
where the structure of the matrices $\Sigma$, $\Sigma^{-1}$ in (\ref{AV}), (\ref{invSigma}) is also used.   It now remains to note that the eigenvector $v_{k_0}$ of the matrix $A_0$,  associated with the zero eigenfrequency (\ref{omk0}), is real and hence, so also is the quantity $\nu_{k_0}$ in (\ref{nu}), whereby (\ref{mulim1}) establishes (\ref{mulim}).
\end{proof}

The limit (\ref{mulim})  depends on the coupling shape parameters $\sM$, $\sN$  only through the matrix $\sA$ in (\ref{At1}), which enters the  quantity $\nu_{k_0}$ in (\ref{nu}),   and the vector $\sb$ in (\ref{beps}). Furthermore, this limit is invariant under the scaling transformation $(\sM,\sN) \mapsto (s\sM, s\sN)$ for any $s\in \mR\setminus \{0\}$.

\section{The case of Pauli matrices as initial system variables}\label{sec:Pauli}

Suppose the quantum system has $n=3$ dynamic variables organised initially as the Pauli matrices (\ref{X123}) on the Hilbert space $\fH_0:= \mC^2$:
\begin{equation}
\label{XPauli}
    X_k(0) = \sigma_k,
    \qquad
    k = 1,2,3.
\end{equation}
Substitution of the structure constants from (\ref{alfbet}), (\ref{T123}) into 
(\ref{A0}), (\ref{At}) yields the matrix
\begin{equation}
\label{A0Pauli}
    A_0
    =
    2
    \begin{bmatrix}
      0 & -E_3 & E_2\\
      E_3 & 0 & -E_1\\
      -E_2 & E_1 & 0
    \end{bmatrix},
\end{equation}
which  is the infinitesimal generator of rotations about the vector $E:= (E_k)_{1\< k\<  3} \in \mR^3$ with the angular rate
\begin{equation}
\label{om1E}
    \omega_1 = 2|E|
\end{equation}
(the trivial case of $E=0$ is excluded from consideration),
and the matrix
\begin{equation}
\label{AtPauli}
    \wt{A}
     =
        2
        \Theta \diam (M^\rT JN)
    -
    2
    \Gamma,
    \qquad
    \Gamma
    :=
    -
    \sum_{\ell = 1}^3
    \Theta_\ell
    M^\rT
        M
            \Theta_\ell \succcurlyeq 0,
\end{equation}
where $M \in \mR^{m\x 3}$, $N \in \mR^m$ are  the system-field coupling parameters from (\ref{H_LM}).  In (\ref{A0Pauli}), use is made of
(\ref{Thetadiam}), and (\ref{AtPauli}) employs the invariance $\eps_{jk\ell}=\eps_{k\ell j} = \eps_{\ell jk}$ of the Levi-Civita symbol under cyclic permutations of its indices, whereby the sections (\ref{T123}) of the CCR array $\Theta$ satisfy
$
    \Theta_\ell = \theta_{\ell\bullet \bullet}
$ for all $\ell = 1, 2, 3$.  The matrix $A_0$ in (\ref{A0Pauli}) and the first term $2\Theta \diam  (M^\rT J N)$ in (\ref{AtPauli}) are antisymmetric, while the matrix $\Gamma$ in (\ref{AtPauli})  is symmetric, so that (\ref{A}) satisfies
\begin{equation}
\label{AA}
    A+A^\rT
    =
    A_0+A_0^\rT
    +
    \wt{A}+\wt{A}^\rT
    =
    -4\Gamma.
\end{equation}
As established in \cite[Eq. (6.4)]{VP_2022} using the special structure of (\ref{T123}), the matrix $\Gamma$ in (\ref{AtPauli}) admits the representation
\begin{equation*}
\label{MM}
    \Gamma
    =
    \|M\|_\rF^2 I_3 - M^\rT M,
\end{equation*}
which also implies that $\Gamma \succ 0$ (thus ensuring that $A$ is Hurwitz in view of (\ref{AA}))  whenever its rank satisfies
\begin{equation*}
\label{rankM}
    \mathrm{rank }M \> 2.
\end{equation*}
However, this rank condition is sufficient, but not necessary for the stability of the quantum system with the Pauli matrices (\ref{XPauli}). Since $\alpha=I_3$ for the Pauli matrices, then $\Ups = A_0$ in (\ref{simroot}) and the unitary matrix $V \in \mC^{3\x 3}$ in (\ref{VUps}) consists of the orthonormal eigenvectors of the matrix (\ref{A0Pauli}):
\begin{equation}
\label{Vvvu}
    V =
    \begin{bmatrix}
        v_1 & v_2 & v_3
    \end{bmatrix},
    \qquad
    v_{1,2} :=       \frac{1}{\sqrt{2}}(u_1 \pm iu_2),
    \qquad
    v_3:= \frac{1}{|E|}E.
\end{equation}
Here, $u_1, u_2 \in \mR^3$ form an orthonormal basis in the plane $E^{\bot}$, orthogonal to the energy vector  $E$, so that $v_1$ and $v_2=\overline{v_1}$ are the eigenvectors of $A_0$ which correspond to the nonzero eigenfrequencies $\omega_1$ and $\omega_2=-\omega_1$ in (\ref{om1E}):
$$
    A_0 v_{1,2} = \pm i\omega_1 v_{1,2},
$$
or, equivalently,
$$
    A_0
    \begin{bmatrix}
        u_1 & u_2
    \end{bmatrix}
    =
    \omega_1
    \begin{bmatrix}
        u_1 & u_2
    \end{bmatrix}
    \bJ,
$$
with $\bJ$ given by (\ref{bJ}).
Also,  $v_3$ in (\ref{Vvvu})
is a unit vector in $\mR^3$ associated with the zero eigenfrequency $\omega_3 = 0$ of the matrix (\ref{A0Pauli}), so that
$$
    A_0v_3 = 0.
$$
Accordingly,
$    \begin{bmatrix}
        u_1 & u_2 & v_3
    \end{bmatrix} \in \mR^{3\x 3}
$
is an orthogonal matrix which block-diagonalises the matrix $A_0$. Now, in the weak-coupling formulation (\ref{MNeps}), the quantities (\ref{nu}) take the form
\begin{equation}
\label{nuPauli}
  \nu_k:= v_k^* \sA v_k,
  \qquad
  k = 1, 2, 3,
\end{equation}
where
\begin{equation}
\label{At1Pauli}
    \sA
     =
        2
        \Theta \diam (\sM^\rT J\sN)
    -
    2
    \sGamma,
    \qquad
    \sGamma:=
    -
    \sum_{\ell = 1}^3
    \Theta_\ell
    \sM^\rT
        \sM
            \Theta_\ell,
\end{equation}
are expressed in terms of the coupling shape parameters $\sM \in \mR^{m\x 3}$, $\sN \in \mR^m$ in accordance with (\ref{AtPauli}). Therefore, in view of (\ref{Renu}), the real parts of (\ref{nuPauli}) depend on the matrix $\sA$ only through its symmetric part $\bS(\sA) =-2\sGamma$ in (\ref{At1Pauli}) (with         the antisymmetric part $2
        \Theta \diam (\sM^\rT J\sN)$ being irrelevant):
\begin{equation}
\label{Renu1}
    \Re \nu_1 = \Re \nu_2 = -2(\|u_1\|_{\sGamma}^2+\|u_2\|_{\sGamma}^2),
    \qquad
    \Re \nu_3 = -2 \|v_3\|_{\sGamma}^2.
\end{equation}
The closed-form expressions (\ref{om1E}), (\ref{Renu1}) can be used in the system-field coupling strength threshold (\ref{eps<<}).

\section{Interconnected systems with direct energy coupling}\label{sec:connect}

We will now apply the results of the previous sections to a decoherence control setting for an interconnection of
two quantum systems (interpreted, for example,  as a plant and a controller). These  systems are  endowed  with initial spaces $\fH_0^{(s)}$ and vectors $X^{(s)} := (X_k^{(s)})_{1\< k \< n_s}$ of time-varying self-adjoint dynamic variables $X_k^{(s)}$ with an algebraic structure as in (\ref{XXX}) along with the conditions  (\ref{herm}), (\ref{Imalpha0}):
\begin{equation}
\label{XXX12}
    \Xi_{jk}^{(s)}
    :=
    X_j^{(s)} X_k^{(s)}
    =
    \alpha_{jk}^{(s)}
    +
    \sum_{\ell=1}^{n_s}\beta_{jk\ell}^{(s)} X_\ell^{(s)},
    \qquad
    j,k=1, \ldots, n_s,
    \quad
    s = 1,2,
\end{equation}
so that, similarly to (\ref{XXX1}),
\begin{equation}
\label{Xis}
    \Xi^{(s)}
    :=
    (\Xi_{jk}^{(s)})_{1\< j,k\< n_s}
    =
    X^{(s)}X^{(s)\rT}
    =
    \alpha^{(s)} + \beta^{(s)} \cdot X^{(s)}.
\end{equation}
Here, $\alpha^{(s)} := (\alpha_{jk}^{(s)})_{1\< j,k\< n_s}$ is a real symmetric matrix,  and $\beta^{(s)}:= (\beta_{jk\ell}^{(s)})_{1\< j,k,\ell\< n_s}$   is a complex array with Hermitian sections
\begin{equation}
\label{betells}
    \beta_\ell^{(s)}
    :=
    \beta_{\bullet \bullet \ell}^{(s)}
    :=
    (\beta_{jk\ell}^{(s)})_{1\< j,k\< n_s}
    \in
    \mH_{n_s},
    \qquad
    \ell = 1, \ldots, n_s,
    \quad
    s = 1,2.
\end{equation}
In accordance with (\ref{Theta}), (\ref{Thetaell}),  the individual CCR coefficients
\begin{equation}
\label{CCRs}
    \theta_{jk\ell}^{(s)}
    :=
    \Im \beta_{jk\ell}^{(s)}
    =
    - \theta_{kj\ell}^{(s)}
\end{equation}
form arrays $\Theta^{(s)}:= (\theta_{jk\ell}^{(s)})_{1\< j,k,\ell\< n_s}$ with sections $\Theta_\ell^{(s)}:= (\theta_{jk\ell}^{(s)})_{1\< j,k\< n_s}$ (which are real antisymmetric matrices).
As operators on different spaces $\fH_0^{(s)}$, the initial variables of the two systems commute, and this cross-commutativity is preserved by their subsequent  evolution, so that
\begin{equation}
\label{X12comm}
    [X^{(1)}, X^{(2)\rT}]
    :=
    ([X_j^{(1)}, X_k^{(2)}])_{1\< j\< n_1, 1\< k\< n_2}
    = 0
\end{equation}
holds at any time $t\> 0$. Hence, the self-adjointness of the constituent system variables is  inherited by their pairwise products
\begin{equation}
\label{XXab12}
    \Xi_{jk}^{(12)}
    :=
    X_j^{(1)}
    X_k^{(2)}
    =
    X_k^{(2)}
    X_j^{(1)}
    =:
    \Xi_{kj}^{(21)},
\end{equation}
which form the matrices
\begin{align}
\label{Xi12}
    \Xi^{(12)}
    & :=
    (\Xi_{jk}^{(12)})_{1\< j\< n_1, 1\< k \< n_2}
    =
    X^{(1)}X^{(2)\rT},\\
\label{Xi21}
    \Xi^{(21)}
    & :=
    (\Xi_{jk}^{(21)})_{1\< j\< n_2, 1\< k \< n_1}
    =
    X^{(2)}X^{(1)\rT} = \Xi^{(12)\rT},
\end{align}
with the cross-commutativity (\ref{X12comm}) being essential for  (\ref{XXab12}) and   the rightmost equality in (\ref{Xi21}).
The vectorisation of these matrices leads to the following Kronecker products:
\begin{equation}
\label{X12X21}
    \vec{\Xi}^{(12)}
    =
    X^{(2)}\ox X^{(1)}
    =
    \sP
    X^{(12)},
    \qquad
    X^{(12)}
    :=
    X^{(1)}\ox X^{(2)}
    =
    \vec{\Xi}^{(21)} ,
\end{equation}
which are related by a permutation matrix  $\sP$ of order $n_1n_2$,  defined uniquely  by the condition that $\vec{K} = \sP \vect(K^\rT)$ for any $(n_1\x n_2)$-matrix $K$. The matrix $\sP$ is involutory ($\sP^2 = I_{n_1n_2}$),  and
\begin{equation}
\label{Puv}
    v \ox u
    =
    \vect(uv^\rT)
    =
    \sP
    \vect(vu^\rT)
    =
    \sP
    (u\ox v),
    \qquad
    u\in \mR^{n_1},\
    v \in \mR^{n_2},
\end{equation}
which extends to any vectors $u$, $v$ of $n_1$ and $n_2$ quantum variables, respectively,  satisfying $[u,v^\rT]=0$, with the second equality in (\ref{X12X21}) being a particular case of (\ref{Puv}). In what follows, we will use the augmented set
\begin{equation}
\label{XXXi}
    \{X_j^{(1)},\ X_k^{(2)},\ \Xi_{jk}^{(12)}:\ j = 1, \ldots, n_1,\ k = 1, \ldots, n_2\},
\end{equation}
of
\begin{equation}
\label{nnn}
    n := n_1 + n_2 + n_1n_2 = (n_1+1)(n_2+1)-1
\end{equation}
quantum variables (which are all self-adjoint) and assemble them into a vector as
\begin{equation}
\label{Xvec}
    X:=
    \begin{bmatrix}
      X^{(1)}\\
      X^{(2)}\\
      X^{(12)}
    \end{bmatrix},
\end{equation}
with $X^{(12)}$ given by (\ref{X12X21}); cf. \cite[Example~2]{EMPUJ_2016}. Note that $X$ consists of the component system variables and their pairwise products (\ref{XXab12}).

\begin{lemma}
\label{lem:alfbet}
For the self-adjoint constituent system  variables with the algebraic and commutation structure  (\ref{XXX12})--(\ref{X12comm}), the quantum variables in  (\ref{XXXi}) also have an algebraic structure in the sense that the vector $X$ in (\ref{Xvec}) satisfies (\ref{XXX1}), where the coefficients  are computed in terms of the individual structure constants.   In particular,
\begin{equation}
\label{alfcomp}
    \alpha
    =
    \begin{bmatrix}
      \alpha^{(1)} & 0 & 0\\
      0 & \alpha^{(2)} & 0\\
      0 & 0 & \alpha^{(1)}\ox \alpha^{(2)}
    \end{bmatrix}.
\end{equation}
\hfill$\square$
\end{lemma}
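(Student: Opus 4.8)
The plan is to partition the augmented vector $X$ in (\ref{Xvec}) into its three constituent blocks $X^{(1)}$, $X^{(2)}$, $X^{(12)}$ and to read off the asserted algebraic structure $\Xi := XX^\rT = \alpha\ox\cI + \beta\cdot X$ from the induced $3\x 3$ block partition of $\Xi$. Each of the nine blocks is a product of one sub-vector of $X$ with the transpose of another, so the claim reduces to showing that every entry of every block is an affine function of the entries of $X$; the coefficient of $\cI$ in each entry then assembles into the matrix $\alpha$ of (\ref{alfcomp}), and the linear coefficients form the array $\beta$. Since $\Xi^\dagger = \Xi$ by (\ref{XiXi}) (the entries being self-adjoint in view of the cross-commutativity (\ref{X12comm})), it suffices to treat the six blocks on and above the diagonal. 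The key external inputs are the constituent relations (\ref{Xis}) and the cross-commutativity (\ref{X12comm}).

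I would dispose of the easy blocks first. The two diagonal blocks $X^{(s)}X^{(s)\rT} = \Xi^{(s)} = \alpha^{(s)} + \beta^{(s)}\cdot X^{(s)}$, $s=1,2$, are affine by (\ref{Xis}), with $\cI$-parts $\alpha^{(1)}$ and $\alpha^{(2)}$ filling the first two diagonal blocks of $\alpha$. The block $X^{(1)}X^{(2)\rT}$ is, by (\ref{XXab12})--(\ref{X12X21}), exactly the matrix $\Xi^{(12)}$ whose entries are the group-three variables themselves, hence linear in $X$ with no $\cI$-term. For the two mixed blocks pairing a single variable with a product, each entry is a triple product; for instance the $(k,(p,q))$ entry of $X^{(2)}X^{(12)\rT}$ is $X_k^{(2)}\Xi_{pq}^{(12)} = X_k^{(2)}X_p^{(1)}X_q^{(2)}$, which, after using (\ref{X12comm}) to bring the two subsystem-$2$ factors together and applying (\ref{Xis}), collapses to $\alpha_{kq}^{(2)}X_p^{(1)} + \sum_\ell \beta_{kq\ell}^{(2)}\Xi_{p\ell}^{(12)}$, again affine in $X$ and carrying no constant term; the block $X^{(1)}X^{(12)\rT}$ is handled identically using (\ref{Xis}) for subsystem $1$. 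This already shows that only the three diagonal blocks of $\alpha$ can be nonzero.

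The main obstacle is the $(3,3)$ block $X^{(12)}X^{(12)\rT}$, the only place where a genuinely quartic product arises. Its $((p,q),(r,s))$ entry is
\[
\Xi_{pq}^{(12)}\Xi_{rs}^{(12)}
= X_p^{(1)}X_q^{(2)}X_r^{(1)}X_s^{(2)}
= (X_p^{(1)}X_r^{(1)})(X_q^{(2)}X_s^{(2)}),
\]
where the middle pair has been commuted by (\ref{X12comm}). Applying (\ref{Xis}) to each factor and expanding, the constant term is $\alpha_{pr}^{(1)}\alpha_{qs}^{(2)}$ — precisely the $((p,q),(r,s))$ entry of $\alpha^{(1)}\ox\alpha^{(2)}$, which fills the third diagonal block of $\alpha$ — while the cross term contributes $\sum_{\ell,m}\beta_{pr\ell}^{(1)}\beta_{qsm}^{(2)}X_\ell^{(1)}X_m^{(2)}$. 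The crucial closure step, which is exactly what the augmentation of the variable set in (\ref{XXXi}) was designed to provide, is that each residual bilinear $X_\ell^{(1)}X_m^{(2)}$ equals the group-three variable $\Xi_{\ell m}^{(12)}$, a component of $X$; hence the whole entry is affine in $X$. Reading the $\cI$-coefficients off all blocks yields (\ref{alfcomp}), and the accumulated linear coefficients determine $\beta$, so the augmented variables satisfy (\ref{XXX1}) with coefficients expressed through the individual structure constants. I expect the only nonroutine point to be this reduction of the quartic $(3,3)$ block, where both constituent algebras and the cross-commutativity must be combined and the closure identity invoked.
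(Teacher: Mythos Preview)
Your proposal is correct and follows essentially the same approach as the paper's own proof: both partition $\Xi = XX^\rT$ into the $3\times 3$ block form, invoke $\Xi^\dagger=\Xi$ to reduce to the upper-triangular blocks, and then verify affinity block by block using the constituent relations (\ref{Xis}) and the cross-commutativity (\ref{X12comm}), with the $(3,3)$ block handled via $(X_p^{(1)}X_r^{(1)})(X_q^{(2)}X_s^{(2)}) = \Xi_{pr}^{(1)}\Xi_{qs}^{(2)}$ to extract the constant term $\alpha^{(1)}\ox\alpha^{(2)}$. The only difference is cosmetic: the paper carries out the same computations in matrix/Kronecker-product form (e.g., $X^{(12)}X^{(12)\rT} = \Xi^{(1)}\ox\Xi^{(2)}$), whereas you work entrywise; your expansion of the $(3,3)$ block also omits explicit mention of the two terms linear in $X^{(1)}$ and $X^{(2)}$, but these are trivially affine and do not affect the argument.
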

\begin{proof}
In application to the vector $X$ in (\ref{Xvec}), the matrix $\Xi$, defined by the first two equalities in (\ref{XXX1}), takes the form
\begin{equation}
\label{Xicomp}
    \Xi
    :=
    XX^\rT
    =
    \begin{bmatrix}
      \Xi^{(1)} & \Xi^{(12)} & X^{(1)}X^{(12)\rT}\\
      \Xi^{(21)} & \Xi^{(2)} &  X^{(2)}X^{(12)\rT}\\
      X^{(12)}X^{(1)\rT} & X^{(12)}X^{(2)\rT}& X^{(12)}X^{(12)\rT}
    \end{bmatrix}.
\end{equation}
In view of (\ref{Xis}), the matrix $\Xi^{(s)}$ is an affine function of $X^{(s)}$ for each $s = 1,2$.  The matrix $\Xi^{(12)}$ in (\ref{Xi12}) consists of the entries (\ref{XXab12}) of the vector $X^{(12)}$ from (\ref{X12X21}),  and hence, is a linear function of $X^{(12)}$. A linear dependence on $X^{(12)}$ is inherited from $\Xi^{(12)}$ by the matrix $\Xi^{(21)}$ in (\ref{Xi21}). We will now prove that the other five blocks of the matrix $\Xi$  in (\ref{Xicomp}) are also affine functions of $X$, which, due to (\ref{Xi+}),  reduces to establishing an affine dependence on $X$ for the remaining upper off-diagonal blocks $X^{(1)}X^{(12)\rT}$, $X^{(2)}X^{(12)\rT}$ and the diagonal block $X^{(12)}X^{(12)\rT}$. To this end, note that the matrix
\begin{align}
\nonumber
    X^{(1)}X^{(12)\rT}
    & =
    X^{(1)}(X^{(1)\rT}\ox X^{(2)\rT})\\
\nonumber
    & =
    \Xi^{(1)}\ox X^{(2)\rT}    \\
\nonumber
    & =
    (\alpha^{(1)} + \beta^{(1)}\cdot X^{(1)})\ox X^{(2)\rT}    \\
\label{X1X12}
    & =
    \alpha^{(1)} \ox X^{(2)\rT}
    +
    \sum_{\ell=1}^{n_1}
    \beta_\ell^{(1)}
    \ox
    (X_\ell^{(1)}X^{(2)\rT})
\end{align}
is a linear function of $X^{(2)}$, $X^{(12)}$, where (\ref{Xis}) is used.
In a similar fashion, a combination of (\ref{X12X21}) with (\ref{Xis}) leads to
\begin{align}
\nonumber
    X^{(2)}X^{(12)\rT}
    & =
    X^{(2)}(X^{(2)\rT}\ox X^{(1)\rT})\sP^\rT \\
\nonumber
    & =
    (\Xi^{(2)}\ox X^{(1)\rT})\sP^\rT    \\
\nonumber
    & =
    ((\alpha^{(2)} + \beta^{(2)}\cdot X^{(2)})\ox X^{(1)\rT})\sP^\rT    \\
\label{X2X12}
    & =
    \Big(
        \alpha^{(2)} \ox X^{(1)\rT}
    +
    \sum_{\ell=1}^{n_2}
    \beta_\ell^{(2)}
    \ox
    (X_\ell^{(2)}X^{(1)\rT})
    \Big)\sP^\rT,
\end{align}
which is a linear function of $X^{(1)}$, $X^{(12)}$. By using the cross-commutativity (\ref{X12comm}) together with  (\ref{Xis}), it follows that
\begin{align}
\nonumber
    X^{(12)}X^{(12)\rT}
    = &
    (X^{(1)}\ox X^{(2)}) (X^{(1)\rT}\ox X^{(2)\rT})\\
\nonumber
    = &
    \Xi^{(1)}\ox \Xi^{(2)}\\
\nonumber
    = &
    (\alpha^{(1)} + \beta^{(1)}\cdot X^{(1)})
    \ox
    (\alpha^{(2)} + \beta^{(2)}\cdot X^{(2)})\\
\nonumber
    = &
    \alpha^{(1)}
    \ox
    \alpha^{(2)} \\
\nonumber
    & +
    \sum_{j=1}^{n_1}
    (\beta_j^{(1)} \ox \alpha^{(2)})
    X_j^{(1)}
    +
    \sum_{k=1}^{n_2}
    (\alpha^{(1)}
    \ox
    \beta_k^{(2)})
    X_k^{(2)}\\
\label{X12X12}
    & +
    \sum_{j=1}^{n_1}
    \sum_{k=1}^{n_2}
    (\beta_j^{(1)} \ox \beta_k^{(2)})
    \Xi_{jk}^{(12)}
\end{align}
is an affine function of $X$ from (\ref{Xvec}). The relations (\ref{X1X12})--(\ref{X12X12}) complete the verification of the property that the matrix $\Xi$ in (\ref{Xicomp}) is an affine function of $X$, which can therefore be represented by the right-hand side of (\ref{XXX1}). Since the off-diagonal blocks of $\Xi$ are linear functions of $X$, then only its diagonal blocks $\Xi^{(1)}$, $\Xi^{(2)}$, $X^{(12)}X^{(12)\rT}$ contribute to the constant term $\alpha$ in (\ref{XXX1}), which takes the form of (\ref{alfcomp}) and  is a real symmetric matrix (inheriting these properties from $\alpha^{(1)}$, $\alpha^{(2)}$).
\end{proof}

Complementing (\ref{alfcomp}) of Lemma~\ref{lem:alfbet}, the other structure constants for the vector $X$ in (\ref{Xvec}), which form the array $\beta$ in (\ref{XXX1}), can be recovered from the representations (\ref{Xis}), (\ref{X1X12})--(\ref{X12X12}) for the blocks of the matrix $\Xi$ in (\ref{Xicomp}).  For example, (\ref{X12X12}) allows the corresponding diagonal blocks of the sections $\beta_\ell$ to be identified with the Hermitian matrices $\beta_j^{(1)}\ox \alpha^{(2)}$, $\alpha^{(1)}\ox \beta_k^{(2)}$, $\beta_j^{(1)}\ox \beta_k^{(2)}$ involving the sections (\ref{betells}). Their imaginary parts are real antisymmetric matrices $\Theta_j^{(1)}\ox \alpha^{(2)}$, $\alpha^{(1)} \ox \Theta_k^{(2)}$,   $\Theta_j^{(1)}\ox  \Re \beta_k^{(2)} +  \Re \beta_j^{(1)}\ox \Theta_k^{(2)}$, which use (\ref{CCRs}) and specify the CCR coefficients for $X^{(12)}$ as
\begin{align}
\nonumber
    [X^{(12)}, X^{(12)\rT}]
    = &
    2i
    \Big(
    \sum_{j=1}^{n_1}
    (\Theta_j^{(1)} \ox \alpha^{(2)})
    X_j^{(1)}
    +
    \sum_{k=1}^{n_2}
    (\alpha^{(1)}
    \ox
    \Theta_k^{(2)})
    X_k^{(2)}\\
\label{X12CCR}
    & +
    \sum_{j=1}^{n_1}
    \sum_{k=1}^{n_2}
    (\Theta_j^{(1)}\ox  \Re \beta_k^{(2)} +  \Re \beta_j^{(1)}\ox \Theta_k^{(2)})
    \Xi_{jk}^{(12)}
    \Big).
\end{align}

Now, the two systems under consideration are directly coupled through the total Hamiltonian
\begin{equation}
\label{HHH}
    H
    =
    H_1 + H_2
    +
    H_{12},
\end{equation}
which is the sum of the individual Hamiltonians
\begin{equation}
\label{Hs}
    H_s
    :=
    E^{(s)\rT} X^{(s)}
    =
    \sum_{k=1}^{n_s}
    E_k^{(s)}X_k^{(s)},
    \qquad
    s = 1, 2,
\end{equation}
of the systems with the energy vectors $E^{(s)}:= (E_k^{(s)})_{1\< k \< n_s} \in \mR^{n_s}$ and the interaction Hamiltonian
\begin{equation}
\label{H12}
    H_{12}
    :=
    \sum_{j=1}^{n_1}
    \sum_{k=1}^{n_2}
    E_{jk}^{(12)}
    \Xi_{jk}^{(12)}
    =
    X^{(1)\rT} E^{(12)} X^{(2)}
    =
    \vec{E}^{(21)\rT}
    X^{(12)}
\end{equation}
parameterised by a matrix $E^{(12)}:= (E_{jk}^{(12)})_{1\< j\< n_1, 1\< k\< n_2} \in \mR^{n_1\x n_2}$. Here, $\vec{E}^{(21)} \in \mR^{n_1n_2}$ is the vectorisation of the matrix
$    E^{(21)}:= E^{(12)\rT} \in \mR^{n_2\x n_1}$, and use is made of  $X^{(12)}$ from  (\ref{X12X21}). Therefore,
the Hamiltonian $H$, defined by (\ref{HHH})--(\ref{H12}),   is a linear function of $X$ from (\ref{Xvec}),
which can be  described by the first equality in (\ref{H_LM}) with an augmented energy vector $E$:
\begin{equation}
\label{HEX}
    H
    =
    E^\rT X,
    \qquad
    E:=
    \begin{bmatrix}
      E^{(1)}\\
      E^{(2)}\\
      \vec{E}^{(21)}
    \end{bmatrix}
    \in \mR^n.
\end{equation}

In addition to the direct energy coupling, the systems interact with external bosonic fields modelled by
multichannel quantum Wiener processes  $W^{(s)}$ of even dimensions $m_s$ on symmetric Fock spaces $\fF_s$, $s=1,2$; see
Fig.~\ref{fig:system}.
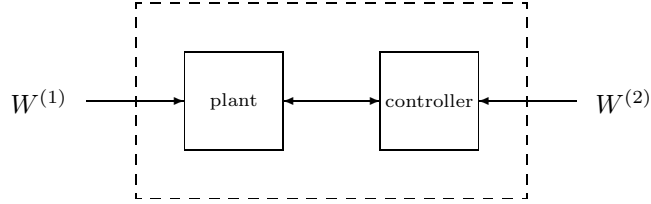
\begin{figure}[htbp]
\centering
\unitlength=1.3mm
\linethickness{0.2pt}
\begin{picture}(50.00,23.00)
    \put(5,0){\dashbox(40,20)[cc]{}}
    \put(10,5){\framebox(10,10)[cc]{\scriptsize plant}}
    \put(30,5){\framebox(10,10)[cc]{\scriptsize controller}}
    \put(0,10){\vector(1,0){10}}
    \put(50,10){\vector(-1,0){10}}


    \put(20,10){\vector(1,0){10}}
    \put(30,10){\vector(-1,0){10}}
    \put(-2,10){\makebox(0,0)[rc]{$W^{(1)}$}}
    \put(52,10){\makebox(0,0)[lc]{$W^{(2)}$}}
\end{picture}
\caption{
    An interconnection of two quantum systems (a plant and a controller), which have external input quantum Wiener processes  $W^{(1)}$, $W^{(2)}$  and interact with each other through a direct  energy coupling shown as a double arrow.
}
\label{fig:system}
\end{figure}
These processes form an augmented quantum Wiener process
$    W
    :=
    (W^{(s)})_{s=1,2}
$
of dimension $
    m:= m_1+m_2$ on the composite Fock space $\fF:= \fF_1\ox \fF_2$ and, similarly to (\ref{dWdW_Omega_J_bJ}), have the Ito tables
\begin{equation}
\label{wwww}
    \rd W^{(s)}\rd W^{(s)\rT} = \Omega_s \rd t,
    \qquad
    \rd W \rd W^{\rT} = \Omega \rd t   ,
    \qquad
    s = 1, 2,
\end{equation}
with the quantum Ito matrices $\Omega_s$, $\Omega$ given by
\begin{align}
\nonumber
    \Omega_s
    & := I_{m_s} + iJ_s,
    \qquad
    J_s:= \bJ \ox I_{m_s/2},\\
\label{Om12}
    \Omega
    & :=
    \blockdiag_{s=1,2}(\Omega_s)
    =
    I_m + iJ,
    \qquad
    J:=
    \blockdiag_{s=1,2}(J_s)    .
\end{align}
The fully quantum system interconnection in Fig.~\ref{fig:system}, driven by the external fields,  is endowed with the system-field space $\fH:= \fH_0 \ox \fF$, where  $\fH_0:= \fH_1 \ox \fH_2$ is the initial space of the composite system.
It is assumed that the operators $L_1^{(s)}, \ldots, L_{m_s}^{(s)}$ of coupling of the $s$th constituent system to the corresponding input field $W^{(s)}$ are not affected by the direct coupling between the systems and, similarly to the second equality in  (\ref{H_LM}),   are affine functions of the dynamic  variables of the $s$th system:
\begin{equation}
\label{LMNs}
    L^{(s)}
    :=
    (L_k^{(s)})_{1\< k \< m_s}
     =
     M^{(s)}X^{(s)} + N^{(s)},
     \qquad
     s = 1,2,
\end{equation}
where $M^{(s)} \in \mR^{m_s\x n_s}$, $N^{(s)} \in \mR^{m_s}$. Accordingly, the vector of $m$ operators of coupling of the composite system to the augmented quantum Wiener process $W$ is related to (\ref{Xvec}) by
\begin{equation}
\label{LMN}
    L
    :=
    \begin{bmatrix}
      L^{(1)}\\
      L^{(2)}
    \end{bmatrix}
    =
    MX + N,
    \qquad
    M:=
    \begin{bmatrix}
      M^{(1)} & 0 & 0\\
      0 & M^{(2)} & 0
    \end{bmatrix},
    \qquad
    N
    :=
    \begin{bmatrix}
      N^{(1)}\\
      N^{(2)}
    \end{bmatrix}    .
\end{equation}
The relations (\ref{HEX}), (\ref{LMNs}), (\ref{LMN}) specify the energetics of the quantum system interconnection and its interaction with the external fields in Fig.~\ref{fig:system}. Together with the algebraic and commutatation structure of the system and input field  variables in (\ref{XXX12})--(\ref{X12comm}), (\ref{wwww}), (\ref{Om12}),  these relations make Theorem~\ref{th:QSDE} applicable to the composite system, thus leading to a  set of quasilinear QSDEs provided below. In the absence of direct coupling between the constituent systems, Theorem~\ref{th:QSDE} would also apply to  each of them, resulting in a quasilinear QSDE of type (\ref{dX1}) with the following matrix $A^{(s)} \in \mR^{n_s\x n_s}$, vector $b^{(s)} \in \mR^{n_s}$ and $(n_s\x m_s)$-matrix $B^{(s)}(X^{(s)})$ of self-adjoint operators (depending linearly on $X^{(s)}$):
\begin{align}
\nonumber
    A^{(s)}
     := &
        2
        \Theta^{(s)} \diam (E^{(s)} + M^{(s)\rT} J_sN^{(s)})\\
\label{As}
    & +
    2
    \sum_{\ell = 1}^{n_s}
    \Theta_\ell^{(s)}
    M^{(s)\rT}
    (
        M^{(s)}\theta_{\ell\bullet \bullet}^{(s)}
        +
        J_s M^{(s)}\Re \beta_{\ell\bullet \bullet}^{(s)}
    ),\\
\label{bs}
    b^{(s)}
    := &
    2
    \sum_{\ell = 1}^{n_s}
    \Theta_\ell^{(s)}
    M^{(s)\rT}
    J_sM^{(s)}\alpha_{\bullet \ell}^{(s)},\\
\label{BXs}
    B^{(s)}(X^{(s)})
     := &
     2(\Theta^{(s)} \cdot X^{(s)})M^{(s)\rT},
     \qquad
     s = 1,2,
\end{align}
which  are obtained by applying (\ref{A})--(\ref{BX}) to the uncoupled systems.
We will also use auxiliary matrices
\begin{align}
\label{F1}
    F_1
    & :=
    2
        \begin{bmatrix}
          \Theta_1^{(1)}E^{(12)}
          & \ldots &
          \Theta_{n_1}^{(1)}E^{(12)}
        \end{bmatrix}
        =
    2\Delta_1
    (I_{n_1}\ox E^{(12)}),\\
\label{F2}
    F_2
    & :=
    2
        \begin{bmatrix}
          \Theta_1^{(2)}E^{(21)}
          & \ldots &
          \Theta_{n_2}^{(2)}E^{(21)}
        \end{bmatrix}
        \sP
    =
        2\Delta_2
    (I_{n_2}\ox E^{(21)})
    \sP,
\end{align}
where the matrices
\begin{equation}
\label{Deltas}
    \Delta_s
    :=
        \begin{bmatrix}
          \Theta_1^{(s)} & \ldots &  \Theta_{n_s}^{(s)}
        \end{bmatrix},
        \qquad
        s = 1,2,
\end{equation}
are formed from the sections of the CCR arrays $\Theta^{(s)}$ in (\ref{CCRs}), similarly to (\ref{Delta}).
Also, use will be made of the matrices
\begin{align}
\label{G1}
      G_1
      & :=
    2
    \begin{bmatrix}
        (\Theta_1^{(1)} \ox \alpha^{(2)})\vec{E}^{(21)}
        &
        \ldots
        &
        (\Theta_{n_1}^{(1)} \ox \alpha^{(2)})\vec{E}^{(21)}
    \end{bmatrix},\\
\label{G2}
      G_2
      & :=
    2
    \begin{bmatrix}
        (\alpha^{(1)}\ox\Theta_1^{(2)})\vec{E}^{(21)}
        &
        \ldots
        &
        (\alpha^{(1)}\ox\Theta_{n_2}^{(2)})\vec{E}^{(21)}
    \end{bmatrix},    \\
\label{G12}
      G_{12}
      & :=
    2
    \begin{bmatrix}
        \gamma_1
        &
        \ldots
        &
        \gamma_{n_1}
    \end{bmatrix},
    \qquad
    \gamma_j
    :=
    \begin{bmatrix}
        \gamma_{j1}
        &
        \ldots
        &
        \gamma_{jn_2}
    \end{bmatrix},
    \qquad
    j = 1, \ldots, n_1,
\end{align}
where the blocks $\gamma_j \in \mR^{n_1n_2\x n_2}$ of the matrix $G_{12}$ consist of the following columns:
\begin{equation}
\label{Rebet}
  \gamma_{jk}
  :=
    (\Theta_j^{(1)}\ox  \Re \beta_k^{(2)} +  \Re \beta_j^{(1)}\ox \Theta_k^{(2)})
    \vec{E}^{(21)} ,
    \qquad
        k = 1, \ldots, n_2,
\end{equation}
which involve the sections (\ref{betells}) along with their real and imaginary parts and the direct coupling matrix from (\ref{H12}).

\begin{theorem}
\label{th:comp}
The direct energy coupling of the quantum systems with the algebraic structure (\ref{XXX12})--(\ref{X12comm}), the augmented system variables in (\ref{Xvec}), the external quantum fields in (\ref{wwww}), (\ref{Om12}), the  Hamiltonian (\ref{HEX}) and the system-field coupling  (\ref{LMNs}), (\ref{LMN}),  is governed by the QSDE (\ref{dX1}), where
\begin{align}
\label{Acomp}
    A
    & =
    \begin{bmatrix}
      A^{(1)} & 0 & F_1\\
      0 & A^{(2)} & F_2\\
    \sP (b^{(2)}\ox I_{n_1}) + G_1
    &
    b^{(1)}\ox I_{n_2} + G_2
    &
    A^{(1)} \op A^{(2)} + G_{12}
    \end{bmatrix},\\
\label{bcomp}
    b
    & =
    \begin{bmatrix}
      b^{(1)}\\
      b^{(2)}\\
      0
    \end{bmatrix},\\
\label{Bcomp}
    B(X)
    & =
    \begin{bmatrix}
      B^{(1)}(X^{(1)}) & 0\\
      0 & B^{(2)}(X^{(2)})\\
      \fB_1(X^{(12)}) & \fB_2(X^{(12)})
    \end{bmatrix}.
\end{align}
Here,  (\ref{As})--(\ref{BXs}) are used along with the matrices from (\ref{F1}), (\ref{F2}), (\ref{G1})--(\ref{G12}). Also, $\fB_s$ are linear maps acting on an $(n_1n_2)$-dimensional vector $\xi$ and producing $(n_1n_2\x m_s)$-matrices as
\begin{align}
\label{fB1def}
    \fB_1(\xi)
    & :=
    2
    \begin{bmatrix}
      (\theta_{\bullet 1 \bullet}^{(1)}\ox I_{n_2})\xi &
      \ldots &
      (\theta_{\bullet n_1 \bullet}^{(1)}\ox I_{n_2}) \xi
    \end{bmatrix}
    M^{(1)\rT},\\
\label{fB2def}
    \fB_2(\xi)
    & :=
    2
    \begin{bmatrix}
      (I_{n_1}\ox \theta_{\bullet 1 \bullet}^{(2)})\xi &
      \ldots &
      (I_{n_1}\ox \theta_{\bullet n_2 \bullet}^{(2)}) \xi
    \end{bmatrix}
    M^{(2)\rT},
\end{align}
using the system-field coupling matrices and appropriate sections of the CCR arrays.
\hfill$\square$
\end{theorem}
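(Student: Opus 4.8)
The plan is to check that the augmented system meets the hypotheses of Theorem~\ref{th:QSDE} and then to read off the coefficients $A$, $b$, $B(X)$ block by block, exploiting the product structure $X^{(12)} = X^{(1)}\ox X^{(2)}$ rather than substituting blindly into (\ref{A})--(\ref{BX}). First I would record that the hypotheses hold: the algebraic structure (\ref{XXX1}) for the vector $X$ in (\ref{Xvec}) is furnished by Lemma~\ref{lem:alfbet} (with $\alpha$ as in (\ref{alfcomp}) and the sections $\beta_\ell$ read off from (\ref{Xis}), (\ref{X1X12})--(\ref{X12X12})), the Hamiltonian (\ref{HEX}) is linear in $X$, and the coupling vector (\ref{LMN}) is affine in $X$. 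Hence Theorem~\ref{th:QSDE} already guarantees that $X$ obeys the QSDE (\ref{dX1}); only the coefficients remain. The decisive simplifications are the additive splitting $H = H_1 + H_2 + H_{12}$ of (\ref{HHH}), the cross-commutativity (\ref{X12comm}), the independence of the two noises (block-diagonal $\Omega$ in (\ref{Om12})), and the block-diagonal, $X^{(12)}$-free form of $M$ in (\ref{LMN}). Together these make the generator $\cG$ of (\ref{cG}) act on products $X_j^{(1)} X_k^{(2)}$ by a quantum Ito product rule in which the mixed Ito correction vanishes, so that $\cG(X_j^{(1)} X_k^{(2)}) = \cG_1(X_j^{(1)})\,X_k^{(2)} + X_j^{(1)}\,\cG_2(X_k^{(2)}) + i[H_{12}, X_j^{(1)} X_k^{(2)}]$, with $\cG_s$ the single-subsystem generator.

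For the $X^{(1)}$- and $X^{(2)}$-equations I would argue as follows. The decoherence superoperator for $X^{(s)}$ sees only $L^{(s)}$ (since $L^{(3-s)}$ commutes with $X^{(s)}$ and the channels are independent), and $i[H_{3-s}, X^{(s)}] = 0$, so the diagonal drift and dispersion reproduce $A^{(s)}$, $b^{(s)}$, $B^{(s)}(X^{(s)})$ of (\ref{As})--(\ref{BXs}) exactly as for the uncoupled subsystem. The only new contribution is $i[H_{12}, X^{(s)}]$, which by $H_{12} = X^{(1)\rT} E^{(12)} X^{(2)}$ and the one-point CCRs of subsystem $s$ becomes a linear function of $X^{(12)}$; collecting its coefficients produces precisely the off-diagonal blocks $F_1$, $F_2$ of (\ref{F1}), (\ref{F2}). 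Since $L$ carries no $X^{(12)}$-dependence, the field coupling of $X^{(s)}$ is unchanged, giving the $(1,1)$ and $(2,2)$ blocks of (\ref{Bcomp}) and no $X^{(s)}$-$X^{(12)}$ mixing.

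The $X^{(12)}$-equation is where the real work lies. The product rule contributes $\cG_1(X_j^{(1)})X_k^{(2)} + X_j^{(1)}\cG_2(X_k^{(2)})$, whose linear-in-$X^{(12)}$ parts assemble into the Kronecker sum $A^{(1)}\op A^{(2)}$ (the $(3,3)$ diagonal), while the inhomogeneous parts $b_j^{(1)} X_k^{(2)}$ and $b_k^{(2)} X_j^{(1)}$ reappear as the blocks $b^{(1)}\ox I_{n_2}$ (in position $(3,2)$) and $\sP(b^{(2)}\ox I_{n_1})$ (in position $(3,1)$), the permutation $\sP$ of (\ref{X12X21}) encoding the convention $X^{(12)} = X^{(1)}\ox X^{(2)}$. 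The interaction term $i[H_{12}, X^{(12)}]$ I would evaluate through the two-point CCRs (\ref{X12CCR}) of $X^{(12)}$: contracting $\vec{E}^{(21)}$ against the $X^{(1)}$-, $X^{(2)}$- and $\Xi^{(12)}$-linear parts of (\ref{X12CCR}) yields exactly the correction matrices $G_1$, $G_2$, $G_{12}$ of (\ref{G1})--(\ref{Rebet}), which therefore land in the $(3,1)$, $(3,2)$, $(3,3)$ blocks respectively. Finally the dispersion blocks $\fB_1$, $\fB_2$ of (\ref{fB1def}), (\ref{fB2def}) come from $-i[X^{(12)}, L^\rT]$, i.e.\ from the cross-commutators $[X^{(12)}, X^{(s)\rT}]$ which, by $X^{(12)} = X^{(1)}\ox X^{(2)}$ and (\ref{X12comm}), reduce to linear functions of $X^{(12)}$ built from the CCR sections $\Theta_\ell^{(s)}$ and post-multiplied by $M^{(s)\rT}$.

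The main obstacle I anticipate is the consistent bookkeeping of the bottom block-row: the three independent sources (the Kronecker-sum drift from the product rule, the inhomogeneous shifts $b^{(s)}$, and the interaction-energy corrections $G_\bullet$ coming from (\ref{X12CCR})) must be correctly aligned with the index ranges of $X^{(12)}$ and with the ordering permutation $\sP$, without double counting and with the right real/imaginary splitting of the sections $\beta_k^{(s)}$ (note that $G_{12}$ in (\ref{Rebet}) mixes $\Theta_\bullet^{(s)}$ with $\Re\beta_\bullet^{(3-s)}$, reflecting the Hermitian, not merely antisymmetric, sections entering (\ref{X12CCR})). A clean final check is that setting $E^{(12)} = 0$ must kill $F_1, F_2, G_1, G_2, G_{12}$ and decouple the system into two independent copies of (\ref{dX1}), with the $X^{(12)}$-equation collapsing to the free product dynamics governed by $A^{(1)}\op A^{(2)}$.
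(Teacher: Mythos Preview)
Your proposal is correct and follows essentially the same route as the paper: decompose the generator as $\cG = \cG_1 + \cG_2 + i[H_{12},\cdot]$, use cross-commutativity to reduce the $X^{(s)}$-equations to the uncoupled ones plus the $F_s$-term, and then obtain the $X^{(12)}$-equation via the quantum Ito product rule with vanishing mixed correction, reading off $G_1,G_2,G_{12}$ from (\ref{X12CCR}) and $\fB_1,\fB_2$ from $-i[X^{(12)},L^{(s)\rT}]$. The only cosmetic difference is that you invoke Theorem~\ref{th:QSDE} up front to guarantee the quasilinear form before identifying coefficients, whereas the paper assembles the QSDE directly from (\ref{dXs1}) and (\ref{dX12new}); the computations and bookkeeping are otherwise identical.
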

\begin{proof}
Since the quantum Ito matrix $\Omega$ in (\ref{Om12}) is block-diagonal and the vector $L$ in (\ref{LMN}) consists of the subvectors $L^{(s)}$, the decoherence superoperator $\cD$ in (\ref{cD}) can be  decomposed into the sum
\begin{equation}
\label{DDD}
    \cD = \cD_1 + \cD_2
\end{equation}
of the decoherence superoperators associated with the constituent systems and acting as
\begin{equation}
\label{cDs}
    \cD_s(\xi)
    :=
     \frac{1}{2}
    ([L^{(s)\rT},\xi]\Omega_s L^{(s)} + L^{(s)\rT} \Omega_s [\xi,L^{(s)}]),
    \qquad
    s = 1,2.
\end{equation}
A similar yet different decomposition holds for the generator $\cG$ of the composite system in (\ref{cG}):
\begin{equation}
\label{GGG}
  \cG  = \cG_1 + \cG_2 + i[H_{12}, \cdot],
\end{equation}
where use is made of (\ref{HHH}), (\ref{DDD}), and
\begin{equation}
\label{cGs}
\cG_s
   := i[H_s,\cdot]
     +
     \cD_s,
     \qquad
     s = 1,2
\end{equation}
are the individual generators.
From the cross-commutativity (\ref{X12comm}) and the fact that $L^{(s)}$ in (\ref{LMNs}) depends only on $X^{(s)}$, it follows  that $[X^{(3-s)}, L^{(s)\rT}] = 0$ for any $s=1,2$. Hence, the individual decoherence superoperators in (\ref{cDs}) satisfy $    \cD_s(X^{(3-s)}) = 0$, and so also do the generators in (\ref{cGs}) since $H_s$ in (\ref{Hs}) depends only on $X^{(s)}$:
\begin{equation}
\label{cG12}
    \cG_s(X^{(3-s)})
    =
    i[H_s, X^{(3-s)}] + \cD_s(X^{(3-s)})
    =
    0,
    \qquad
    s = 1,2.
\end{equation}
By a similar reasoning,
\begin{equation}
\label{BBB}
    [X^{(1)}, L^\rT]
    =
    \begin{bmatrix}
      [X^{(1)}, L^{(1)\rT}] & 0
    \end{bmatrix},
    \qquad
    [X^{(2)}, L^\rT]
    =
    \begin{bmatrix}
      0 & [X^{(2)}, L^{(2)\rT}]
    \end{bmatrix}.
\end{equation}
A combination of (\ref{GGG})--(\ref{BBB}) with (\ref{dX}) and the fact that $W$ consists of $W^{(1)}$, $W^{(2)}$ as subvectors  leads to the QSDEs
\begin{align}
\nonumber
  \rd X^{(s)}
  & =
  \cG(X^{(s)})\rd t - i[X^{(s)}, L^\rT]\rd W \\
\nonumber
    & =
    (\cG_s(X^{(s)}) + i[H_{12}, X^{(s)}])\rd t - i[X^{(s)}, L^{(s)\rT}]\rd W^{(s)}\\
\label{dXs}
    & =
    (A^{(s)}X^{(s)} + b^{(s)}+i[H_{12}, X^{(s)}]) \rd t + B^{(s)}(X^{(s)})\rd W^{(s)},
    \qquad
    s = 1,2,
\end{align}
where use is made of (\ref{As})--(\ref{BXs}) which, by Theorem~\ref{th:QSDE},  specify the individual generators and the dispersion matrices applied to the corresponding system variables:
$$
    \cG_s(X^{(s)})
    =
    A^{(s)}X^{(s)} + b^{(s)},
    \qquad
    -i[X^{(s)}, L^{(s)\rT}] = B^{(s)}(X^{(s)}).
$$
The direct coupling between the systems enters the QSDEs (\ref{dXs}) only through the terms  $i[H_{12}, X^{(s)}]$ which are computed as follows. A combination of (\ref{H12}),  the antisymmetry  and derivation properties of the commutator with the cross-commutativity (\ref{X12comm}) yields
\begin{align}
\nonumber
     i[H_{12}, X^{(1)}]
     & =
     -i[X^{(1)}, H_{12}]\\
\nonumber
     & =
     -i[X^{(1)}, X^{(1)\rT}] E^{(12)} X^{(2)}\\
\nonumber
     & =
     2(\Theta^{(1)}\cdot X^{(1)})E^{(12)} X^{(2)}\\
\label{iH12X1}
     & =
    2
    \Delta_1
    (X^{(1)}\ox (E^{(12)}X^{(2)}))
    =
     F_1 X^{(12)},
\end{align}
with $X^{(12)}$,  $F_1$ given by (\ref{X12X21}), (\ref{F1}),
where use is also made of (\ref{XCCRTheta}) (applied to the first constituent system), the second equality from (\ref{cdotdiam}) and the sections of the CCR array $\Theta^{(1)}$ in (\ref{CCRs}) assembled into the matrix $\Delta_1$ in (\ref{Deltas}). In a similar fashion,
\begin{equation}
\label{iH12X2}
     i[H_{12}, X^{(2)}]
     =
    2
    \Delta_2
    (I_{n_2}\ox E^{(21)})
    (X^{(2)}\ox X^{(1)})=
     F_2
    X^{(12)},
\end{equation}
where (\ref{X12X21}) is also used,  and $F_2$ is given by (\ref{F2}).
Substitution of (\ref{iH12X1}), (\ref{iH12X2}) into (\ref{dXs}) leads to the QSDEs
\begin{equation}
\label{dXs1}
  \rd X^{(s)}
  =
    (A^{(s)}X^{(s)} + b^{(s)}+F_s X^{(12)}) \rd t + B^{(s)}(X^{(s)})\rd W^{(s)},
    \qquad
    s = 1,2.
\end{equation}
Their right-hand sides
involve the vector $X^{(12)}$  from (\ref{X12X21}) which is a bilinear function of $X^{(1)}$, $X^{(2)}$. Hence, a combination of (\ref{iH12X1})--(\ref{dXs1}) with the quantum Ito lemma yields the QSDE
\begin{align}
\nonumber
    \rd X^{(12)}
    & =
    (\rd X^{(1)})\ox X^{(2)}
    +
    X^{(1)}\ox \rd X^{(2)}
    +
    \overbrace{\rd X^{(1)}\ox \rd X^{(2)}}^0\\
\nonumber
    =&
    ((A^{(1)}X^{(1)} + b^{(1)}+i[H_{12}, X^{(1)}]) \rd t + B^{(1)}(X^{(1)})\rd W^{(1)})\ox X^{(2)}\\
\nonumber
    & +
    X^{(1)}\ox
    ((A^{(2)}X^{(2)} + b^{(2)}+i[H_{12}, X^{(2)}]) \rd t + B^{(2)}(X^{(2)})\rd W^{(2)})\\
\nonumber
    =&
    (
    \sP (b^{(2)}\ox I_{n_1}) X^{(1)}
    +
    (b^{(1)}\ox I_{n_2}) X^{(2)}
    +
    (A^{(1)} \op A^{(2)})X^{(12)} +
    i[H_{12}, X^{(12)}]) \rd t\\
\label{dX12}
    & +
    (B^{(1)}(X^{(1)})\rd W^{(1)})\ox X^{(2)}
    +
    X^{(1)}\ox
    (B^{(2)}(X^{(2)})\rd W^{(2)}),
\end{align}
where use is made of the identity $[H_{12}, X^{(1)}]\ox X^{(2)} + X^{(1)}\ox [H_{12}, X^{(2)}] = [H_{12}, X^{(12)}]$. Here, the Ito correction term vanishes:
$$
    \rd X^{(1)}\ox \rd X^{(2)}
    =
    (B^{(1)}(X^{(1)})\ox B^{(2)}(X^{(2)}))
    (\rd W^{(1)}\ox \rd W^{(2)}) = 0,
$$
since   the future-pointing Ito increments $\rd W^{(s)}$ commute with adapted quantum processes (taken at the same or an earlier moment of time),  and
$\rd W^{(1)}\rd W^{(2)\rT} = 0$ in view of the block-diagonal structure  of the quantum Ito matrix $\Omega$ in (\ref{wwww}),  (\ref{Om12}). By using (\ref{X12CCR}), (\ref{H12}), it follows that
\begin{align}
\nonumber
    i[H_{12}, X^{(12)}]
     = &
     -i[X^{(12)}, H_{12}]\\
\nonumber
    = &
    -i[X^{(12)}, X^{(12)\rT}]
    \vec{E}^{(21)}    \\
\nonumber
    = &
    2
    \Big(
    \sum_{j=1}^{n_1}
    (\Theta_j^{(1)} \ox \alpha^{(2)})
    X_j^{(1)}
    +
    \sum_{k=1}^{n_2}
    (\alpha^{(1)}
    \ox
    \Theta_k^{(2)})
    X_k^{(2)}\\
\nonumber
    & +
    \sum_{j=1}^{n_1}
    \sum_{k=1}^{n_2}
    (\Theta_j^{(1)}\ox  \Re \beta_k^{(2)} +  \Re \beta_j^{(1)}\ox \Theta_k^{(2)})
    \Xi_{jk}^{(12)}
    \Big) \vec{E}^{(21)}     \\
\label{XFX}
    = &
    G_1 X^{(1)} + G_2 X^{(2)} + G_{12} X^{(12)}.
\end{align}
where the matrices $G_s \in \mR^{n_1n_2\x n_s}$, $G_{12}\in \mR^{n_1n_2\x n_1n_2}$ are given by
(\ref{G1})--(\ref{Rebet}).
Also, by using the linear dependence of the matrix $B^{(s)}(X^{(s)})$ in (\ref{BXs}) on $X^{(s)}$,   the diffusion terms in (\ref{dX12}) can be represented as
\begin{align}
\nonumber
        (B^{(1)}(X^{(1)})\rd W^{(1)})\ox X^{(2)}
        & =
        2((\Theta^{(1)}\cdot X^{(1)}) M^{(1)\rT} \rd W^{(1)})\ox X^{(2)}\\
\nonumber
        & =
        2((\Theta^{(1)}\cdot X^{(1)}) \ox X^{(2)})  M^{(1)\rT} \rd W^{(1)}\\
\nonumber
        & =
        2 \sum_{\ell=1}^{n_1}
        (\Theta_\ell^{(1)} \ox (X_\ell^{(1)} X^{(2)}))
        M^{(1)\rT} \rd W^{(1)}\\
\nonumber
        & =
        2
        \sum_{k,\ell=1}^{n_1}
        (\theta_{\bullet k \ell}^{(1)}\ox (X_\ell^{(1)} X^{(2)}))
        (M^{(1)\rT} \rd W^{(1)})_k\\
\nonumber
        & =
        2
        \sum_{k,\ell=1}^{n_1}
        (\theta_{\bullet k \ell}^{(1)}\ox I_{n_2}) X_\ell^{(1)} X^{(2)}
        (M^{(1)\rT} \rd W^{(1)})_k\\
\nonumber
        & =
        2
        \sum_{k=1}^{n_1}
        (\theta_{\bullet k \bullet}^{(1)} \ox I_{n_2})
        X^{(12)}
        (M^{(1)\rT} \rd W^{(1)})_k\\
\label{fB1}
        & = \fB_1(X^{(12)})\rd W^{(1)},\\
\nonumber
    X^{(1)}\ox
    (B^{(2)}(X^{(2)})\rd W^{(2)})
    & =
    2
    X^{(1)}\ox
    ((\Theta^{(2)}\cdot X^{(2)})
    M^{(2)\rT }\rd W^{(2)})\\
\nonumber
    & =
    2
    (X^{(1)}\ox
    (\Theta^{(2)}\cdot X^{(2)}))
    M^{(2)\rT }\rd W^{(2)}\\
\nonumber
    & =
    2
    \sum_{\ell=1}^{n_2}
    (X^{(1)}\ox
    (\Theta_\ell^{(2)}X_\ell^{(2)}))
    M^{(2)\rT }\rd W^{(2)}\\
\nonumber
    & =
    2
    \sum_{k=1}^{n_2}
    (X^{(1)}\ox
    (\theta_{\bullet k \bullet}^{(2)}X^{(2)}))
    (M^{(2)\rT }\rd W^{(2)})_k\\
\nonumber
    & =
    2
    \sum_{k=1}^{n_2}
    (I_{n_1}\ox
    \theta_{\bullet k \bullet}^{(2)})
    X^{(12)}
    (M^{(2)\rT }\rd W^{(2)})_k\\
\label{fB2}
    & =
    \fB_2(X^{(12)})\rd W^{(2)},
\end{align}
where $\fB_s(X^{(12)})$ is an $(n_1n_2\x m_s)$-matrix of self-adjoint operators which are linear functions of $X^{(12)}$ given by (\ref{fB1def}), (\ref{fB2def}).
By substituting (\ref{XFX})--(\ref{fB2}) into (\ref{dX12}), this  QSDE takes the form
\begin{align}
\nonumber
    \rd X^{(12)}
    =&
    (
    (\sP (b^{(2)}\ox I_{n_1}) + G_1) X^{(1)}\\
\nonumber
    & +
    (b^{(1)}\ox I_{n_2} + G_2) X^{(2)}\\
\nonumber
    & +
    (A^{(1)} \op A^{(2)} + G_{12})X^{(12)})\rd t \\
\label{dX12new}
    & +
    \sum_{s=1}^2
    \fB_s(X^{(12)})\rd W^{(s)}.
\end{align}
The QSDEs (\ref{dXs1}), (\ref{dX12new}) can now be assembled into the QSDE (\ref{dX1}) for the vector $X$ in (\ref{Xvec}) with (\ref{Acomp})--(\ref{Bcomp}).
\end{proof}

We will now apply the weak-coupling framework of Sections~\ref{sec:asy}, \ref{sec:inv} to the system interconnection described above. To this end, suppose the coupling of the constituent systems   to the external fields in (\ref{LMNs}), (\ref{LMN}) involves a coupling strength parameter $\eps \> 0$ as
\begin{equation}
\label{Mepsclos}
    M_\eps^{(s)} := \eps \sM^{(s)},
    \qquad
    N_\eps^{(s)} := \eps \sN^{(s)},
    \qquad
    s = 1,2,
\end{equation}
where $\sM^{(s)} \in \mR^{m_s\x n_s}$, $\sN^{(s)} \in \mR^{m_s}$  specify the coupling shape, so that (\ref{MNeps}) holds for the system interconnection with
\begin{equation}
\label{MNcomp}
    \sM:=
    \begin{bmatrix}
      \sM^{(1)} & 0 & 0\\
      0 & \sM^{(2)} & 0
    \end{bmatrix},
    \qquad
    \sN
    :=
    \begin{bmatrix}
      \sN^{(1)}\\
      \sN^{(2)}
    \end{bmatrix}    .
\end{equation}
The corresponding $\eps$-dependent matrix $A_\eps$ in (\ref{Acomp}) and the vector $b_\eps$  in (\ref{bcomp}) can be represented as in (\ref{ABeps}), (\ref{beps}):
\begin{equation}
\label{Abepscomp}
    A_\eps
     = A_0 + \eps^2 \sA,
     \qquad
     b_\eps = \eps^2 \sb,
\end{equation}
where
\begin{equation}
\label{sAsb}
    \sA
    =
        \begin{bmatrix}
      \sA^{(1)} & 0 & 0\\
      0 & \sA^{(2)} & 0\\
    \sP (\sb^{(2)}\ox I_{n_1})
    &
    \sb^{(1)}\ox I_{n_2}
    &
    \sA^{(1)} \op \sA^{(2)}
    \end{bmatrix},
    \qquad
    \sb
    =
    \begin{bmatrix}
      \sb^{(1)}\\
      \sb^{(2)}\\
      0
    \end{bmatrix}.
\end{equation}
Here,
\begin{align}
\label{At1s}
    \sA^{(s)}
    & =
        2
        \Theta^{(s)} \diam (\sM^{(s)\rT} J_s\sN^{(s)})
    +
    2
    \sum_{\ell = 1}^{n_s}
    \Theta_\ell^{(s)}
    \sM^{(s)\rT}
    (
        \sM^{(s)}\theta_{\ell\bullet \bullet}^{(s)}
        +
        J_s \sM^{(s)}\Re \beta_{\ell\bullet \bullet}^{(s)}
    ),\\
\label{bepscomp}
     \sb^{(s)}
     & :=
    2
    \sum_{\ell = 1}^{n_s}
    \Theta_\ell^{(s)}
    \sM^{(s)\rT}
    J_s\sM^{(s)}\alpha_{\bullet \ell}^{(s)},
    \qquad
    s = 1,2,
\end{align}
are independent of the energy vectior $E$ (which parameterises the direct coupling through the Hamiltonian  (\ref{HEX})) but
depend on the system-field coupling shape parameters in (\ref{Mepsclos}), (\ref{MNcomp}) and
result from applying (\ref{At1}), (\ref{beps}) to the constituent systems. 
The matrix
\begin{equation}
\label{A0comp}
      A_0
    =
    2 \Theta \diam E
    =
    \begin{bmatrix}
      A_0^{(1)} & 0 & F_1\\
      0 & A_0^{(2)} & F_2\\
    G_1
    &
    G_2
    &
    A_0^{(1)} \op A_0^{(2)} + G_{12}
    \end{bmatrix}
\end{equation}
in (\ref{Abepscomp}) is the limit value of $A$ in  (\ref{Acomp}) when $\eps = 0$, where,
in accordance with (\ref{A0}),
\begin{equation}
\label{A0s}
    A_0^{(s)}
    =
    2 \Theta^{(s)} \diam E^{(s)},
    \qquad
    s = 1,2,
\end{equation}
and $\Theta^{(1)}$, $\Theta^{(2)}$, $\Theta$ are the CCR arrays for the constituent systems and their interconnection.
This limiting case corresponds to the absence of the system-field   coupling, when the systems
in Fig.~\ref{fig:system} are isolated from  the  external fields but are directly  coupled to each other,  with the energy vector $E$ in (\ref{HEX}) being fixed.
Now, in application to the block diagonal matrix $\alpha$ in (\ref{alfcomp}), the condition (\ref{alfpos}) of Theorem~\ref{th:osc} is equivalent to
\begin{equation}
\label{alf12pos}
    \alpha^{(s)} \succ 0 ,
    \qquad
    s = 1,2
\end{equation}
(since the Kronecker product preserves positive definiteness)
and guarantees that for any energy vector $E\in \mR^n$,  the eigenvalues of the matrix $A_0$ in (\ref{A0comp}) are imaginary (possibly zero). The set of the eigenfrequencies  $\omega_1, \ldots, \omega_n \in \mR$  of $A_0$ for the system interconnection does not reduce to $\{\omega_j^{(1)}, \omega_k^{(2)},  \omega_j^{(1)}+ \omega_k^{(2)}:\ j = 1, \ldots, n_1,\ k = 1, \ldots, n_2\}$  in terms of those for the matrices $A_0^{(1)}$, $A_0^{(2)}$. However, it does so when the direct coupling between the constituent systems vanishes (that is, $E^{(12)} = 0$) and hence, so also do the matrices $F_1$, $F_2$, $G_1$, $G_2$, $G_{12}$ in (\ref{F1}), (\ref{F2}), (\ref{G1})--(\ref{G12}), in which case,  (\ref{A0comp}) reduces to
$$
    A_0
    =
    \begin{bmatrix}
      A_0^{(1)} & 0 & 0\\
      0 & A_0^{(2)} & 0\\
    0
    &
    0
    &
    A_0^{(1)} \op A_0^{(2)}
    \end{bmatrix},
$$
specified completely  by the matrices (\ref{A0s}).
For a nonvanishing direct coupling (when $E^{(12)}\ne 0$),  the matrix $E^{(12)}$  can be varied so as to change the eigenfrequencies and the other eigendata (\ref{simroot})--(\ref{AV}) of the matrix $A_0$ in (\ref{A0comp}). Under the condition (\ref{alf12pos}) on the individual structure constants, in the case  (\ref{omdiff}) of pairwise different  eigenfrequencies $\omega_1, \ldots, \omega_n$ of the matrix $A_0$, the asymptotic behaviour of the eigenvalues $\lambda_1(\eps), \ldots, \lambda_n(\eps)$  of the matrix $A_\eps$ in (\ref{Acomp}) is described by (\ref{eigasy}) of Theorem~\ref{th:asy} in terms of the quantities $\nu_1, \ldots, \nu_n$ from (\ref{nu}).  The latter are computed as the diagonal entries of the matrix $V^*\alpha^{-1/2} \sA \sqrt{\alpha} V$, where the matrix $\sA$ is given by  (\ref{sAsb}), (\ref{At1s}),  and
$$
    \sqrt{\alpha}
    =
    \begin{bmatrix}
      \sqrt{\alpha^{(1)}} & 0 & 0\\
      0 & \sqrt{\alpha^{(2)}} & 0\\
      0 & 0 & \sqrt{\alpha^{(1)}}\ox \sqrt{\alpha^{(2)}}
    \end{bmatrix}
$$
inherits the block diagonal structure of (\ref{alfcomp}).  Their real parts $\Re \nu_1, \ldots, \Re\nu_n$ are responsible (through the condition (\ref{nuneg}) of Theorem~\ref{th:stab}) for the stability of the system interconnection for all sufficiently small values of $\eps>0$ and participate in the Lyapunov exponents (\ref{lead}), decoherence time estimates (\ref{tauhat}) and system-field coupling strength thresholds (\ref{epsmax}). Also, the quantities (\ref{nu}) affect the asymptotic behaviour of the invariant state for the system interconnection through the relation (\ref{mulim}) of Theorem~\ref{th:Plim}  (for odd dimensions $n$ in (\ref{nnn}), that is, if at least one of the dimensions $n_1$, $n_2$ is odd), which involves the vector $\sb$ from (\ref{sAsb}), (\ref{bepscomp}). The dependence of these quantities on the direct energy coupling matrix $E^{(12)}$ can be used for a rational choice of this matrix in order to achieve given specifications on the stability margins and decoherence levels for the system interconnection in the weak-coupling framework.

\section{Concluding remarks}\label{sec:conc}

For a class of open quantum stochastic systems,  whose dynamic variables have an algebraic structure,  and the Hamiltonian and coupling operators depend linearly on them, we have discussed a particular way to quantify the dissipative decoherence effects in terms of the exponential decay in the two-point CCRs.  In these  decoherence measures, including the decoherence time constants,  we have employed the first and second moments of the time-ordered operator exponential which relates the system variables at different times. The practical computability of these averaged quantities (at least in the case of external fields in the vacuum state) exploits the quasilinearity of the QSDE which governs the system. The decay rates in the open system in the form of relevant Lyapunov exponents have been considered in comparison with the time scales of the oscillatory modes of the isolated quantum dynamics in the absence of coupling with  the environment. Using matrix spectrum perturbation techniques, we  have obtained asymptotic decoherence estimates in a weak-coupling formulation involving a small coupling strength parameter along with a given coupling shape.
The asymptotic behaviour of the
invariant quantum state of
the system in the weak-coupling limit has also been discussed.
These results have been illustrated for finite-level quantum systems (and their interconnection through a direct energy coupling)  with multichannel external fields and the Pauli matrices as system variables. The findings of the paper can be of use for performance criteria and optimization in the context of quantum information processing where controlled isolation and decoherence issues  play an important role for system interconnections.

\end{document}